\documentclass[11pt]{article}

\usepackage[utf8]{inputenc}
\usepackage{authblk}

\usepackage{amsfonts,amsmath,amssymb,mathtools,amsthm}
\usepackage{graphicx}
\usepackage{stmaryrd}
\usepackage{thmtools}
\DeclareGraphicsExtensions{.eps}
\usepackage{float}

\usepackage{xcolor}
\definecolor{linkcolour}{rgb}{0.15,0.15,0.55}
\definecolor{urlcolour}{rgb}{0.15,0.15,0.55}
\definecolor{citecolour}{rgb}{0.15,0.15,0.55}

\usepackage[linktoc=all,backref=page]{hyperref}
	\hypersetup{
		colorlinks = true,
		linkcolor = linkcolour,
		urlcolor = citecolour,
		citecolor = citecolour,
		linktoc = all,
		hypertexnames = false,
		unicode = true,
		bookmarksnumbered = false,
		pdfmenubar = true,
		pdftoolbar = true}

\usepackage
	[top=2cm,
	bottom=2cm,
	left=2cm,
	right=2cm,
	a4paper]
	{geometry}
		\usepackage[margin=2cm]{caption}
\renewcommand{\theequation}{\thesection.\arabic{equation}}

\newcommand\encadremath[1]{\vbox{\hrule\hbox{\vrule\kern8pt
\vbox{\kern8pt \hbox{$\displaystyle #1$}\kern8pt}
\kern8pt\vrule}\hrule}}
\def\enca#1{\vbox{\hrule\hbox{
\vrule\kern8pt\vbox{\kern8pt \hbox{$\displaystyle #1$}
\kern8pt} \kern8pt\vrule}\hrule}}

\newcommand\framefig[1]{
\begin{figure}[bth]
\hrule\hbox{\vrule\kern8pt
\vbox{\kern8pt \vbox{
\begin{center}
{#1}
\end{center}
}\kern8pt}
\kern8pt\vrule}\hrule
\end{figure}
}

\newcommand\figureframex[3]{
\begin{figure}[bth]
\hrule\hbox{\vrule\kern8pt
\vbox{\kern8pt \vbox{
\begin{center}
{\mbox{\epsfxsize=#1.truecm\epsfbox{#2}}}
\end{center}
\caption{#3}
}\kern8pt}
\kern8pt\vrule}\hrule
\end{figure}
}
\newcommand\figureframey[3]{
\begin{figure}[bth]
\hrule\hbox{\vrule\kern8pt
\vbox{\kern8pt \vbox{
\begin{center}
{\mbox{\epsfysize=#1.truecm\epsfbox{#2}}}
\end{center}
\caption{#3}
}\kern8pt}
\kern8pt\vrule}\hrule
\end{figure}
}
 \makeatother

 \usepackage{mathdots}
 \usepackage{xcolor}
 \usepackage{amsfonts,amsmath,amssymb,mathtools,amsthm}
 \usepackage{graphicx}
 \usepackage{stmaryrd}
 \usepackage{thmtools}
 \DeclareGraphicsExtensions{.eps}
\renewcommand{\subsectionautorefname}{Subsection}
\renewcommand{\sectionautorefname}{Section}

\renewcommand{\thesection}{\arabic{section}}
\renewcommand{\theequation}{\arabic{section}-\arabic{equation}}
\makeatletter
\@addtoreset{equation}{section}
\makeatother
\newtheorem{theorem}{Theorem}[section]

\newtheorem{proposition}{Proposition}[section]
\newtheorem{lemma}{Lemma}[section]
\newtheorem{corollary}{Corollary}[section]

\theoremstyle{definition}
\newtheorem{remark}{Remark}[section]
\newtheorem{definition}{Definition}[section]

\def\br{\begin{remark}\rm\small}
\def\er{\end{remark}}
\def\bt{\begin{theorem}}
\def\et{\end{theorem}}
\def\bd{\begin{definition}}
\def\ed{\end{definition}}
\def\bp{\begin{proposition}}
\def\ep{\end{proposition}}
\def\bl{\begin{lemma}}
\def\el{\end{lemma}}
\def\bc{\begin{corollary}}
\def\ec{\end{corollary}}
\def\beaq{\begin{eqnarray}}
\def\eeaq{\end{eqnarray}}

\theoremstyle{definition}

\newcommand{\be}{\begin{equation}}
\newcommand{\ee}{\end{equation}}
\newcommand{\beq}{\begin{equation}}
\newcommand{\eeq}{\end{equation}}
\newcommand{\bea}{\begin{eqnarray}}
\newcommand{\eea}{\end{eqnarray}}
\newcommand{\beqq}{\begin{equation*}}
\newcommand{\eeqq}{\end{equation*}}
\newcommand{\beaa}{\begin{eqnarray*}}
\newcommand{\eeaa}{\end{eqnarray*}}

\newcommand{\diag}{{\operatorname{diag}}}

\newcommand{\td}{\tilde}

\newcommand\blfootnote[1]{%
  \begingroup
  \renewcommand\thefootnote{}\footnote{#1}%
  \addtocounter{footnote}{-1}%
  \endgroup
}

\newcommand{\Res}{\mathop{\,\rm Res\,}}
\title{\bf{Explicit Hamiltonian structure of the Flaschka–Newell Painlev\'{e} II hierarchy via symmetry reduction of the Painlev\'{e} IV hierarchy
}}

\author{$_{1}$Olivier Marchal\footnote{Universit\'{e} Jean Monnet Saint-\'{E}tienne, CNRS, Institut Camille Jordan UMR 5208, Institut Universitaire de France, Les Forges 2, 20 Rue du Dr Annino, 42000 Saint-\'{E}tienne, France.}\,\,,
$_{2}$Mohamad Alameddine\footnote{Section de math\'{e}matiques, Universit\'{e} de Gen\`{e}ve, Rue du conseil-g\'{e}n\'{e}ral 7-9, 1205 Gen\`{e}ve Switzerland.}\,\,,
$_{3}$Sergej Laub\footnote{Universit\'{e} Jean Monnet Saint-\'{E}tienne, CNRS, Institut Camille Jordan UMR 5208, Les Forges 2, 20 Rue du Dr Annino, 42000 Saint-\'{E}tienne, France.}
}

\date{\vspace{-5ex}}

\begin{document}

\maketitle

\textbf{Abstract}: 
We study the Hamiltonian structure of the Flaschka–Newell Painlev\'{e} II hierarchy via symmetry reduction of the associated space of meromorphic connections. Building on the realization of this hierarchy as a reduction of the Painlev\'{e} IV hierarchy via a $\mathbb{Z}_2$-symmetry, we construct a set of Darboux coordinates adapted to the involution. After a suitable change of trivialization, the symmetry acts diagonally in these coordinates, allowing the fixed-point locus to be explicitly described as a symplectic submanifold. This enables us to derive the reduced Hamiltonians after symmetry, thereby obtaining explicit expressions for the Hamiltonians and the Lax matrices of the Flaschka–Newell Painlev\'{e} II hierarchy. This strategy also illustrates how symmetry-adapted canonical Darboux coordinates enable explicit reductions of isomonodromic systems at the level of their underlying symplectic geometry.

\blfootnote{\textit{Email Addresses:}
$_{1}$\textsf{olivier.marchal@univ-st-etienne.fr}
$_{2}$\textsf{mohamad.alameddine@unige.ch}
$_{3}$\textsf{sergej.laub@etu.unige.ch}
}

\tableofcontents

\section{Introduction}\label{SecIntro}

The theory of isomonodromic deformations (i.e., monodromy-preserving deformations) is part of differential algebraic geometry and associates a nonlinear differential equation with a meromorphic linear differential system. The six Painlev\'{e} equations discovered through the classification of a class of second order differential equations are the simplest manifestations of this perspective \cite{Painleve,Picard,Fuchs,Garnier,Garnier1912,schlesinger1912klasse}. While the main idea was due to B. Riemann in 1857, the theory has seen recent developments bridging different topics such as moduli spaces of connections, representation theory, symplectic geometry, integrable systems, and analysis of nonlinear differential equations. The regular (Fuchsian) case on the Riemann sphere is well understood, leading to the Schlesinger equations \cite{schlesinger1912klasse,Fuchs}. However, the irregular case, consisting of deformations of meromorphic connections defined on the trivial vector bundle of arbitrary rank over the Riemann sphere with an arbitrary pole structure, still presents an active area of research. This requires a notion of generalized monodromy data, which is obtained by considering Stokes data around the irregular singularities in addition to the usual monodromies. The space of deformations itself needs to be extended beyond the space of complex structures of the Riemann surface by including the irregular type of the monodromy \cite{BertolaMo2005,BoalchThesis,Boalch2001,Boalch2014,JimboMiwa,JimboMiwaUeno}. By considering admissible deformations, one obtains a symplectic fibration over the base of deformation parameters. We adopt this viewpoint throughout the present work. If one successfully trivializes this fibration, i.e., defines canonical coordinates on the fibres, the evolutions of the chosen coordinates (the deformation equations) associated with vector fields provide time-dependent Hamiltonians. These are the non-autonomous Hamiltonians of the Painlev\'{e} equations, as derived by J. Malmquist \cite{Malmquist1922} for the Painlev\'{e} cases. This geometric picture, whose origin lies in the works of the Japanese school \cite{JimboMiwa,JimboMiwaUeno}, has been extended by many authors and provides one of the fundamental links between algebraic geometry and integrable systems.   

\medskip

The problem of obtaining Hamiltonian formulations of the isomonodromy equations has been actively studied from many perspectives, revealing rich structures and opening numerous research directions across several areas of mathematics. For instance, one of the most famous examples is the JMMS \cite{JMMS} equations. These equations were the starting point for J. Harnad \cite{Harnad_1994} who established a duality between meromorphic differential systems defined on different rank bundles admitting different polar structures, obtaining two realizations of the JMMS Hamiltonian structure related by a symmetry. This Harnad duality was generalized by P. Boalch \cite{Boalch2012} (along with the JMMS Hamiltonian equations) to a continuous $\text{SL}_2(\mathbb{C})$ action initiating the systematic classification of isomonodromic systems. This provided evidence for the existence of various isomorphisms between different moduli spaces of connections generalizing Okamoto's observation  \cite{Okamoto1986,zbMATH00929752} that links the Painlev\'{e} equations to affine Dynkin diagrams. Different Lax representations leading to the same Painlev\'{e} equations have been established for several examples \cite{MazzoccoP6}. The current understanding is that moduli spaces of meromorphic connections provide a geometric framework for isomonodromic deformations and Painlev\'{e}-type equations arise from different classes of connections. Understanding the relations between these classes is a fundamental problem in the geometry of isomonodromic systems. Recently, two of the authors \cite{alameddine2026symmetryreductionpainleveiv} demonstrated a different kind of morphism between isomonodromic systems. More precisely, they constructed an embedding via a $\mathbb{Z}_2$-reduction, which yields the Flaschka–Newell Painlev\'{e} II (FN PII) hierarchy as a sub-hierarchy of the Painlev\'{e} IV (PIV) hierarchy.

\medskip

The FN PII hierarchy is the hierarchy whose first member is the FN Lax pair giving the PII equation discovered in \cite{Flaschka1980}. The higher members of the hierarchy are nonlinear differential equations of order $2d$ obtained as a reduction of the mKdV hierarchy, so that its origin is an isospectral integrable system rather than an isomonodromic one \cite{articleCJM,mazzocco2007hamiltonian}. The Lax matrix of the mKdV  hierarchy itself is of particular interest for this work, it admits a pole structure given by one irregular pole at infinity of order $r_\infty=2d+2$ (here $d\in\mathbb{N}^*$) and another regular finite pole $r_0=1$ at zero. Note that this FN PII hierarchy is different from the Jimbo-Miwa-Ueno PII hierarchy (JMU PII), described in \cite{JimboMiwa} (explicitly characterized in \cite{marchal2024hamiltonianrepresentationisomonodromicdeformations}) that corresponds to only a single irregular pole at infinity. Indeed, even if the first member of the JMU PII hierarchy and the FN PII hierarchy provide the same Painlev\'{e} II equation, the corresponding Lax pairs differ and are related by a complicated Laplace transform. Moreover, even the symplectic structures differ for higher members of the hierarchies and thus one needs to be precise in the terminology associated with the Painlev\'{e} II hierarchy. In this article, we shall only consider the FN Painlev\'{e} II hierarchy.  As mentioned above, the space of connections for the FN PII hierarchy corresponds to a subset of connections with a regular finite pole and an irregular untwisted pole at infinity of even degree. Such connections correspond to the PIV hierarchy studied in \cite{marchal2024hamiltonianrepresentationisomonodromicdeformations,MarchalAlameddineIsospectralIsomono2023}. Thus, the starting point of \cite{alameddine2026symmetryreductionpainleveiv} was the space of such connections ultimately leading to a symmetry reduction providing the FN PII space of connections. The key point is that the connection matrices of the FN PII hierarchy admit an additional symmetry seen as a grading on the loop algebra on which the connection is defined and decomposed into even and odd entries (with a relation between the entries). The moduli space itself admits an involution imposed by an involution on the horizontal sections of the meromorphic differential system. This idea motivated a $\mathbb{Z}_2$-symmetry imposed on the space of connections of the PIV hierarchy, and provided a reduction of the symplectic fibration, of the base of deformation parameters, and the time-dependent Hamiltonians. This enables us to identify the space of connections of the FN PII hierarchy with an invariant subspace of the PIV hierarchy. While this observation has been made explicit for the first cases of the PIV hierarchy in \cite{alameddine2026symmetryreductionpainleveiv}, the authors left open the question of finding trivializations of the fibration that are compatible with the symmetry reduction. In other words, is there a set of Darboux coordinates $(\mathbf{Q},\mathbf{P})$ for the PIV hierarchy for which the symmetry is more transparent and allows one to obtain the Hamiltonian structure explicitly after symmetry reduction? Although the fixed-point submanifold is known, and the symmetry is explicitly given, canonical coordinates rarely restrict to canonical coordinates on the symmetry-reduced space. Indeed, the usual Darboux coordinates discussed in \cite{alameddine2026symmetryreductionpainleveiv} mix the invariant and anti-invariant evolution directions under the involution preventing a simple reduction for the Hamiltonian structure.

\medskip

This paper answers this question in the affirmative. One of the main results of the paper is to provide a set of Darboux coordinates that we refer to as the symmetric geometric set of Darboux coordinates trivializing the symplectic structure and through which the $\mathbb{Z}_2$-reduction becomes straightforward. This result allows us to identify the Lax matrices with those proposed in \cite{mazzocco2007hamiltonian}, and through the resolution of the compatibility equations, it enables us to obtain the full Hamiltonian evolution for the FN PII hierarchy in an explicit way. This question was previously left open in \cite{mazzocco2007hamiltonian} and remained unresolved in \cite{alameddine2026symmetryreductionpainleveiv}. In order to do so, we will use another representation of the $\mathbb{Z}_2$ group action on the Lax matrices. In fact, we will define a representation automorphism $S$ that induces a different normalization for the isomonodromic system. The matrix will appear as a conjugation action on the horizontal sections and this decomposes the action into a gauge transformation (left multiplication) and a change of basis of fundamental solutions (right multiplication). However, this automorphism will be crucial when imposing the symmetry in which the Pauli matrix $\sigma_1$ imposed previously will become $\sigma_3$ which is a diagonal matrix (thus, acting diagonally rather than swapping the eigenspaces). In other words, we will diagonalize the symmetry reduction itself. The advantage of this change is that the entries of the Lax matrices in this gauge have definite parity, with no mixing between them, thus emphasizing the parity, and no swapping occurs between entries when imposing the symmetry. This allows the symmetry to be imposed in a transparent and effective way. We believe that this idea is more general than the case studied in this article and we elaborate on this in more details in the final section. Furthermore, the construction produces explicit expressions for the Lax representation, including the deformation matrices, in canonical Darboux coordinates adapted to the symmetry, thereby completing the full analysis of the problem.

\medskip

From a broader perspective, this work illustrates that symmetry reductions of isomonodromic systems are best performed at the level of the underlying symplectic geometry rather than at the level of differential equations. Once the symmetry is made compatible with the normalization and the trivialization of the symplectic structure, an explicit Hamiltonian description becomes accessible on the reduced space. 

\medskip

The article is organized as follows. \autoref{SecP4hierarchy} is devoted to the study of the even PIV hierarchy: we will establish in this section the new normalization of the isomonodromy problem and introduce the new set of canonical Darboux coordinates $(\mathbf{Q_\infty}, \mathbf{P_\infty})$ adapted to the upcoming symmetry (\autoref{DefNewcoor2}). This is one of the main results of the paper answering the question left open in \cite{alameddine2026symmetryreductionpainleveiv}. In addition to this definition, we will obtain the Hamiltonian structure of this new set of Darboux coordinates $(\mathbf{Q_\infty}, \mathbf{P_\infty})$ (\autoref{TheoLaxMatricesQinfty} and \autoref{HamTheorem2}). This Hamiltonian structure will be given in terms of the coordinates $(\mathbf{Q_\infty}, \mathbf{P_\infty})$ after solving the compatibility equations in the corresponding oper gauge with the corresponding oper Darboux coordinates $(\mathbf{q},\mathbf{p})$ following the strategy of  \cite{marchal2024hamiltonianrepresentationisomonodromicdeformations} but with a different normalization of the Lax matrix (\autoref{HamTheorem}). We will then briefly review the setup of \cite{mazzocco2007hamiltonian} giving the Lax pair of the FN PII hierarchy and canonical Darboux coordinates on the fibration in \autoref{SecFNP2} together with its link to the reduction of the mKdV hierarchy. Ultimately, in \autoref{secSymmetry} we introduce the symmetry (\autoref{DefSymmetry}) that reduces the PIV hierarchy to the FN PII hierarchy. Since the Darboux coordinates $(\mathbf{Q_\infty}, \mathbf{P_\infty})$ are adapted to the symmetry, we obtain explicit formulas for the Lax matrices and the Hamiltonians of the FN PII hierarchy (\autoref{TheoReducedLaxHamSym}) and we can provide an explicit identification of our Darboux coordinates and Lax matrices with those of \cite{mazzocco2007hamiltonian} in \autoref{PropIdentification}. We illustrate the main results in \autoref{secExamples} on the first two elements of the hierarchy and devote \autoref{secConclusion} for the concluding remarks and the discussion of the applicability of our methods in a more general context.

\subsection*{Statements and declaration}
The authors declare that they have no conflict of interest. Data sharing is not applicable to this article as no datasets were generated or analyzed during the current study.

\subsection*{Acknowledgments}
M. A. was partially funded by the Swiss National Science Foundation (SNSF) under the grant agreement PZ00P2 223297 and by the DFG -- project ID 551478549. O. M. was supported by the fundamental junior IUF grant G752IUFMAR.

\section{The even Painlev\'{e} IV hierarchy}\label{SecP4hierarchy}

The main goal of this section is to introduce the even PIV hierarchy, a sub-hierarchy of the PIV hierarchy whose infinite pole order is taken to be even. This hierarchy is thus naturally contained in the generic isomonodromic reformulation of \cite{MarchalAlameddineIsospectralIsomono2023,marchal2024hamiltonianrepresentationisomonodromicdeformations}. However, we will introduce a new normalization of the Lax matrix that yields a different set of Darboux coordinates. In particular, this choice of normalization and the associated Darboux coordinates is made so that they are adapted to the upcoming symmetry reduction developed in \autoref{secSymmetry}.

\subsection{Isomonodromic formulation and normalization of Lax matrices}
The even PIV hierarchy governs the isomonodromic deformations of rank 2 generic connections defined on the trivial vector bundle over the Riemann sphere with an effective divisor having one regular pole ($n=1$ and $r_0=1$ in the notation of \cite{marchal2024hamiltonianrepresentationisomonodromicdeformations}) located at $X_0$, and one irregular pole of even order $r_\infty=2d+2\geq 4$ located at $ \{ \infty \}$. The space of connections is defined as follows:

\begin{definition}[Space of connections of the even PIV hierarchy]\label{DefP4hierarchy} Considering a meromorphic connection defined on the trivial rank-two vector bundle over the Riemann sphere, we define the following Poisson space:
\bea
F_{d} &\coloneqq& \Big\{\hat{L}(\lambda) = \sum_{k=1}^{2d+1} \hat{L}^{[\infty,k]} \lambda^{k-1} + \frac{\hat{L}^{[X_0,0]}}{\lambda-X_0}  \text{ with }\,\, \left\{\hat{L}^{[X_0,0]}\right\}\cup \left\{\hat{L}^{[\infty,k]}\right\}_{k\in \llbracket 1,2d+1\rrbracket} \in \left(\mathfrak{sl}_2(\mathbb{C})\right)^{2d+2}\cr
&& \text{ and }  \left(\hat{L}^{[\infty,2d+1]}, \hat{L}^{[X_0,0]}\right)  \text{ have distinct eigenvalues}
\Big\}/\text{SL}_2(\mathbb{C}),
\eea
where the $\text{SL}_2(\mathbb{C})$ action is the conjugation action of the reductive group. After trivializing the bundle, one obtains the meromorphic differential system
\beq \label{eqhatL}\partial_\lambda \hat{\Psi}(\lambda)= \hat{L}(\lambda)\hat{\Psi}(\lambda)\eeq
defined in the usual complex-analytic charts of the Riemann sphere with $\hat{\Psi}(\lambda)$ the matrix of horizontal sections solving the system. 
\end{definition}

From the theory of meromorphic differential systems, it is always possible to (locally) transform a generic (leading parts at each pole having distinct eigenvalues) connection on the formal punctured disk to a diagonal normal form. This is the classical Hukuhara-Levelt-Turrittin theorem \cite{Birkhoff,Wasowbook}. 

\begin{proposition}[Local diagonalization and irregular times] There exists a local analytic gauge transformation, giving the Birkhoff factorization or formal Turrittin-Levelt form in which the singular part of the connection matrix is diagonalizable:
\beq \hat{L}(\lambda) \overset{\lambda\to \infty}{\sim} \diag\left(-\sum_{k=0}^{2d+1} t_{\infty,k}\lambda^{k-1},\sum_{k=0}^{2d+1} t_{\infty,k}\lambda^{k-1} \right)+ O(\lambda^{-2}).\eeq
Similarly, at $\lambda=X_0$, one has
\beq \hat{L}^{[X_0,0]}\overset{\lambda\to X_0}{\sim} \diag(-t_{X_0,0},t_{X_0,0}).\eeq
The parameters $\mathbf{t}\coloneqq(t_{\infty,k})_{1\leq k\leq 2d+1}$ appearing in the formal normal forms are called the irregular times: they form the base manifold of parameters for isomonodromic deformations together with the location of the pole $X_0$. On the contrary, the parameters $(t_{\infty,0},t_{X_0,0})$ are called the monodromies at the poles, they constitute the exponent of formal monodromy of the connection. They are treated as constants while performing isomonodromic deformations.
\end{proposition}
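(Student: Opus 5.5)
The plan is to build the normalizing gauge transformation at $\lambda=\infty$ order by order in $\lambda^{-1}$, following the classical Hukuhara--Levelt--Turrittin/Birkhoff scheme, and then to read off $(t_{\infty,k})$ as the coefficients of the resulting diagonal part. The pole at $X_0$ is handled by the same scheme, which there reduces to a single diagonalization step.

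First, use the $\text{SL}_2(\mathbb{C})$ conjugation freedom in \autoref{DefP4hierarchy} to diagonalize the leading coefficient $\hat{L}^{[\infty,2d+1]}$. Since it lies in $\mathfrak{sl}_2(\mathbb{C})$ and has distinct eigenvalues, it is conjugate to $\diag(-t_{\infty,2d+1},t_{\infty,2d+1})$ with $t_{\infty,2d+1}\neq 0$. Set $z=1/\lambda$. Near $z=0$ the system becomes $\partial_z\hat\Psi=-z^{-2}\hat L(1/z)\hat\Psi$, which has a pole of order $2d+2$ at $z=0$ with regular semisimple leading term. We then look for a formal gauge
\[
G(\lambda)=I+\sum_{j\geq1}G_j\lambda^{-j},
\]
with off-diagonal $G_j$, such that
\[
G^{-1}\hat L G-G^{-1}\partial_\lambda G
\]
is diagonal up to the order needed.

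Second, expand the gauge equation order by order. At each order $\lambda^{2d-j}$ the unknown $G_j$ appears only through the commutator $[\hat{L}^{[\infty,2d+1]},G_j]$. The operator $\mathrm{ad}_{\hat{L}^{[\infty,2d+1]}}$ is invertible on off-diagonal matrices, with eigenvalues $\pm2t_{\infty,2d+1}\neq0$. We can therefore solve uniquely for the off-diagonal $G_j$ so that the off-diagonal part at that order vanishes. The diagonal part at that order is whatever remains; it involves earlier $G_i$ and the $\hat L^{[\infty,k]}$, and we define it to be $\mp t_{\infty,k}$ accordingly. The derivative term $-G^{-1}\partial_\lambda G$ is $O(\lambda^{-2})$ and so first enters at orders $\lambda^{-2}$ and below. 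Running this for $j=0,\dots,2d+1$ fixes all coefficients of $\lambda^{2d},\dots,\lambda^{-1}$; everything left over is $O(\lambda^{-2})$. The diagonal parts are traceless because $\hat L$ is, which gives the form $\diag(-a,a)$. The $\lambda^{-1}$ coefficient is $t_{\infty,0}$, the formal monodromy exponent, which equals (minus) the eigenvalue data of the total residue. By the residue theorem it is tied to $t_{X_0,0}$ through $\hat L^{[X_0,0]}$.

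Third, at $X_0$ the pole is simple. Since $\hat L^{[X_0,0]}$ has distinct eigenvalues $\pm t_{X_0,0}$, a constant conjugation $G_{X_0}$ diagonalizes it, and local holomorphic corrections $I+O(\lambda-X_0)$ can be built in the same way. The only caveat is the possible resonance $2t_{X_0,0}\in\mathbb{Z}$, which would not matter for the statement about the leading term anyway. Finally, observe that the isomonodromic setting leaves the formal monodromies $t_{\infty,0}$ and $t_{X_0,0}$ invariant, while $(t_{\infty,k})_{k\geq1}$ and $X_0$ are free parameters. This justifies the terminology.

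The main obstacle is the bookkeeping of the order-by-order recursion. One must check that the derivative term and the lower coefficients never force a non-diagonal contribution that cannot be removed before order $\lambda^{-2}$. It is also the only place where the formal (possibly divergent) nature of $G$ matters. I would first verify the recursion explicitly for $d=1$ and then argue inductively using invertibility of $\mathrm{ad}$.
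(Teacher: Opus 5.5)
Your order-by-order construction is sound. It is the classical Hukuhara--Levelt--Turrittin/Birkhoff argument, which the paper only cites and does not prove. After conjugating $\hat L^{[\infty,2d+1]}$ to $\diag(-t_{\infty,2d+1},t_{\infty,2d+1})$, invertibility of $\operatorname{ad}_{\hat L^{[\infty,2d+1]}}$ on off-diagonal matrices removes the off-diagonal part at every order. The derivative term $G^{-1}\partial_\lambda G=O(\lambda^{-2})$ does not reach the orders $\lambda^{2d},\dots,\lambda^{-1}$.

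What you call the main obstacle is therefore not one, and divergence plays no role. Truncate $G$ after $G_{2d+1}$. This gives a polynomial in $\lambda^{-1}$, hence a genuinely analytic gauge near $\infty$. The transformed matrix is then already diagonal up to $O(\lambda^{-2})$, which is all the statement asks for.

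There is, however, one false claim you should delete. $t_{\infty,0}$ is \emph{not} ``(minus) the eigenvalue data of the total residue'', and it is \emph{not} tied to $t_{X_0,0}$ by the residue theorem. Your own recursion shows this. The $G_j$ are off-diagonal and the $D_m$ diagonal, so the diagonal part of each $G_jD_m$ vanishes, and the order-$\lambda^{-1}$ equation gives
\[
\diag(-t_{\infty,0},t_{\infty,0})=\text{diagonal part of }\Big(\hat L^{[X_0,0]}+\sum_{j=1}^{2d}\hat L^{[\infty,j]}G_j\Big)
\]
in the frame where $\hat L^{[\infty,2d+1]}$ is diagonal. This involves the diagonal entries of $\hat L^{[X_0,0]}$, not its eigenvalues $\pm t_{X_0,0}$, plus corrections bilinear in off-diagonal data. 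The residue theorem only constrains traces, and these vanish identically here.

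In fact $t_{\infty,0}$ and $t_{X_0,0}$ are independent. Fixing $X_0$, $t_{\infty,1},\dots,t_{\infty,2d+1}$ and both exponents is exactly what cuts the $(6d+3)$-dimensional $F_d$ down to the $4d$-dimensional symplectic fibres. Moreover, in \autoref{TheoReducedLaxHamSym} the symmetry forces $t_{\infty,0}=0$ while $t_{0,0}=\alpha_d$ remains a free parameter; for $d=1$ it is the constant in PII. Since your existence argument never uses this claim, removing it leaves the proof intact.
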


The geometric picture is then captured by the definition of a symplectic fibration over the base $\mathbb{B}=\{X_0,t_{\infty,1},\dots,t_{\infty,2d+1}\}$ of deformation parameters, each vector field of the tangent space to the base will provide a time-dependent Hamiltonian function on the fibration. This fibration is the framework for the Hamiltonian and symplectic nature of the isomonodromy deformation equations. It is a symplectic variety of dimension $2 (r_\infty+1-3)= 4d$. Therefore, in our setting, a general isomonodromic deformation corresponds to a vector field 
\beq \label{GeneralDef}\mathcal{L}_{\boldsymbol{\hat{\alpha}}}\coloneqq\sum_{k=1}^{2d+1}\hat{\alpha}_{\infty,k}\partial_{t_{\infty,k}} + \hat{\alpha}_{X_0}\partial_{X_0},\eeq
where $\hat{\boldsymbol{\alpha}} \in \mathbb{C}^{2d+2}$ is a complex vector of the tangent space. This gives rise to a space of deformations of dimension $2d+2$. To a general deformation operator \eqref{GeneralDef}, we associate an auxiliary matrix $\hat{A}_{\hat{\boldsymbol{\alpha}}}(\lambda)$ seen also as an auxiliary connection over the base of times:
\beq \label{AuxEq}\mathcal{L}_{\hat{\boldsymbol{\alpha}}}[\hat{\Psi}(\lambda)]=\hat{A}_{\hat{\boldsymbol{\alpha}}}(\lambda)\hat{\Psi}(\lambda),\eeq
which is a meromorphic function in $\lambda$ with poles bounded by those of $\hat{L}(\lambda)$. The Hamiltonian structure obtained from the compatibility equations of \eqref{eqhatL} and \eqref{AuxEq} is given by the compatibility equations:
\beq \label{compatibility} \partial_\lambda \hat{A}_{\hat{\boldsymbol{\alpha}}}(\lambda)- \mathcal{L}_{\hat{\boldsymbol{\alpha}}}\left[\hat{L}(\lambda)\right] + \left[\hat{A}_{\hat{\boldsymbol{\alpha}}}(\lambda),\hat{L}(\lambda) \right]=0.
\eeq

It admits various symmetries, one of which is the set of M\"obius transformations of $\lambda$ \cite{Boalch2012,marchal2024hamiltonianrepresentationisomonodromicdeformations}. In practice, since we have fixed a pole at infinity, there remain two degrees of freedom from the action of M\"obius transformations that allow us to choose two time parameters. In \cite{marchal2024hamiltonianrepresentationisomonodromicdeformations}, the canonical choice taken was $(t_{\infty,2d+1},t_{\infty,2d})=(1,0)$ and simplified versions of the Hamiltonians and Lax matrices were derived under this canonical choice. However, since we want to identify with the FN PII hierarchy, we will make another choice that is more compatible with the upcoming symmetry. 
\begin{definition}[Normalization from M\"obius transformations]\label{NormalizationMobius} We use the M\"obius transformations of $\lambda$ to set:
\begin{align}
    X_0=0, \qquad \qquad  \text{ and }\qquad \qquad  t_{\infty,2d+1}=-4^d.
\end{align} 
\end{definition}

\begin{remark}The value of $t_{\infty,2d+1}$ is chosen to match with \cite{mazzocco2007hamiltonian} but most formulas derived in this article will be given for an arbitrary value of this parameter. However, we will never consider variations with respect to $t_{\infty, 2d+1}$ and consider it fixed.
\end{remark}

In particular, note that this reduces the dimension of the deformation space to $2d$.
\beq \label{GeneralDef2}\mathcal{L}_{\boldsymbol{\alpha}}\coloneqq\sum_{k=1}^{2d}\alpha_{\infty,k}\partial_{t_{\infty,k}} \eeq
the corresponding general deformation operator with $\boldsymbol{\alpha}\coloneqq\left(\alpha_{\infty,1},\dots,\alpha_{\infty,2d}\right)\in \mathbb{C}^{2d}$ the corresponding vector in the reduced tangent space. It induces the auxiliary matrix $A_{\boldsymbol{\alpha}}(\lambda)$ meromorphic in $\lambda$ defined by
\beq \label{AuxEq2}\mathcal{L}_{\boldsymbol{\alpha}}[\hat{\Psi}(\lambda)]=\hat{A}_{\boldsymbol{\alpha}}(\lambda)\hat{\Psi}(\lambda).
\eeq

Another choice to be made is offered by the $\text{SL}_2(\mathbb{C})$ conjugation action. In fact, in order to obtain explicit formulas for the Lax matrices, one needs to choose a specific element in the orbit and this choice provides natural Darboux coordinates. Equivalently, the choice of the element in the orbit can be seen as fixing a normalization on the Lax matrix at infinity (or at one of its pole). The choice that we shall use in this paper is given by the following definition. We stress that it is not the same as the one used in the literature (usually taken to be that the leading order at infinity is diagonal) and in particular differs from the one used in \cite{marchal2024hamiltonianrepresentationisomonodromicdeformations}.

\begin{definition}[Choices of normalization of the Lax matrix]\label{DefGauges} We denote $\td{\Psi}(\lambda)$, $\td{L}(\lambda)$ and $\td{A}_{\boldsymbol{\alpha}}(\lambda)$ the representatives of $F_d$ such that
\beq \td{L}(\lambda)\overset{\lambda\to \infty}{=}\begin{pmatrix}-t_{\infty,2d+1}&0\\0&t_{\infty, 2d+1}\end{pmatrix}\lambda^{2d} +\begin{pmatrix} X& \omega\\ X&X\end{pmatrix}\lambda^{2d-1} +O(\lambda^{2d-2}).\eeq
The parameter $\omega$ is arbitrary and corresponds to the remaining degree of freedom of the $\text{SL}_2(\mathbb{C})$ conjugation action by diagonal matrices.
We will denote $\check{\Psi}(\lambda)$, $\check{L}(\lambda)$ and $\check{A}_{\boldsymbol{\alpha}}(\lambda)$ the matrices corresponding to the conjugation action
\beq \check{\Psi}(\lambda)=\frac{1}{\sqrt{2}}\begin{pmatrix} 1&-1\\ 1&1\end{pmatrix} \td{\Psi}(\lambda)\frac{1}{\sqrt{2}}\begin{pmatrix} 1&1\\ -1&1\end{pmatrix} \coloneqq S\td{\Psi}(\lambda)S^{-1} \,\,\text{ with }\, S\coloneqq\frac{1}{\sqrt{2}}\begin{pmatrix}
    1&-1\\ 1&1
\end{pmatrix}
\eeq 
and corresponding to the normalization of the Lax matrix: 
\beq \label{NormalizationcheckL}\check{L}(\lambda)\overset{\lambda\to \infty}{=}\begin{pmatrix}0&-t_{\infty,2d+1}\\-t_{\infty, 2d+1}&0\end{pmatrix}\lambda^{2d} +O(\lambda^{2d-1}).
\eeq
\end{definition}

The new normalization $\check{\Psi}(\lambda)$ will play a central role for the identification with the results of \cite{mazzocco2007hamiltonian}. It is the gauge that is naturally compatible with the upcoming symmetry of \autoref{secSymmetry}. In this gauge, one has:  
\begin{align}
 \check{L}(\lambda)=S \td{L}(\lambda) S^{-1}\,\, \qquad 
 \text{and} \qquad 
 \check{A}_{\boldsymbol{\alpha}}(\lambda)=S \td{A}_{\boldsymbol{\alpha}}(\lambda) S^{-1}.
\end{align}
Equivalently, in terms of the entries,
\begin{align}\label{checkLtdL}
 \check{L}_{1,1}(\lambda)&=-\frac{1}{2}\left(\td{L}_{1,2}(\lambda)+\td{L}_{2,1}(\lambda)\right), \qquad \qquad
\check{L}_{1,2}(\lambda)=\td{L}_{1,1}(\lambda)+\frac{1}{2}\left(\td{L}_{1,2}(\lambda)-\td{L}_{2,1}(\lambda)\right),\cr
\check{L}_{2,1}(\lambda)&=\td{L}_{1,1}(\lambda)-\frac{1}{2}\left(\td{L}_{1,2}(\lambda)-\td{L}_{2,1}(\lambda)\right), \qquad 
\qquad \check{L}_{2,2}(\lambda)=-\check{L}_{1,1}(\lambda),
\end{align}
or
\begin{align}\label{checkLtdL2}
 \td{L}_{1,1}(\lambda)&=\frac{1}{2}\left(\check{L}_{1,2}(\lambda)+\check{L}_{2,1}(\lambda)\right), \qquad \qquad
\td{L}_{1,2}(\lambda)=-\check{L}_{1,1}(\lambda)+\frac{1}{2}\left(\check{L}_{1,2}(\lambda)-\check{L}_{2,1}(\lambda)\right),\cr
\td{L}_{2,1}(\lambda)&=-\check{L}_{1,1}(\lambda)-\frac{1}{2}\left(\check{L}_{1,2}(\lambda)-\check{L}_{2,1}(\lambda)\right), \qquad 
\qquad \td{L}_{2,2}(\lambda)=-\td{L}_{1,1}(\lambda).
\end{align}

Note that if $\td{L}(\lambda)$ is the usual normalization used in the literature, $\check{L}(\lambda)$ just corresponds to another choice of representative in the orbit defining $F_d$. In particular, the underlying symplectic structure is independent of the choice of representative.

\begin{remark}
    We will see that the matrix $S$ is a representation automorphism of the imposed $\mathbb{Z}_2$-symmetry in \autoref{secSymmetry}. Apart from this observation, we use it here as an orthogonal conjugation that rotates the bases of the loop algebra on which the Lax matrices are defined.   
\end{remark}

\subsection{Parameterization using oper and geometric Darboux coordinates}\label{Sec22}
In this section, we discuss some choices of Darboux coordinates that were used previously to obtain explicit parameterizations of the Lax matrices and the underlying Hamiltonian structure reached through the compatibility of the system. These coordinates will be useful in adapting our formalism to the symmetry reduction in the next subsection. 

\medskip

The first set of canonical Darboux coordinates used to solve the compatibility equations is the set of “oper Darboux coordinates" denoted $(\mathbf{u},\mathbf{v})\coloneqq(u_1,\dots,u_{2d},v_1,\dots,v_{2d})$ in this article.

\begin{definition}[Oper Darboux coordinates]We define the set of oper Darboux coordinates $(\mathbf{u},\mathbf{v})\coloneqq(u_1,\dots,u_{2d},v_1,\dots,v_{2d})$ by
  \beq\label{DeftdL12}\td{L}_{1,2}(\lambda)=\frac{\omega \underset{i=1}{\overset{2d}{\prod}}(\lambda-u_i)}{\lambda} \qquad \text{ and } \qquad  
v_i=\td{L}_{1,1}(u_i) \,\,, \qquad  \forall \, i\in \llbracket 1, 2d\rrbracket.
\eeq  
\end{definition}

These oper Darboux coordinates define $2d$ points on the spectral curve $\det( yI_2-\td{L}(\lambda))=0$, since $\det(v_i I_2-\td{L}(u_i))=0$ for all $i\in \llbracket 1,2d\rrbracket$. Moreover, they are canonical Darboux coordinates and the expression for the Hamiltonians and Lax matrices were given in \cite{marchal2024hamiltonianrepresentationisomonodromicdeformations}, we will briefly recall them below. The coefficient $\omega$ is an arbitrary parameter, it is the last remaining degree of freedom for the $\text{SL}_2(\mathbb{C})$ action to select a representative of $F_d$ and could be fixed by this action. It corresponds to the action of the centralizer that allows to fix one off-diagonal entry of the sub-leading coefficient without changing the leading order (conjugation by matrices of the form $\diag(1,\varphi)$). As we will see below, the symmetry imposed on the system determines a specific choice of $\omega$. In this setting, the entry $\td{L}_{1,1}(\lambda)$ is given by:
\beq \label{DefCheckL11}\td{L}_{1,1}(\lambda)=\frac{-O_{2d}(\lambda)- (t_{\infty,2d+1}\lambda+t_{\infty,2d})\underset{i=1}{\overset{2d}{\prod}}(\lambda-u_i)}{\lambda} \,, \,\text{  with }\,\, \,
O_{2d}(\lambda)\coloneqq-\sum_{i=1}^{2d} v_iu_i\prod_{j\neq i}\frac{\lambda-u_j}{u_i-u_j}
\eeq
a Lagrange interpolating term such that $O_{2d}(u_i)=-u_iv_i$ for all $i\in \llbracket 1, 2d\rrbracket$. As suggested by their name, the oper Darboux coordinates are particularly convenient for expressing the Lax matrices in the oper gauge. The oper gauge is reached through the transformation:
\beq \td{\Psi}_{\text{oper}}(\lambda) =\begin{pmatrix}1 &0\\
\td{L}_{1,1}(\lambda)& \td{L}_{1,2}(\lambda)\end{pmatrix} \td{\Psi}(\lambda),\eeq
for which the corresponding Lax matrix $\td{L}_{\text{oper}}(\lambda)$ is companion-like:
\begin{align} \left[\td{L}_{\text{oper}}(\lambda)\right]_{1,1}=&0, \qquad \left[\td{L}_{\text{oper}}(\lambda)\right]_{1,2}=1, \qquad 
  \left[\td{L}_{\text{oper}}(\lambda)\right]_{2,2}=\sum_{i=1}^{2d} \frac{1}{\lambda-u_i} -\frac{1}{\lambda},\cr
\left[\td{L}_{\text{oper}}(\lambda)\right]_{2,1}=&-t_{\infty,2d+1}\lambda^{2d-1} -\td{P}_2(\lambda) +\sum_{j=0}^{2d-2} \td{H}_{\infty, j}\lambda^{j}+\frac{\td{H}_{0,1}}{\lambda} -\sum_{j=1}^{2d}\frac{v_j}{\lambda-u_j},
\end{align}
where $\td{P}_2$ is a rational function of $\lambda$ given by the irregular times and monodromies:
\bea \td{P}_2(\lambda)&\coloneqq& -\sum_{k=0}^{2d+1} \left(\sum_{j=0}^k t_{\infty,2d+1-j} t_{\infty,2d+1-(k-j)}\right)\lambda^{4d-k} -\frac{(t_{0,0})^2}{\lambda^2}\cr
&=& -\underset{j= 2 d-1}{\overset{4d}{\sum}}\left(\underset{m=0}{\overset{4d-j}{\sum}} t_{\infty,2 d +1-m}t_{\infty,j+m-2d+1}\right) \lambda^{j}-\frac{(t_{0,0})^2}{\lambda^2},
\eea
and the coefficients $\td{\mathbf{H}}_{\infty}\coloneqq\left(\td{H}_{\infty,0},\dots, \td{H}_{\infty,{2d-2}}\right)$ and $\td{H}_{0,1}$ are the coefficients that will be determined once the compatibility equations are solved. The main advantage of the oper gauge is that one has simpler compatibility equations 
\begin{align}
    \partial_\lambda \td{A}_{ \text{oper}, \boldsymbol{\alpha}}(\lambda) - \partial_t \td{L}_{\text{oper}}(\lambda) + \left[ \td{L}_{\text{oper}}(\lambda) , \td{A}_{\text{oper}, \boldsymbol{\alpha} }(\lambda) \right] = 0.
\end{align}
Solving this compatibility equation provides the Hamiltonian evolutions of the oper Darboux coordinates. This result, derived in \cite{marchal2024hamiltonianrepresentationisomonodromicdeformations}, is recalled in the following proposition.

\begin{proposition}[Oper Hamiltonian evolutions \cite{marchal2024hamiltonianrepresentationisomonodromicdeformations}]\label{PropHamOper}
Define the matrix $M_\infty(\mathbf{t}) \in \mathcal{M}_{2d}(\mathbb{C})$ and the vector $\boldsymbol{\nu}^{(\boldsymbol{\alpha})}_{\infty}\coloneqq \left(\nu^{(\boldsymbol{\alpha})}_{\infty,0},\dots, \nu^{(\boldsymbol{\alpha})}_{\infty,2d-1}\right) \in \mathcal{M}_{2d}(\mathbb{C})$ by 
\beq\label{MatrixMInfty} M_\infty(\mathbf{t})\coloneqq\begin{pmatrix}t_{\infty,2d+1}&0&\dots &\dots &0\\
\vdots &\ddots &\ddots  & &\vdots\\
\vdots &&\ddots&0&0\\
t_{\infty,3} &\dots&& t_{\infty,2d+1}&0\\
t_{\infty,2}& \dots & & t_{\infty,2d}& t_{\infty,2d+1}
 \end{pmatrix} \text{ and }\begin{pmatrix}
\nu^{(\boldsymbol{\alpha})}_{\infty,0}\\ \vdots \\ \nu^{(\boldsymbol{\alpha})}_{\infty,2d-1}\end{pmatrix}\coloneqq (M_\infty(\mathbf{t}))^{-1}\begin{pmatrix} 
\frac{\alpha_{\infty,2d}}{2d}\\ \vdots \\ \frac{\alpha_{\infty,1}}{1}\end{pmatrix}.
\eeq
Then, the Hamiltonians are given by
\small{\beq \mathrm{Ham}_{\boldsymbol{\alpha}}(\mathbf{u},\mathbf{v};\mathbf{t},t_{\infty,0},t_{0,0})=\sum_{k=0}^{2d-2} \nu_{\infty,k+1}^{\boldsymbol{(\alpha)}}\td{H}_{\infty,k}(\mathbf{u},\mathbf{v},\mathbf{t},t_{\infty,0},t_{0,0}) +\nu^{(\boldsymbol{\alpha})}_{\infty,0}\left(\td{H}_{0,1}(\mathbf{u},\mathbf{v},\mathbf{t},t_{\infty,0},t_{0,0})- \sum_{j=0}^{2d}v_j\right),
\eeq}
\normalsize{which} is equivalent to
\beq \label{NewHamReduced}\begin{pmatrix}\mathrm{Ham}_{{t_{\infty,1}}}(\mathbf{u},\mathbf{v};\mathbf{t},t_{\infty,0},t_{0,0})\\ \vdots \\ (2d-1)\mathrm{Ham}_{{t_{\infty,2d-1}}}(\mathbf{u},\mathbf{v};\mathbf{t},t_{\infty,0},t_{0,0})\\
(2d)\mathrm{Ham}_{{t_{\infty,2d}}}(\mathbf{u},\mathbf{v};\mathbf{t},t_{\infty,0},t_{0,0})
\end{pmatrix}=\left(M_{\infty}(\mathbf{t})\right)^{-1}\begin{pmatrix}\td{H}_{\infty,2d-2}(\mathbf{u},\mathbf{v},\mathbf{t},t_{\infty,0},t_{0,0})\\ \vdots\\ \td{H}_{\infty,0}(\mathbf{u},\mathbf{v},\mathbf{t},t_{\infty,0},t_{0,0}) \\ \td{H}_{0,1}(\mathbf{u},\mathbf{v},\mathbf{t},t_{\infty,0},t_{0,0})-\underset{j=1}{\overset{2d}{\sum}}v_j 
\end{pmatrix}. \nonumber
 \eeq
Finally, the coefficients $\td{\mathbf{H}}_{\infty}$ and $\td{H}_{0,1}$ are given in terms of the oper Darboux coordinates by
\beq\label{HCoeffOperDarbouxCoord} \begin{pmatrix} \frac{1}{u_1}&1&u_1 &\dots &\dots &u_1^{2d-2}\\
\frac{1}{u_2}&1& u_2&\dots &\dots& u_{2}^{2d-2} \\
\vdots & & & & \vdots&\vdots\\
\vdots & & & & \vdots&\vdots\\
\frac{1}{u_{2d}}&1& u_{2d} &\dots & \dots& u_{2d}^{2d-2}&  \end{pmatrix}\begin{pmatrix}\td{H}_{0,1}\\ \td{H}_{\infty, 0}\\ \vdots\\ \td{H}_{\infty,2d-2}  \end{pmatrix}=\begin{pmatrix} v_1^2 + \frac{v_1}{u_1}+\td{P}_2(u_1)+\overset{2d}{\underset{i\neq 1}{\sum}}\frac{v_i-v_1}{u_1-u_i}+ t_{\infty,2d+1}u_1^{2d-1}\\
\vdots\\ 
\vdots\\
 v_{2d}^2 + \frac{v_{2d}}{u_{2d}}+\td{P}_2(u_{2d})+\overset{2d}{\underset{i\neq 2d}{\sum}}\frac{v_i-v_{2d}}{u_{2d}-u_i}+ t_{\infty,2d+1}u_{2d}^{2d-1}
\end{pmatrix}.
\eeq
For completeness, the expression of the auxiliary matrix $\td{A}_{\mathrm{oper}, \boldsymbol{\alpha}}(\lambda)$ is given in Appendix \ref{AppendixAuxiliaryMatrixOperGauge}.
\end{proposition}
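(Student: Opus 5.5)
We establish the proposition by the oper-gauge strategy of \cite{marchal2024hamiltonianrepresentationisomonodromicdeformations}, adapted to the normalization $X_0=0$, $t_{\infty,2d+1}$ fixed, and the lower-triangular-at-subleading-order gauge $\td{L}$ of \autoref{DefGauges}. First we compute $\td{L}_{\text{oper}}$ directly. Writing $G(\lambda)=\begin{pmatrix}1&0\\ \td{L}_{1,1}&\td{L}_{1,2}\end{pmatrix}$, we have $\td{L}_{\text{oper}}=(\partial_\lambda G+G\td{L})G^{-1}$. The first row is $(0,1)$. The $(2,2)$ entry equals $\partial_\lambda\log\td{L}_{1,2}=\sum_i(\lambda-u_i)^{-1}-\lambda^{-1}$ by \eqref{DeftdL12}. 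The $(2,1)$ entry equals $-\det\td{L}+\partial_\lambda\td{L}_{1,1}-\td{L}_{1,1}\partial_\lambda\log\td{L}_{1,2}$. We then argue by pole structure. At $\lambda=u_j$ the only pole is simple, with residue $-\td{L}_{1,1}(u_j)=-v_j$. At $\lambda=0$, the term $\det\td{L}$ has a double pole whose coefficient is fixed by the monodromy $t_{0,0}$, producing $-(t_{0,0})^2/\lambda^2$, plus a free simple-pole coefficient $\td{H}_{0,1}$. At infinity, the formal normal form fixes $-\det\td{L}=\mathrm{tr}(\td{L}^2)/2$ up to order $\lambda^{2d-1}$ in terms of the times, which gives $-\td{P}_2$ plus the correction $-t_{\infty,2d+1}\lambda^{2d-1}$ coming from $\partial_\lambda\td{L}_{1,1}-\td{L}_{1,1}\partial_\lambda\log \td{L}_{1,2}$. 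The remaining polynomial part of degree $\le 2d-2$ is free; we call its coefficients $\td{H}_{\infty,j}$.

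Next we derive \eqref{HCoeffOperDarbouxCoord}. The oper equation is a scalar second-order ODE, and each $u_i$ is an apparent singularity. Requiring trivial local monodromy at $u_i$ (equivalently, that $(u_i,v_i)$ lies on the spectral curve, so that $\det(v_iI_2-\td{L}(u_i))=0$) gives the standard apparent-singularity condition: the regular part of $[\td{L}_{\text{oper}}]_{2,1}$ at $u_i$ must equal $v_i^2$ adjusted by the $(2,2)$ entry. Expanding this condition produces exactly the $2d$ linear equations displayed, with the Vandermonde-type matrix acting on $(\td{H}_{0,1},\td{H}_{\infty,0},\dots,\td{H}_{\infty,2d-2})$. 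The extra term $v_i/u_i$ comes from the $-1/\lambda$ in the $(2,2)$ entry. Since the $u_i$ are distinct and nonzero, this system is invertible.

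Finally we solve the compatibility equation. We take $\td{A}_{\text{oper},\boldsymbol{\alpha}}$ with first row $(a(\lambda),b(\lambda))$, with $b$ rational and having poles bounded by those of $\td{L}$, and with the rest determined by the first row. The polynomial part of $b$ at infinity is fixed by matching with $\mathcal{L}_{\boldsymbol{\alpha}}$ acting on the exponential factor $\sum_k t_{\infty,k}\lambda^k/k$. This produces the triangular Toeplitz system $M_\infty(\mathbf{t})\boldsymbol{\nu}=(\alpha_{\infty,2d}/2d,\dots,\alpha_{\infty,1})^T$. Reading off the residues of the compatibility equation at $u_j$ then gives $\mathcal{L}_{\boldsymbol{\alpha}}u_j=\partial \mathrm{Ham}_{\boldsymbol{\alpha}}/\partial v_j$ and $\mathcal{L}_{\boldsymbol{\alpha}}v_j=-\partial\mathrm{Ham}_{\boldsymbol{\alpha}}/\partial u_j$. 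We identify $\mathrm{Ham}_{\boldsymbol{\alpha}}=\sum_k\nu_{\infty,k+1}\td{H}_{\infty,k}+\nu_{\infty,0}(\td{H}_{0,1}-\sum v_j)$. The shift $-\sum v_j$ accounts for the fact that the normalization $\td{L}_{1,2}\propto\prod(\lambda-u_i)/\lambda$ is not time-independent.

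The main obstacle is this last identification. Showing that the derivatives of the $\td{H}$'s, obtained implicitly through the Vandermonde system, reproduce the $u_j$ and $v_j$ evolutions requires the Lagrange-interpolation identities used in \cite{marchal2024hamiltonianrepresentationisomonodromicdeformations}. We would import that argument and check only that the change of normalization, namely the pole at $0$ instead of a generic $X_0$, the non-unit $t_{\infty,2d+1}$, and the $\omega$ factor, changes nothing beyond the $v_i/u_i$ and $-\sum v_j$ terms.
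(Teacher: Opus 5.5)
Your overall strategy matches the paper's. The paper quotes this proposition from \cite{marchal2024hamiltonianrepresentationisomonodromicdeformations} and reproduces the same oper-gauge argument for $(\mathbf{q},\mathbf{p})$ in Theorem~\ref{HamTheorem} and Appendices~\ref{AppendixAsymptPsi}--\ref{AppendixProofHamiltonian}. Your steps 1 and 2 are correct, including the origin of $-t_{\infty,2d+1}\lambda^{2d-1}$ and of $v_i/u_i$.

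The genuinely different ingredient is how you reach \eqref{HCoeffOperDarbouxCoord}. The paper extracts the analogous system from order $(\lambda-q_j)^{-2}$ of the $(2,1)$ compatibility equation after inserting $\mathcal{L}_{\boldsymbol{\alpha}}[q_j]$, and only then remarks that it is $\boldsymbol{\alpha}$-independent. You instead compare the constant term at $u_j$ of $-\det\td{L}+\partial_\lambda\td{L}_{1,1}-\td{L}_{1,1}\partial_\lambda\log\td{L}_{1,2}$ with the partial-fraction form of $[\td{L}_{\mathrm{oper}}]_{2,1}$; there $-\det\td{L}(u_j)=v_j^2$ and the $\td{L}_{1,1}'(u_j)$ terms cancel. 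This is valid and cleaner: the $\boldsymbol{\alpha}$-independence is manifest, and the static spectral data are separated from the dynamics. Note that the no-logarithm condition is automatic, since $\td{\Psi}_{\mathrm{oper}}=G\td{\Psi}$ is holomorphic at $u_j$; it is not something you impose. The paper's route has the advantage of producing the $\td{H}$-system and the evolutions from a single residue computation.

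Step 3, which you import, is misdescribed in ways that matter for the final formula.
\begin{itemize}
\item In the oper gauge, $b=[\td{A}_{\boldsymbol{\alpha}}]_{1,2}/\td{L}_{1,2}$ is not bounded by the poles of $\td{L}$: it has simple poles at the $u_j$.
\item It tends to $\nu^{(\boldsymbol{\alpha})}_{\infty,0}$ at infinity. The other $\nu$'s are the next $2d-1$ coefficients of its $1/\lambda$ expansion; there is no polynomial part to match.
\item Since $X_0$ is frozen, $b=O(\lambda)$ at $\lambda=0$.
\end{itemize}
Thus $b=\nu^{(\boldsymbol{\alpha})}_{\infty,0}+\sum_j\td{\mu}^{(\boldsymbol{\alpha})}_j/(\lambda-u_j)$, with the system of Appendix~\ref{AppendixAuxiliaryMatrixOperGauge}. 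The first row of that system, $\sum_j\td{\mu}^{(\boldsymbol{\alpha})}_j/u_j=\nu^{(\boldsymbol{\alpha})}_{\infty,0}$, is exactly the condition at $0$.

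That matrix is the transpose of the one in \eqref{HCoeffOperDarbouxCoord}, so $\sum_j\td{\mu}^{(\boldsymbol{\alpha})}_j(\text{r.h.s.})_j=\nu^{(\boldsymbol{\alpha})}_{\infty,0}\td{H}_{0,1}+\sum_k\nu^{(\boldsymbol{\alpha})}_{\infty,k+1}\td{H}_{\infty,k}$. This duality is what turns the residue-level Hamiltonian into the stated one, and your sketch never mentions it.

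Your explanation of the $-\sum_j v_j$ term also does not hold up. The same term $-\nu^{(\boldsymbol{\alpha})}_{\infty,0}\sum_j p_j$ appears in \eqref{DefHamqp}, where $\check{L}_{1,2}$ is normalized by the fixed constant $-t_{\infty,2d+1}$. It actually comes from the constant term $\nu^{(\boldsymbol{\alpha})}_{\infty,0}=\alpha_{\infty,2d}/(2d\,t_{\infty,2d+1})$ of $b$. That constant contributes $-\nu^{(\boldsymbol{\alpha})}_{\infty,0}$ to every $\mathcal{L}_{\boldsymbol{\alpha}}[u_j]$ at order $(\lambda-u_j)^{-2}$ of the $(2,2)$ equation, a uniform translation generated by $-\nu^{(\boldsymbol{\alpha})}_{\infty,0}\sum_j v_j$. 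Alongside it comes an extra $\td{\mu}^{(\boldsymbol{\alpha})}_j/u_j$ from the $-1/\lambda$ in $[\td{L}_{\mathrm{oper}}]_{2,2}$.

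With these corrections, the Hamiltonian check is Lemma~\ref{PropsumC} and Proposition~\ref{PropLpjbis} with $q\to u$. The extra $v_j/u_j$ and $t_{\infty,2d+1}u_j^{2d-1}$ terms then match on both sides.
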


\sloppy{
We now turn our attention to another set of coordinates, particularly convenient for having the Lax matrix $\td{L}(\lambda)$ expressed in its original geometric gauge giving the name:  “geometric Darboux coordinates" denoted by $\left(\mathbf{U}_\infty, U_{0,1}, \mathbf{V}_{\infty},V_{0,1}\right)\coloneqq\left(U_{\infty, 0},\dots, U_{\infty,2d-2}, U_{0,1}, V_{\infty, 0},\dots, V_{\infty,2d-2}, V_{0,1}\right)$.

\begin{definition}[Geometric Darboux coordinates]\label{DefGeoDarboux}The geometric Darboux coordinates $\left(\mathbf{U}_\infty, U_{0,1}, \mathbf{V}_{\infty},V_{0,1}\right)\coloneqq\left(U_{\infty, 0},\dots, U_{\infty,2d-2}, U_{0,1}, V_{\infty, 0},\dots, V_{\infty,2d-2}, V_{0,1}\right)$ are defined by
  \bea\label{DefNewcoor} \frac{\omega\underset{j=1}{\overset{2d}{\prod}}(\lambda-u_j)}{\lambda}&=& \omega\left(\frac{U_{0,1}}{\lambda}+\sum_{k=0}^{2d-2} U_{\infty,k}\lambda^k +\lambda^{2d-1}\right)\\
v_i&=&\sum_{k=0}^{2d-2} V_{\infty,k}\frac{\partial U_{\infty,k}(u_1,\dots,u_{2d})}{\partial u_i}+ V_{0,1}\frac{\partial U_{0,1}(u_1,\dots,u_{2d})}{\partial u_i} \,\,,\,\, \forall \, i\in \llbracket 1,2d\rrbracket. \nonumber
\eea  
\end{definition}

The geometric Darboux coordinates are also canonical Darboux coordinates with respect to the Poisson structure because they are related to the oper Darboux coordinates by a time-independent symplectic (i.e., preserving the symplectic $2-$form $\Omega$) change of coordinates}. The fact that this change of coordinates is symplectic is a consequence of Lemma $6.3$ of \cite{MarchalP1Hierarchy}. We point out that the geometric Darboux coordinates defined in this paper differ from those of \cite{MarchalAlameddineIsospectralIsomono2023} by a factor $\omega$ for $(\mathbf{U}_{\infty},U_{0,1})$ and $\omega^{-1}$ for $(\mathbf{V}_{\infty},V_{0,1})$. The present choice is motivated by the fact that we want Darboux coordinates to be independent of the choice of $\omega$ so that quantities $(\td{\mathbf{H}}_\infty, \td{H}_{0,1})$ hold no dependence on $\omega$.

Expressions for the Lax matrices and Hamiltonians using the geometric Darboux coordinates can be obtained from adaptations of formulas provided in \cite{MarchalAlameddineIsospectralIsomono2023} taking into account that the normalizations are not exactly the same (See \autoref{NormalizationMobius}).

\begin{proposition}[Lax matrix in terms of geometric Darboux coordinates]\label{GeoLaxMatrices}The Lax matrix $\td{L}(\lambda)$ is parameterized in terms of the geometric coordinates as follows:
\bea\label{L11OldPaper} \td{L}_{1,2}(\lambda)&=&\omega\left(\frac{U_{0,1}}{\lambda}+\sum_{k=0}^{2 d-2} U_{\infty,k}\lambda^k+\lambda^{2d-1}\right),\cr
\td{L}_{1,1}(\lambda)&=&-\sum_{k=0}^{2 d-2} V_{\infty,2 d-2-k}\lambda^k-\sum_{k=0}^{2 d-3}\sum_{m=0}^{2d-3-k}V_{\infty,m}U_{\infty,k+1+m}\lambda^k\cr
&&+ \frac{V_{0,1}U_{0,1}}{\lambda}-\left(t_{\infty,2 d+1}\lambda+t_{\infty, 2d}-t_{\infty,2d+1}U_{\infty,2 d-2}\right)\frac{\td{L}_{1,2}(\lambda)}{\omega},\cr
\td{L}_{2,2}(\lambda)&=&-\td{L}_{1,1}(\lambda),\cr
\td{L}_{2,1}(\lambda)&=& \frac{1}{\omega}\left[\frac{\underset{j= 2 d-1}{\overset{4d}{\sum}}\left(\underset{m=0}{\overset{4d-j}{\sum}} t_{\infty,2 d +1-m}t_{\infty,j+m-2d+1}\right) \lambda^{j} -\td{L}_{1,1}(\lambda)^2}{\frac{\td{L}_{1,2}(\lambda)}{\omega}}\right]_{\infty,+}\cr&&
+\frac{1}{\omega}\frac{\frac{(t_{0,0})^2}{U_{0,1}}-U_{0,1}\left(V_{0,1}-t_{\infty,2d}+t_{\infty,2d+1}U_{\infty,2d-2}\right)^2 }{\lambda}.
\eea
where $\left[f(\lambda)\right]_{\infty,+}$ stands for the polynomial part (including the constant term) of a meromorphic function $f$.
\end{proposition}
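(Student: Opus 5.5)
We prove the four formulas of \autoref{GeoLaxMatrices} one entry at a time. We start from the oper parameterization \eqref{DeftdL12}--\eqref{DefCheckL11}, rewrite everything in the geometric Darboux coordinates of \autoref{DefGeoDarboux}, and fix the remaining freedom using the asymptotic normalization of \autoref{DefGauges} together with the determinant (spectral curve) condition. The entry $\td{L}_{1,2}$ is immediate: by \eqref{DefNewcoor}, $\omega\prod_j(\lambda-u_j)/\lambda$ expands as $\omega\big(U_{0,1}/\lambda+\sum_k U_{\infty,k}\lambda^k+\lambda^{2d-1}\big)$. The identity $\td{L}_{2,2}=-\td{L}_{1,1}$ holds because $\td{L}$ is traceless.

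For $\td{L}_{1,1}$, we use \eqref{DefCheckL11}: $\lambda\td{L}_{1,1}(\lambda)=-O_{2d}(\lambda)-(t_{\infty,2d+1}\lambda+t_{\infty,2d})\prod_i(\lambda-u_i)$. Here $O_{2d}$ is the unique polynomial of degree at most $2d-1$ with $O_{2d}(u_i)=-u_iv_i$. We therefore define the candidate
\[
R(\lambda)\coloneqq -\sum_{k}V_{\infty,2d-2-k}\lambda^{k}-\sum_{k}\sum_{m}V_{\infty,m}U_{\infty,k+1+m}\lambda^k+\frac{V_{0,1}U_{0,1}}{\lambda}+t_{\infty,2d+1}U_{\infty,2d-2}\,\frac{\td{L}_{1,2}(\lambda)}{\omega},
\]
and check that $\lambda R(\lambda)$ is a polynomial of degree at most $2d-1$ taking the value $-u_iv_i$ at each $u_i$. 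This check is the key computation.

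To carry it out, differentiate the identity $\prod_j(\lambda-u_j)=U_{0,1}+\sum_k U_{\infty,k}\lambda^{k+1}+\lambda^{2d}$ with respect to $u_i$ and evaluate at $\lambda=u_i$. This gives
\[
\partial_{u_i}U_{0,1}+\sum_k \partial_{u_i}U_{\infty,k}\,u_i^{k+1}=0 .
\]
Pairing it with $v_i=\sum_k V_{\infty,k}\partial_{u_i}U_{\infty,k}+V_{0,1}\partial_{u_i}U_{0,1}$ does not by itself isolate $v_i$, so we use a cleaner route. The pairing $\sum_k V_{\infty,k}\lambda^{\ast}$ with the shifted $U$-products in the double sum is exactly the polynomial part of $V(\lambda)\cdot \td{L}_{1,2}(\lambda)/(\omega\lambda^{?})$ for a suitable generating polynomial $V(\lambda)=\sum_m V_{\infty,m}\lambda^{-m-1}$. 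Writing it as $\big[\lambda^{-2d+1}\tilde V(\lambda)\prod_j(\lambda-u_j)\big]_+$, one evaluates at $u_i$ using $\partial_{u_i}\prod_j(\lambda-u_j)=-\prod_j(\lambda-u_j)/(\lambda-u_i)$. Expanding this derivative in the basis $\{\lambda^{k+1}\}$ and $1$ recovers $\partial_{u_i}U_{\infty,k}$ and $\partial_{u_i}U_{0,1}$ as the coefficients of the quotient of $\prod_j(\lambda-u_j)$ by $\lambda-u_i$. This is the bookkeeping step I expect to be most error prone. The term $t_{\infty,2d+1}U_{\infty,2d-2}\td{L}_{1,2}/\omega$ is what compensates the degree shift between $O_{2d}$ and $(t_{\infty,2d+1}\lambda+t_{\infty,2d})\prod(\lambda-u_i)$ at top order. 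Once the interpolation values match and the degrees agree, uniqueness of Lagrange interpolation gives the formula for $\td{L}_{1,1}$. As a consistency check we verify the prescribed subleading coefficient $\begin{pmatrix}X&\omega\\X&X\end{pmatrix}$. Alternatively, the whole computation can be imported from \cite{MarchalAlameddineIsospectralIsomono2023} by rescaling $(\mathbf{U},\mathbf{V})\mapsto(\omega\mathbf{U},\omega^{-1}\mathbf{V})$ and adapting to $X_0=0$ and general $t_{\infty,2d}$.

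Finally, $\td{L}_{2,1}$ follows from $\det\td{L}=-\td{L}_{1,1}^2-\td{L}_{1,2}\td{L}_{2,1}$. The asymptotic structure of this determinant is fixed by the formal normal forms: its polar/polynomial part at infinity equals $-\td{P}_2$-type terms (the double sum in $t$), and its double pole at $0$ is $-(t_{0,0})^2/\lambda^2$. Hence $\td{L}_{2,1}=\big(-\det\td{L}-\td{L}_{1,1}^2\big)/\td{L}_{1,2}$. Its polynomial part is the bracket $[\cdot]_{\infty,+}$, because the unknown lower coefficients of $\det\td{L}$ (the $\td{H}$'s) only affect negative powers. The only remaining freedom is the residue at $\lambda=0$, since $\td{L}_{2,1}$ has only a simple pole there. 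We compute it by taking the leading order at $\lambda\to0$. There $\td{L}_{1,2}\sim\omega U_{0,1}/\lambda$ and $\td{L}_{1,1}\sim U_{0,1}\big(V_{0,1}-t_{\infty,2d}+t_{\infty,2d+1}U_{\infty,2d-2}\big)/\lambda$, so $\det\td{L}\sim -t_{0,0}^2/\lambda^2$ yields the stated residue.
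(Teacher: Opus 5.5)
You handle $\td L_{1,2}$, $\td L_{2,2}$ and $\td L_{2,1}$ correctly. The polynomial part of $\td L_{2,1}$ is fixed by the determinant at infinity, since dividing by $\td L_{1,2}\sim\omega\lambda^{2d-1}$ sends the unknown $O(\lambda^{2d-2})$ part of $\det\td L$ to negative powers. The residue at $0$ then follows from the $-(t_{0,0})^2/\lambda^2$ double pole.

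\textbf{The gap is in $\td L_{1,1}$.} Your candidate $R$ contains $t_{\infty,2d+1}U_{\infty,2d-2}\td L_{1,2}/\omega$, so $\lambda R(\lambda)$ contains $t_{\infty,2d+1}U_{\infty,2d-2}\prod_j(\lambda-u_j)$. It therefore has degree $2d$ and depends on $t_{\infty,2d+1}$. By contrast, $O_{2d}$ has degree at most $2d-1$ and no $t$-dependence. So the check you announce is false and $\lambda R\neq -O_{2d}$. Also, the interpolation values should be $+u_iv_i$, not $-u_iv_i$.

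In fact the statement's $\td L_{1,1}$ and \eqref{DefCheckL11} taken literally differ by $t_{\infty,2d+1}U_{\infty,2d-2}\prod_j(\lambda-u_j)/\lambda$. No interpolation argument starting from \eqref{DefCheckL11} can produce that term, and the ``degree shift'' you invoke does not exist. What produces it is the normalization at infinity, which you never use. Since $\td L_{1,2}\td L_{2,1}=O(\lambda^{4d-2})$ and $\det\td L=-\big(\sum_k t_{\infty,k}\lambda^{k-1}\big)^2+O(\lambda^{2d-2})$, the $\lambda^{2d-1}$ coefficient of $\td L_{1,1}$ must be $-t_{\infty,2d}$. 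This forces the multiplier of $\prod_j(\lambda-u_j)/\lambda$ to be $t_{\infty,2d+1}\lambda+t_{\infty,2d}+t_{\infty,2d+1}\sum_i u_i=t_{\infty,2d+1}\lambda+t_{\infty,2d}-t_{\infty,2d+1}U_{\infty,2d-2}$, so \eqref{DefCheckL11} must be read with this shift.

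Your consistency check on $\begin{pmatrix}X&\omega\\ X&X\end{pmatrix}$ cannot detect this, because the diagonal there is unconstrained. Moreover, your residue $U_{0,1}(V_{0,1}-t_{\infty,2d}+t_{\infty,2d+1}U_{\infty,2d-2})$ at $\lambda=0$ already uses the shifted multiplier. It does not follow from your own derivation, since the literal \eqref{DefCheckL11} would give $U_{0,1}(V_{0,1}-t_{\infty,2d})$.

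\textbf{How to complete the argument.} With the multiplier fixed this way, $\lambda\td L_{1,1}+(t_{\infty,2d+1}\lambda+t_{\infty,2d}-t_{\infty,2d+1}U_{\infty,2d-2})\prod_j(\lambda-u_j)$ is a polynomial of degree at most $2d-1$ taking the values $u_iv_i$. What remains is the computation you left with placeholders. Define $P(\lambda):=V_{0,1}U_{0,1}-\sum_{k}V_{\infty,2d-2-k}\lambda^{k+1}-\sum_{k}\sum_{m}V_{\infty,m}U_{\infty,k+1+m}\lambda^{k+1}$, with the summation ranges of the statement; you need $P(u_i)=u_iv_i$.

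Put $U_{\infty,-1}:=U_{0,1}$, $V_{\infty,-1}:=V_{0,1}$ and $U_{\infty,2d-1}:=1$. From $\partial_{u_i}\prod_j(\lambda-u_j)=-\prod_j(\lambda-u_j)/(\lambda-u_i)$ and $(\lambda^{k+1}-u_i^{k+1})/(\lambda-u_i)=\sum_{a=0}^{k}\lambda^a u_i^{k-a}$, you get $\partial_{u_i}U_{\infty,m}=-\sum_{k=m+1}^{2d-1}U_{\infty,k}u_i^{k-m-1}$ for $-1\le m\le 2d-2$. Substitute into $v_i=\sum_{m=-1}^{2d-2}V_{\infty,m}\partial_{u_i}U_{\infty,m}$:
\begin{itemize}
\item the $m=-1$ term equals $V_{0,1}U_{0,1}/u_i$, because $\sum_{k=0}^{2d-1}U_{\infty,k}u_i^k=-U_{0,1}/u_i$;
\item the $k=2d-1$ terms give $-\sum_k V_{\infty,2d-2-k}u_i^k$;
\item the remaining terms, reindexed by $k-m-1$, give the double sum.
\end{itemize}
Hence $v_i=P(u_i)/u_i$, and uniqueness of Lagrange interpolation yields the stated $\td L_{1,1}$. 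The paper itself gives no computation here; it imports the formulas from \cite{MarchalAlameddineIsospectralIsomono2023} after rescaling by $\omega$, which is your fallback route.
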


Using \autoref{PropHamOper}, one only needs to express the quantities $(\td{\mathbf{H}}_{\infty},\td{H}_{0,1})$ in terms of the geometric Darboux coordinates to obtain their evolutions. This is given by the following proposition.

\begin{proposition}[Hamiltonians for the geometric Darboux coordinates]\label{PropHamGeometricDarboux} We have
\small{\beq \label{NewHamtdReduced2}\begin{pmatrix}\mathrm{Ham}_{{t_{\infty,1}}}(\mathbf{U}_{\infty}, U_{0,1},\mathbf{V}_{\infty},V_{0,1};\mathbf{t},t_{\infty,0},t_{0,0})\\ \vdots \\ (2d-1)\mathrm{Ham}_{{t_{\infty,2d-1}}}(\mathbf{U}_{\infty},U_{0,1},\mathbf{V}_{\infty},V_{0,1};\mathbf{t},t_{\infty,0},t_{0,0})\\
(2d)\mathrm{Ham}_{{t_{\infty,2d}}}(\mathbf{U}_{\infty},U_{0,1},\mathbf{V}_{\infty},V_{0,1};\mathbf{t},t_{\infty,0},t_{0,0})
\end{pmatrix}=\left(M_{\infty}(\mathbf{t})\right)^{-1}\begin{pmatrix}\td{H}_{\infty,2d-2}(\mathbf{U}_{\infty},U_{0,1},\mathbf{V}_{\infty},V_{0,1},\mathbf{t},t_{\infty,0},t_{0,0})\\ \vdots\\ \td{H}_{\infty,0}(\mathbf{U}_{\infty},U_{0,1},\mathbf{V}_{\infty},V_{0,1},\mathbf{t},t_{\infty,0},t_{0,0}) \\ \td{H}_{0,1}(\mathbf{U}_{\infty},U_{0,1},\mathbf{V}_{\infty},V_{0,1},t_{\infty,0},t_{0,0})+\td{\delta}_{0,1} 
\end{pmatrix},\eeq}
\normalsize{where} $\forall\, j\in \llbracket 0, 2d-2\rrbracket$: 
\begin{align}  \label{DeftdHs}
\td{H}_{\infty,j}(\mathbf{U}_{\infty},U_{0,1},\mathbf{V}_{\infty},V_{0,1},\mathbf{t},t_{\infty,0},t_{0,0})= &-\Res_{\lambda\to \infty}\lambda^{-j-1} \left[ (\td{L}_{1,1})^2+\td{L}_{2,1}\td{L}_{1,2} +\td{L}_{1,2}\partial_\lambda\left(\frac{ \td{L}_{1,1}}{\td{L}_{1,2}}\right)\right], \nonumber \\
\td{H}_{0,1} (\mathbf{U}_{\infty},U_{0,1},\mathbf{V}_{\infty},V_{0,1},\mathbf{t},t_{\infty,0},t_{0,0})=&\Res_{\lambda\to 0}(\td{L}_{1,1})^2+\td{L}_{2,1}\td{L}_{1,2} +\td{L}_{1,2} \partial_\lambda\left(\frac{ \td{L}_{1,1}}{\td{L}_{1,2}}\right).
\end{align}
Finally, the additional term is given by
\beq \td{\delta}_{0,1}=-\sum_{j=1}^{2d} v_j=2d\,V_{\infty,2d-2}+\sum_{k=0}^{2d-3} (k+2)U_{\infty,k+1}V_{\infty,k}+ U_{\infty,0}V_{0,1},\eeq
and the entries of the matrix $\td{L}(\lambda)$ are given by \autoref{GeoLaxMatrices}. Note in particular that the Hamiltonians are  independent of the parameter $\omega$.
\end{proposition}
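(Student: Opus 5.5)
Since the geometric Darboux coordinates are obtained from the oper Darboux coordinates by a time-independent symplectic change of variables (Lemma 6.3 of \cite{MarchalP1Hierarchy}), the Hamiltonians in the new coordinates are simply those of \autoref{PropHamOper} rewritten in $(\mathbf{U}_\infty,U_{0,1},\mathbf{V}_\infty,V_{0,1})$; no correction term arises. So two things have to be shown: the quantities $\td{H}_{\infty,j}$ and $\td{H}_{0,1}$ defined by the linear system \eqref{HCoeffOperDarbouxCoord} equal the residues in \eqref{DeftdHs}, and $-\sum_j v_j$ equals the stated $\td{\delta}_{0,1}$.

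For the residue formulas, I will use the oper gauge. A direct computation of the gauge transformation $\td{\Psi}_{\text{oper}}=G\td{\Psi}$, with $G=\begin{pmatrix}1&0\\ \td{L}_{1,1}&\td{L}_{1,2}\end{pmatrix}$, gives
\begin{align*}
\left[\td{L}_{\text{oper}}\right]_{2,1}=-\det\td{L}-\td{L}_{1,2}\,\partial_\lambda\!\left(\frac{\td{L}_{1,1}}{\td{L}_{1,2}}\right)=(\td{L}_{1,1})^2+\td{L}_{1,2}\td{L}_{2,1}-\td{L}_{1,2}\,\partial_\lambda\!\left(\frac{\td{L}_{1,1}}{\td{L}_{1,2}}\right),
\end{align*}
up to the sign convention for the derivative term, which I will fix by matching the simple poles at $u_i$: these have residue $-v_i$, and they come only from the derivative term. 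Comparing with the explicit form of $\left[\td{L}_{\text{oper}}\right]_{2,1}$ in \autoref{Sec22}, the coefficient of $\lambda^j$ for $0\le j\le 2d-2$ and the residue at $\lambda=0$ are read off. The terms $-t_{\infty,2d+1}\lambda^{2d-1}$ and $-\td{P}_2$ contribute nothing to $\lambda^{j}$ with $0\le j\le 2d-2$, since the indices of $\td{P}_2$ start at $2d-1$ and its only other term is $\lambda^{-2}$. The term $-\sum_j v_j/(\lambda-u_j)$, however, does contribute at infinity, through its expansion $-\sum_j v_j u_j^{m}\lambda^{-m-1}$, and this gives only negative powers. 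Hence $-\Res_{\lambda\to\infty}\lambda^{-j-1}[\cdots]$ extracts exactly $\td{H}_{\infty,j}$. At $\lambda=0$, the term $-\sum v_j/(\lambda-u_j)$ is regular and $\td{P}_2$ contributes only a $\lambda^{-2}$ term, so the residue there is $\td{H}_{0,1}$.

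Next I will compute $\td{\delta}_{0,1}$. From \eqref{DefNewcoor}, $\sum_i \partial_{u_i}$ acting on the coefficients of $\prod_j(\lambda-u_j)/\lambda$ is $-\partial_\lambda$ acting on the polynomial $\prod_j(\lambda-u_j)$, read coefficientwise. Let me write $\prod_j(\lambda-u_j)=\lambda^{2d}+\sum_{k=0}^{2d-2}U_{\infty,k}\lambda^{k+1}+U_{0,1}$. Then the coefficient of $\lambda^{m}$ in $\partial_\lambda$ of this polynomial gives $\sum_i\partial_{u_i}U_{\infty,k}=-(k+2)U_{\infty,k+1}$ for $k\le 2d-3$, $\sum_i\partial_{u_i}U_{\infty,2d-2}=-2d$, and $\sum_i\partial_{u_i}U_{0,1}=-U_{\infty,0}$. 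Inserting these into $v_i=\sum_k V_{\infty,k}\partial_{u_i}U_{\infty,k}+V_{0,1}\partial_{u_i}U_{0,1}$ yields the announced formula for $\td{\delta}_{0,1}$.

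Finally, $\omega$-independence: in \autoref{GeoLaxMatrices}, $\td{L}_{1,1}$ is independent of $\omega$ while $\td{L}_{1,2}\propto\omega$ and $\td{L}_{2,1}\propto\omega^{-1}$, so every expression in \eqref{DeftdHs} is independent of $\omega$. The main obstacle I expect is the careful bookkeeping: getting the sign of the derivative term right and checking that $-\sum v_j/(\lambda-u_j)$ does not spoil the polynomial coefficients. This term's expansion gives only negative powers at infinity, so it does not affect them.
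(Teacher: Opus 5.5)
Your proposal is correct and follows the route the paper itself relies on. The paper gives no detailed proof and only cites adaptations of earlier work; the ingredients are the ones you use: the time-independent cotangent-lift change of coordinates from the oper coordinates of \autoref{PropHamOper}, the identification of the $\td{H}$'s as the coefficients of $[\td{L}_{\mathrm{oper}}]_{2,1}=(\td{L}_{1,1})^2+\td{L}_{1,2}\td{L}_{2,1}+\td{L}_{1,2}\partial_\lambda(\td{L}_{1,1}/\td{L}_{1,2})$, and the translation-invariance computation of $-\sum_j v_j$. One correction: the direct gauge computation already gives the $+$ sign on the derivative term, which is the sign your residue-matching at the $u_i$ recovers, so no sign ambiguity actually remains.
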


The proof of the above results follows from adaptations of results of \cite{marchal2024hamiltonianrepresentationisomonodromicdeformations} and \cite{MarchalAlameddineIsospectralIsomono2023}. For completeness, we also give the expression of the auxiliary matrix $\td{\mathcal{A}}_{\boldsymbol{\alpha}}(\lambda)$ in Appendix \ref{AppendixAuxiliaryMatrixGeometricDarboux}. 
\medskip

Note that the geometric Darboux coordinates are not adapted for the upcoming symmetry (this point was observed in \cite{alameddine2026symmetryreductionpainleveiv}).

\subsection{Symmetric oper Darboux coordinates and choice of representative in $F_d$}
The previous geometric Darboux coordinates are not well-suited for the symmetry, as proved in \cite{alameddine2026symmetryreductionpainleveiv}. Indeed, the symmetry requires taking a very specific and complicated value of $\omega$, i.e., to use the remaining part of the $\text{SL}_2(\mathbb{C})$ action to select an appropriate representative for the symmetry. Moreover, the symmetry does not eliminate half of the Darboux coordinates but provides complicated relations between them. However, the condition on $\omega$ becomes clearer when considering $\check{L}(\lambda)$ as we explain in the following definition.

\begin{definition}[Choice of representative in $F_d$ compatible with the symmetry]\label{ChoiceRepresentative}We use the remaining part of the $\text{SL}_2(\mathbb{C})$ action (i.e., conjugation of $\td{L}(\lambda)$ by $\diag(1,\alpha)$) to select the unique representative in $F_d$ such that $\check{L}_{1,2}(\lambda)$ is a polynomial of degree $2d$ with leading coefficient $-t_{\infty,2d+1}$:
\beq \check{L}_{1,2}(\lambda)=-t_{\infty,2d+1}\prod_{i=1}^{2d} (\lambda-q_i).\eeq
In other words, we use the remaining part of the $\text{SL}_{2}(\mathbb{C})$ action to select the representative in $F_d$ such that $\check{L}_{1,2}$ has no (simple) pole at zero. This choice will be made for the rest of the paper when considering $\check{L}(\lambda)$ and the associated matrices $\check{A}_{\boldsymbol{\alpha}}(\lambda)$ and $\check{\Psi}(\lambda)$. 
\end{definition}

\begin{remark}\autoref{ChoiceRepresentative} is equivalent to taking
\beq \omega=t_{\infty,2d}-V_{0,1}-t_{\infty,2d+1}U_{\infty,2d-2}-\frac{t_{0,0}}{U_{0,1}},\eeq
which is the specific value of $\omega$ required for the symmetry as observed in \cite{alameddine2026symmetryreductionpainleveiv}. Moreover, it provides a normalization of $\check{L}_{1,1}(\lambda)$ at $\lambda=0$ given by:
\beq \label{NormcheckL11}
\check{L}_{1,1}(\lambda)\overset{\lambda\to 0}{=}\frac{t_{0,0}}{\lambda}+O(1).
\eeq 
Moreover, we get that
\beq \label{ConsequencetdL21pole0} \td{L}_{2,1}(\lambda)\overset{\lambda\to 0}{=}-\frac{t_{0,0}+U_{0,1}\left(t_{\infty,2d}-t_{\infty,2d+1}U_{\infty,2d-2}-V_{0,1}\right) }{\lambda}+O(1)\eeq
\end{remark}

Using this representative in $F_d$, we define the corresponding oper gauge and the associated oper coordinates that we shall refer to as the “symmetric oper Darboux coordinates".
\begin{definition}[Symmetric oper Darboux coordinates]We define $(\mathbf{q},\mathbf{p})\coloneqq(q_1,\dots,q_{2d},p_1,\dots,p_{2d})$ as
    \beq\label{DefCheckL12}\check{L}_{1,2}(\lambda)=-t_{\infty,2d+1}\prod_{i=1}^{2d} (\lambda-q_i) \qquad \text{ and } \qquad  
p_i=\check{L}_{1,1}(q_i) \,\,, \qquad  \forall \, i\in \llbracket 1, 2d\rrbracket.
\eeq
\end{definition}

The terminology “symmetric" indicates that $(q_1,\dots,q_{2d})$ are by \eqref{checkLtdL} the zeros of $\td{L}_{1,2}(\lambda)+\td{L}_{2,1}(\lambda)$, which is symmetric with respect to the exchange $1\leftrightarrow 2$.

\medskip

We stress again that these symmetric oper Darboux coordinates depend on the choice of representative in $F_d$. Indeed, if one does not select $\check{L}$ as defined in \autoref{ChoiceRepresentative}, then $\check{L}_{1,2}(\lambda)$ would have $2d+1$ zeros (and a pole at $\lambda=0$) so that the present definition would not make sense. This phenomenon is absent in the tilde gauge, because the condition in \autoref{ChoiceRepresentative} does not affect the definition of the oper Darboux coordinates for $\td{L}(\lambda)$. Indeed, $\omega$ is the normalizing factor of $\td{L}_{1,2}(\lambda)$ so it affects neither the zeros $(u_i)_{1\leq i\leq 2d}$ of this entry, nor the values of $v_i=\td{L}_{1,1}(u_i)$. However, the symmetry condition for $\td{L}(\lambda)$ fixes the value of $\omega$ to a non-trivial value and this value is equivalent to the absence of a pole at zero of $\check{L}_{1,2}(\lambda)$. Thus, in both cases, the symmetry condition completely fixes the choice of a representative in the orbit. 

\medskip

Note that \autoref{ChoiceRepresentative} implies that we have
\beq \check{L}_{1,1}(\lambda) = \frac{t_{0,0}}{\lambda} +\sum_{i=1}^{2d}\left(p_i-\frac{t_{0,0}}{q_i}\right) \prod_{j\neq i}\frac{\lambda-q_j}{q_i-q_j}.\eeq
Indeed, it obviously satisfies that $\check{L}_{1,1}(\lambda)=\frac{t_{0,0}}{\lambda}+O(1)$ that follows from \eqref{NormcheckL11} and $\check{L}_{1,1}(q_i)=p_i$, for all $i\in \llbracket 1,2d\rrbracket$, and these identities uniquely determine $\check{L}_{1,1}(\lambda)$. As suggested by their name, the symmetric oper Darboux coordinates are particularly convenient for expressing the Lax matrices in the symmetric oper gauge. The symmetric oper gauge is reached through the transformation:
\beq \check{\Psi}_{\text{oper}}(\lambda) \coloneqq\begin{pmatrix}1 &0\\
\check{L}_{1,1}(\lambda)& \check{L}_{1,2}(\lambda)\end{pmatrix} \check{\Psi}(\lambda),\eeq
for which the corresponding Lax matrix $\check{L}_{\text{oper}}(\lambda)$ is companion-like.

\begin{proposition}\label{PropLopersym}The wave matrix $\check{\Psi}(\lambda)$ has the following asymptotics:
 \bea
 \check{\Psi}_{1,1}(\lambda)&\overset{\lambda\to \infty}{=}&\exp\left(-\sum_{k=1}^{2d+1} \frac{t_{\infty,k}}{k}\lambda^k -t_{\infty,0}\ln \lambda+ A_{\infty,0} +O\left(\lambda^{-1}\right)\right),\cr
 \check{\Psi}_{1,2}(\lambda)&\overset{\lambda\to \infty}{=}&\exp\left(\sum_{k=1}^{2d+1} \frac{t_{\infty,k}}{k}\lambda^k +t_{\infty,0}\ln \lambda+ B_{\infty,0} +O\left(\lambda^{-1}\right)\right),\cr
 \check{\Psi}_{1,1}(\lambda)&\overset{\lambda\to 0}{=}&\exp\left(t_{0,0}\ln(\lambda)+ A_{0,0}+O\left(\lambda\right)\right),\cr
\check{\Psi}_{1,2}(\lambda)&\overset{\lambda\to 0}{=}&\exp\left((1-t_{0,0})\ln(\lambda)+ B_{0,0}+O\left(\lambda\right)\right).
\eea
Consequently, the Lax matrix $\check{L}_{\mathrm{oper}}(\lambda)$ is given by
\begin{align} \left[\check{L}_{\mathrm{oper}}(\lambda)\right]_{1,1}=&0, \qquad \left[\check{L}_{\mathrm{oper}}(\lambda)\right]_{1,2}=1, \qquad 
  \left[\check{L}_{\mathrm{oper}}(\lambda)\right]_{2,2}=\sum_{i=1}^{2d} \frac{1}{\lambda-q_i},\cr
\left[\check{L}_{\mathrm{oper}}(\lambda)\right]_{2,1}=& -\check{P}_2(\lambda) +\sum_{j=0}^{2d-2} \check{H}_{\infty, j}\lambda^{j}+\frac{\check{H}_{0,1}}{\lambda} -\sum_{j=1}^{2d}\frac{p_j}{\lambda-q_j},
\end{align}
where $\check{P}_2$ is a rational function of $\lambda$ given by the irregular times and monodromies:
\bea \check{P}_2(\lambda)&\coloneqq& -\sum_{k=0}^{2d+1} \left(\sum_{j=0}^k t_{\infty,2d+1-j} t_{\infty,2d+1-(k-j)}\right)\lambda^{4d-k} +\frac{t_{0,0}(1-t_{0,0})}{\lambda^2}\cr
&=& -\underset{j= 2 d-1}{\overset{4d}{\sum}}\left(\underset{m=0}{\overset{4d-j}{\sum}} t_{\infty,2 d +1-m}t_{\infty,j+m-2d+1}\right) \lambda^{j}+\frac{t_{0,0}(1-t_{0,0})}{\lambda^2}.
\eea
\end{proposition}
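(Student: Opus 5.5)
The plan has two parts: first derive the local asymptotics of $\check{\Psi}$ from those of $\td{\Psi}$, then compute the companion form directly from the gauge transformation and read off its singular parts from the asymptotics of the first row.

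\textbf{Step 1: asymptotics of $\check\Psi$.} Near $\lambda=\infty$, the formal normal form gives
\[
\td{\Psi}(\lambda)=G(\lambda)\,\exp\bigl(\diag(-\theta,\theta)\bigr),\qquad \theta(\lambda)\coloneqq\sum_{k=1}^{2d+1}\frac{t_{\infty,k}}{k}\lambda^k+t_{\infty,0}\ln\lambda .
\]
Here $G(\lambda)=G_0+O(\lambda^{-1})$, and $G_0$ is lower triangular, since the normalization of \autoref{DefGauges} makes the leading order of $\td{L}$ diagonal. Then $\check\Psi=S\td\Psi S^{-1}$. The right factor $S^{-1}$ only recombines the columns of the fundamental solution, i.e. it changes the basis of solutions. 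Since a change of basis of fundamental solutions does not alter the Lax matrix, I will replace $\check\Psi$ by $S\td\Psi$. Its first row is $\bigl(G_{11}-G_{21},\,G_{12}-G_{22}\bigr)$ multiplied by the exponentials. This gives the stated expansions, with constants $A_{\infty,0},B_{\infty,0}$ absorbing $\ln$ of the leading coefficients.

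Near $\lambda=0$ the argument is analogous. The exponents are the eigenvalues $\pm t_{0,0}$ of $\check L^{[0]}$. From $\check{L}_{1,1}\sim t_{0,0}/\lambda$ (\eqref{NormcheckL11}) and the absence of a pole in $\check{L}_{1,2}$, a direct check shows that the first row of a solution behaves like $\lambda^{t_{0,0}}$ for one column. For the other column, whose exponent is $-t_{0,0}$, the first component is suppressed by one order: the residue matrix has first row $(t_{0,0},0)$, so the eigenvector for $-t_{0,0}$ has vanishing first entry. Hence that first-row entry starts at $\lambda^{1-t_{0,0}}$. Making this precise is the one delicate point; I will write the Frobenius expansion $\check\Psi=(\Psi_0+\Psi_1\lambda+\dots)\lambda^{\diag(t_{0,0},-t_{0,0})}$ and solve for the first order.

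\textbf{Step 2: companion form.} With $\check\Psi_{\rm oper}=G\check\Psi$, where $G=\begin{pmatrix}1&0\\ \check L_{1,1}&\check L_{1,2}\end{pmatrix}$, the gauged Lax matrix is $\check L_{\rm oper}=(\partial_\lambda G+G\check L)G^{-1}$. A direct computation gives first row $(0,1)$. The $(2,2)$ entry is
\[
\operatorname{tr}\check L_{\rm oper}=\partial_\lambda\ln\det G=\partial_\lambda\ln\check L_{1,2}=\sum_{i=1}^{2d}\frac{1}{\lambda-q_i}.
\]
The $(2,1)$ entry is
\[
-\det\check L+\check L_{1,2}\,\partial_\lambda\!\left(\frac{\check L_{1,1}}{\check L_{1,2}}\right).
\]
Its singular structure is determined as follows.

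At $q_j$ there are simple poles with residue $-p_j$. At such a point $-\det\check L$ is regular, and the residue is $-\check L_{1,1}(q_j)=-p_j$ (apparent singularities).

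At infinity and at $0$, I use the scalar ODE satisfied by the first-row entries of $\check\Psi$. Writing $\psi''=\check{L}_{\rm oper,22}\,\psi'+\check{L}_{\rm oper,21}\,\psi$ and substituting the two asymptotic solutions, the polar part at infinity down to order $\lambda^{2d-1}$ is fixed: by $\theta'^2$ it equals $-\check{P}_2$'s polynomial part. The orders $\lambda^{0},\dots,\lambda^{2d-2}$ are the free $\check H_{\infty,j}$.

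At $0$ the indicial equation with exponents $t_{0,0}$ and $1-t_{0,0}$ produces the $\lambda^{-2}$ coefficient $t_{0,0}(1-t_{0,0})$. Here $\check L_{\rm oper,22}$ has no pole at zero, since $\check L_{1,2}(0)\neq 0$. The $\lambda^{-1}$ coefficient is $\check H_{0,1}$.

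The main obstacle is the exponent $1-t_{0,0}$ at zero, which replaces the $-(t_{0,0})^2$ of the tilde oper gauge. It stems precisely from the representative choice in \autoref{ChoiceRepresentative}, so Step 1 near $0$ must be checked carefully.
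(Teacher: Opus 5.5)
Your proof is correct. Step 2 is essentially the paper's argument. The paper writes $[\check{L}_{\mathrm{oper}}]_{2,2}=\partial_\lambda\check{W}/\check{W}$ and $[\check{L}_{\mathrm{oper}}]_{2,1}=-\check{Y}_1\check{Y}_2+(\check{Y}_2\partial_\lambda\check{Y}_1-\check{Y}_1\partial_\lambda\check{Y}_2)/(\check{Y}_2-\check{Y}_1)$ with $\check{Y}_j=\partial_\lambda\ln\check{\Psi}_{1,j}$. This is exactly what you get by substituting both first-row solutions into the scalar ODE. The paper also uses your identity $[\check{L}_{\mathrm{oper}}]_{2,1}=-\det\check{L}+\check{L}_{1,2}\partial_\lambda(\check{L}_{1,1}/\check{L}_{1,2})$.

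The genuine difference is in Step 1. The paper gets the exponents from the Wronskian identity $\check{\Psi}_{1,1}\partial_\lambda\check{\Psi}_{1,2}-\check{\Psi}_{1,2}\partial_\lambda\check{\Psi}_{1,1}=\check{L}_{1,2}$, using $\det\check{\Psi}=1$. Since $\check{L}_{1,2}$ is regular and nonvanishing at $0$ and of exact order $\lambda^{2d}$ at infinity, the two first-row exponents must sum to $1$ at zero and to $0$ at infinity. The formal monodromy then fixes one of them. This is faster, but it needs two extra inputs: that each first-row entry has pure power/exponential behaviour, and that one of them carries $t_{0,0}$. The paper justifies the latter only loosely.

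Your local analysis instead produces both exponents directly. The residue is triangular, $\begin{pmatrix} t_{0,0}&0\\ *&-t_{0,0}\end{pmatrix}$, and the first-order Frobenius equation reads $(1-2t_{0,0})(v_1)_1=\check{L}_{1,2}(0)$. This shows explicitly that the coefficient of $\lambda^{1-t_{0,0}}$ is nonzero exactly because $\check{L}_{1,2}(0)\neq 0$ and $2t_{0,0}\neq 1$. These are the same two conditions hidden in the Wronskian argument.

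Your remark that the right factor $S^{-1}$ is only a change of basis of solutions is also a real clarification. Taken literally, the first-row entries of $S\td{\Psi}S^{-1}$ mix the two exponential behaviours, so the stated asymptotics really concern $S\td{\Psi}$. The paper's remark that ``the first column is unchanged'' passes over this point, and it does not affect $\check{L}_{\mathrm{oper}}$.

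One small correction: at infinity $G_0$ is in fact diagonal, not just lower triangular. The leading coefficients of $\td{L}$ and of its normal form are the same diagonal matrix with distinct entries, so $G_0$ commutes with it. Diagonality is what guarantees that $G_{11}-G_{21}$ and $G_{12}-G_{22}$ have nonzero leading terms; mere lower-triangularity would not.
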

\begin{proof}The asymptotic expansions of $\check{\Psi}(\lambda)$ are proved  in \autoref{AppendixAsymptPsi}. The formula for $\check{L}_{\text{oper}}(\lambda)$ then follows from the fact that
\bea \label{LInTermsOfTdL} 
\left[\check{L}_{\text{oper}}(\lambda)\right]_{1,1}&=&0, \qquad \left[\check{L}_{\text{oper}}(\lambda)\right]_{1,2}=1, \cr
\left[\check{L}_{\text{oper}}(\lambda)\right]_{2,1}&=&-\det \check{L}(\lambda)+\partial_\lambda\check{L}_{1,1}(\lambda)-\check{L}_{1,1}(\lambda)\frac{\partial_\lambda\check{L}_{1,2}(\lambda)}{\check{L}_{1,2}(\lambda)},\cr
\left[\check{L}_{\text{oper}}(\lambda)\right]_{2,2}&=&\frac{\partial_\lambda\check{L}_{1,2}(\lambda)}{\check{L}_{1,2}(\lambda)}
\eea  
and the fact that $\check{\Psi}_{1,j}(\lambda)=\left[\check{\Psi}_{\text{oper}}(\lambda)\right]_{1,j}$ for all $j \in \{1,2\}$. We refer to \autoref{AppendixAsymptPsi} for details.
\end{proof}
We will now explicitly construct the form of the auxiliary matrix in order to solve the compatibility condition and obtain the Hamiltonian structure. 

\subsection{The auxiliary matrix $\check{A}_{\text{oper},\boldsymbol{\alpha}}(\lambda)$}
Using compatibility equations, one may easily obtain two of the entries of $\check{A}_{\text{oper},\boldsymbol{\alpha}}(\lambda)$. Indeed, since $\check{L}_{\text{oper}}(\lambda)$ is a companion-like matrix, compatibility equations imply that
\begin{align}
    \label{TrivialEntriesA}\left[\check{A}_{\text{oper},\boldsymbol{\alpha}}(\lambda)\right]_{2,1}=& \partial_{\lambda} \left[\check{A}_{\text{oper},\boldsymbol{\alpha}}(\lambda)\right]_{1,1}+\left[\check{A}_{\text{oper},\boldsymbol{\alpha}}(\lambda)\right]_{1,2}\left[\check{L}_{\text{oper}}(\lambda)\right]_{2,1}, \\
\left[\check{A}_{\text{oper},\boldsymbol{\alpha}}(\lambda)\right]_{2,2}=& \partial_{\lambda} \left[\check{A}_{\text{oper},\boldsymbol{\alpha}}(\lambda)\right]_{1,2}+\left[\check{A}_{\text{oper},\boldsymbol{\alpha}}(\lambda)\right]_{1,1}+\left[\check{A}_{\text{oper},\boldsymbol{\alpha}}(\lambda)\right]_{1,2}\left[\check{L}_{\text{oper}}(\lambda)\right]_{2,2}, \nonumber
\end{align}
so that only the first line of $\check{A}_{\text{oper},\boldsymbol{\alpha}}(\lambda)$ remains unknown at this stage. The other two entries of the compatibility equation lead to
\begin{align}
     \label{Compat}\mathcal{L}_{\boldsymbol{\alpha}}\left[\left[\check{L}_{\text{oper}}(\lambda)\right]_{2,1}\right]=& \frac{\partial^2 \left[\check{A}_{\text{oper},\boldsymbol{\alpha}}(\lambda)\right]_{1,1}}{\partial \lambda^2} +2 \left[\check{L}_{\text{oper}}(\lambda)\right]_{2,1}\, \partial_\lambda \left[\check{A}_{\text{oper},\boldsymbol{\alpha}}(\lambda)\right]_{1,2}\cr&+ \left[\check{A}_{\text{oper},\boldsymbol{\alpha}}(\lambda)\right]_{1,2} \, \partial_\lambda \left[\check{L}_{\text{oper}}(\lambda)\right]_{2,1}- \left[\check{L}_{\text{oper}}(\lambda)\right]_{2,2}\, \partial_{\lambda} \left[\check{A}_{\text{oper},\boldsymbol{\alpha}}(\lambda)\right]_{1,1},\cr
\mathcal{L}_{\boldsymbol{\alpha}}\left[\left[\check{L}_{\text{oper}}(\lambda)\right]_{2,2}\right]=&\frac{\partial^2 \left[\check{A}_{\text{oper},\boldsymbol{\alpha}}(\lambda)\right]_{1,2}}{\partial \lambda^2} +2 \partial_\lambda \left[\check{A}_{\text{oper},\boldsymbol{\alpha}}(\lambda)\right]_{1,1} \\
&+ \left[\check{L}_{\text{oper}}(\lambda)\right]_{2,2}\,\partial_\lambda \left[\check{A}_{\text{oper},\boldsymbol{\alpha}}(\lambda)\right]_{1,2}+ \left[\check{A}_{\text{oper},\boldsymbol{\alpha}}(\lambda)\right]_{1,2}\,\partial_{\lambda}\left[\check{L}_{\text{oper}}(\lambda)\right]_{2,2}. \nonumber
\end{align}
Then, let us obtain the asymptotic expansions at the poles of the first line of $\check{A}_{\mathrm{oper},\boldsymbol{\alpha}}(\lambda)$.
\begin{proposition}\label{PropAsymptoticExpansionA12} The asymptotic expansions of entry $\left[\check{A}_{\mathrm{oper},\boldsymbol{\alpha}}(\lambda)\right]_{1,2}$ are given by
\begin{align}
    \left[\check{A}_{\mathrm{oper},\boldsymbol{\alpha}}(\lambda)\right]_{1,2}\,\,\overset{\lambda\to \infty}{=} \,\,\,\sum_{i=0}^{2d-1} \frac{\nu^{(\boldsymbol{\alpha})}_{\infty,i}}{\lambda^i} +O\left(\lambda^{-2d}\right), \qquad
\left[\check{A}_{\mathrm{oper},\boldsymbol{\alpha}}(\lambda)\right]_{1,2} \, \,\overset{\lambda\to 0}{=} \,\,O(\lambda).
\end{align}
Coefficients $\left(\nu^{(\boldsymbol{\alpha})}_{\infty,k}\right)_{0\leq k\leq 2d-1}$ are determined by 
\beq \label{RelationNuAlphaInfty} M_\infty(\mathbf{t})\begin{pmatrix} \nu^{(\boldsymbol{\alpha})}_{\infty,0}\\ \vdots \\ \nu^{(\boldsymbol{\alpha})}_{\infty,2d-1}\end{pmatrix}=\begin{pmatrix}\frac{\alpha_{\infty,2d}}{2d}\\ \vdots \\ \frac{\alpha_{\infty,1}}{1}\end{pmatrix},\eeq
where the matrix $M_\infty(\mathbf{t})$ is given in \autoref{PropHamOper}. 

\end{proposition}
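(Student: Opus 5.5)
The plan is to read off the entry $\left[\check{A}_{\mathrm{oper},\boldsymbol{\alpha}}(\lambda)\right]_{1,2}$ from the first row of the auxiliary equation $\mathcal{L}_{\boldsymbol{\alpha}}[\check{\Psi}_{\mathrm{oper}}]=\check{A}_{\mathrm{oper},\boldsymbol{\alpha}}\check{\Psi}_{\mathrm{oper}}$. The entries of the first row of $\check{\Psi}_{\mathrm{oper}}$ are $\check{\Psi}_{1,j}$. Since $\check{L}_{\mathrm{oper}}$ is companion-like, the second row is $\partial_\lambda$ of the first. The first row therefore gives, for $j\in\{1,2\}$,
\beq
\mathcal{L}_{\boldsymbol{\alpha}}[\check{\Psi}_{1,j}]=\left[\check{A}_{\mathrm{oper},\boldsymbol{\alpha}}\right]_{1,1}\check{\Psi}_{1,j}+\left[\check{A}_{\mathrm{oper},\boldsymbol{\alpha}}\right]_{1,2}\partial_\lambda\check{\Psi}_{1,j}.
\eeq
Dividing by $\check{\Psi}_{1,j}$ and subtracting the two equations eliminates $\left[\check{A}_{\mathrm{oper},\boldsymbol{\alpha}}\right]_{1,1}$. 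This yields
\beq
\left[\check{A}_{\mathrm{oper},\boldsymbol{\alpha}}\right]_{1,2}=\frac{\mathcal{L}_{\boldsymbol{\alpha}}\left[\ln\check{\Psi}_{1,1}-\ln\check{\Psi}_{1,2}\right]}{\partial_\lambda\left[\ln\check{\Psi}_{1,1}-\ln\check{\Psi}_{1,2}\right]}.
\eeq
Both numerator and denominator can then be expanded at each pole using the asymptotics of \autoref{PropLopersym}.

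At $\lambda\to\infty$, set $T(\lambda)=\sum_{k=1}^{2d+1}\frac{t_{\infty,k}}{k}\lambda^k$. The difference of logarithms is $-2T(\lambda)-2t_{\infty,0}\ln\lambda+(A_{\infty,0}-B_{\infty,0})+O(\lambda^{-1})$. Since $t_{\infty,0}$ is constant, the numerator is $-2\sum_{k=1}^{2d}\frac{\alpha_{\infty,k}}{k}\lambda^k+O(1)$. The denominator is $-2\sum_{k=1}^{2d+1}t_{\infty,k}\lambda^{k-1}+O(\lambda^{-1})$.

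One caveat is that the $\mathcal{L}_{\boldsymbol{\alpha}}$-derivative of the $O(\lambda^{-1})$ and constant terms only contributes $O(1)$ to the numerator. This is what limits the expansion to $O(\lambda^{-2d})$: $O(1)/\lambda^{2d}$ gives exactly that order.

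The quotient has the form $\sum_{i\ge0}\nu_{\infty,i}\lambda^{-i}$. Multiplying out $\left(\sum_k t_{\infty,k}\lambda^{k-1}\right)\left(\sum_i\nu_{\infty,i}\lambda^{-i}\right)=\sum_{k=1}^{2d}\frac{\alpha_{\infty,k}}{k}\lambda^k+O(1)$ and matching the coefficients of $\lambda^{2d},\lambda^{2d-1},\dots,\lambda^{1}$ gives a lower-triangular Toeplitz system. Its first row is $t_{\infty,2d+1}\nu_{\infty,0}=\alpha_{\infty,2d}/(2d)$, and its last row is $\sum_{m} t_{\infty,2d+1-m}\nu_{\infty,2d-1-m}=\alpha_{\infty,1}$. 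This is precisely \eqref{RelationNuAlphaInfty} with the matrix $M_\infty(\mathbf{t})$ of \eqref{MatrixMInfty}. The main bookkeeping step is checking the indexing against \eqref{MatrixMInfty}. The time $t_{\infty,2d+1}$ is not varied, so no $\lambda^{2d+1}$ term appears in the numerator.

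At $\lambda\to0$, the difference of logarithms is $(2t_{0,0}-1)\ln\lambda+(A_{0,0}-B_{0,0})+O(\lambda)$. Its $\lambda$-derivative is $(2t_{0,0}-1)\lambda^{-1}+O(1)$. The pole $X_0=0$ is fixed by \autoref{NormalizationMobius} and $t_{0,0}$ is a monodromy, so $\mathcal{L}_{\boldsymbol{\alpha}}$ kills the logarithmic term. The numerator is therefore $O(1)$, and the quotient is $O(\lambda)$.

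The main obstacle is justifying that the numerator is $O(1)$ and not merely $O(1)+O(\ln\lambda)$ at the poles. This needs $\mathcal{L}_{\boldsymbol{\alpha}}$ to commute with the asymptotic expansions, which holds termwise since $A_{\infty,0},B_{\infty,0},A_{0,0},B_{0,0}$ depend only on times. It also needs generic $t_{0,0}\neq\frac12$ so that the denominator at $0$ is nonzero. I would invoke the standard isomonodromic setting, and the details in \autoref{AppendixAsymptPsi}, for the termwise differentiability.
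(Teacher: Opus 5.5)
Your proposal is correct and follows essentially the same route as the paper. The paper also writes $\left[\check{A}_{\mathrm{oper},\boldsymbol{\alpha}}(\lambda)\right]_{1,2}=\frac{\check{Z}_{\boldsymbol{\alpha},2}(\lambda)-\check{Z}_{\boldsymbol{\alpha},1}(\lambda)}{\check{Y}_2(\lambda)-\check{Y}_1(\lambda)}$, which is your ratio of $\mathcal{L}_{\boldsymbol{\alpha}}$- and $\partial_\lambda$-logarithmic derivatives; it then inserts the asymptotics of $\check{\Psi}_{1,j}(\lambda)$ at $\infty$ and $0$ to get the lower-triangular Toeplitz recursion for the $\nu^{(\boldsymbol{\alpha})}_{\infty,i}$ and the $O(\lambda)$ behaviour at zero, with your explicit first-row derivation and your remark that $t_{0,0}\neq\frac{1}{2}$ is needed at $\lambda=0$ making explicit what the paper leaves implicit.
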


\begin{proof}The proof is given in Appendix \ref{AppendixExpansionA}.
\end{proof}

\begin{remark}Note that the coefficients $\left(\nu^{(\boldsymbol{\alpha})}_{\infty,k}\right)_{0\leq k\leq 2d-1}$ are the same as in \autoref{Sec22}. These quantities are thus independent of the gauge choice or the choice of Darboux coordinates.
\end{remark}

The previous proposition can be used to determine the general form of the entry $\left[\check{A}_{\text{oper},\boldsymbol{\alpha}}(\lambda)\right]_{1,2}$.

\begin{proposition}\label{PropA12Form} Entry $\left[\check{A}_{\mathrm{oper},\boldsymbol{\alpha}}(\lambda)\right]_{1,2}$ is given by
\beq \label{ExpressionA12} \left[\check{A}_{\mathrm{oper},\boldsymbol{\alpha}}(\lambda)\right]_{1,2}=\nu^{(\boldsymbol{\alpha})}_{\infty,0} + \sum_{j=1}^{2d} \frac{\check{\mu}^{(\boldsymbol{\alpha})}_j}{\lambda-q_j}.\eeq
Coefficients $\left(\check{\mu}^{(\boldsymbol{\alpha})}_j\right)_{1\leq j\leq 2d}$ are determined by the $(2d)\times (2d)$ linear system
\beq \label{RelationNuMuMatrixForm} \begin{pmatrix}
\frac{1}{q_1}&\frac{1}{q_2}&\dots&\dots& \frac{1}{q_{2d}}\\
1&1 &\dots &\dots &1\\
q_1& q_2&\dots &\dots& q_{2d}\\
\vdots & & & & \vdots\\
\vdots & & & & \vdots\\
q_1^{2d-2}& q_2^{2d-2} &\dots & \dots& q_{2d}^{2d-2}
\end{pmatrix}\begin{pmatrix} \check{\mu}^{(\boldsymbol{\alpha})}_1\\ \vdots\\\vdots\\ \check{\mu}^{(\boldsymbol{\alpha})}_{2d}\end{pmatrix}\coloneqq \mathbf{V}(\mathbf{q}) \begin{pmatrix} \check{\mu}^{(\boldsymbol{\alpha})}_1\\ \vdots\\\vdots\\ \check{\mu}^{(\boldsymbol{\alpha})}_{2d}\end{pmatrix}=\boldsymbol{\nu}^{(\boldsymbol{\alpha})}\coloneqq\begin{pmatrix}\nu^{(\boldsymbol{\alpha})}_{\infty,0}\\ \nu^{(\boldsymbol{\alpha})}_{\infty,1}
\\\vdots \\\vdots \\ \nu^{(\boldsymbol{\alpha})}_{\infty,2d-1}\end{pmatrix}. \eeq
\end{proposition}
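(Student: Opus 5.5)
The plan is to determine $\left[\check{A}_{\mathrm{oper},\boldsymbol{\alpha}}(\lambda)\right]_{1,2}$ from its pole structure, then to match the asymptotic data of \autoref{PropAsymptoticExpansionA12}. First, locate the possible singularities. Since $\check{A}_{\boldsymbol{\alpha}}(\lambda)$ is rational with poles only at $0$ and $\infty$, and the gauge matrix $G(\lambda)=\begin{pmatrix}1&0\\ \check{L}_{1,1}&\check{L}_{1,2}\end{pmatrix}$ has inverse $\begin{pmatrix}1&0\\ -\check{L}_{1,1}/\check{L}_{1,2}&1/\check{L}_{1,2}\end{pmatrix}$, the transformed matrix $\check{A}_{\mathrm{oper},\boldsymbol{\alpha}}=(\mathcal{L}_{\boldsymbol{\alpha}}G)G^{-1}+G\check{A}_{\boldsymbol{\alpha}}G^{-1}$ has a first row given by
\[
\left[\check{A}_{\mathrm{oper},\boldsymbol{\alpha}}\right]_{1,2}=\frac{\left[\check{A}_{\boldsymbol{\alpha}}\right]_{1,2}}{\check{L}_{1,2}} .
\]
The first row of $G$ is $\lambda$-independent and constant, so $\mathcal{L}_{\boldsymbol{\alpha}}G$ has a vanishing first row. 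Hence $\left[\check{A}_{\mathrm{oper},\boldsymbol{\alpha}}\right]_{1,2}$ is rational in $\lambda$. Its only possible finite poles are at $\lambda=0$ and at the zeros $q_1,\dots,q_{2d}$ of $\check{L}_{1,2}$.

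Next, I would show that the poles at the $q_j$ are at most simple. The zeros $q_j$ are generically distinct, since $\check{L}_{1,2}=-t_{\infty,2d+1}\prod(\lambda-q_i)$ by \autoref{ChoiceRepresentative}, and $\left[\check{A}_{\boldsymbol{\alpha}}\right]_{1,2}$ is regular there. Therefore the pole at each $q_j$ is at most simple, with residue $\check{\mu}^{(\boldsymbol{\alpha})}_j\coloneqq\left[\check{A}_{\boldsymbol{\alpha}}\right]_{1,2}(q_j)/\check{L}_{1,2}'(q_j)$. By \autoref{PropAsymptoticExpansionA12}, the function vanishes at $0$, so there is no pole there. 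Also by \autoref{PropAsymptoticExpansionA12}, it is bounded at infinity, tending to $\nu^{(\boldsymbol{\alpha})}_{\infty,0}$. Liouville's theorem applied to the function minus its principal parts then gives exactly \eqref{ExpressionA12}.

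Finally, I would derive the linear system. Expanding each term $\frac{1}{\lambda-q_j}=\sum_{i\ge1}q_j^{i-1}\lambda^{-i}$ at infinity and comparing with $\sum_{i=0}^{2d-1}\nu^{(\boldsymbol{\alpha})}_{\infty,i}\lambda^{-i}$, the coefficient of $\lambda^{-i}$ for $1\le i\le 2d-1$ gives $\sum_j q_j^{i-1}\check{\mu}_j^{(\boldsymbol{\alpha})}=\nu^{(\boldsymbol{\alpha})}_{\infty,i}$. These are rows $2$ through $2d$ of \eqref{RelationNuMuMatrixForm}. The remaining condition comes from vanishing at $\lambda=0$:
\[
\nu^{(\boldsymbol{\alpha})}_{\infty,0}-\sum_j\frac{\check{\mu}^{(\boldsymbol{\alpha})}_j}{q_j}=0,
\]
which is the first row. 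Strictly, the $O(\lambda)$ condition at $0$ is exactly this single equation, because a simple zero is the same as vanishing at $0$ once regularity is known. The result is $2d$ equations for $2d$ unknowns. The matrix $\mathbf{V}(\mathbf{q})$ equals the Vandermonde matrix in $(q_j)$, with columns scaled by $1/q_j$, times a cyclic row shift. It is therefore invertible for distinct nonzero $q_j$, so the $\check{\mu}$ are uniquely determined.

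The main obstacle is the justification of the pole structure in the first step, namely that no extra singularities arise from the gauge. In particular, one must check that $\mathcal{L}_{\boldsymbol{\alpha}}G\,G^{-1}$ does not contaminate the $(1,2)$ entry, and that the $O(\lambda)$ behaviour at $0$ from \autoref{PropAsymptoticExpansionA12} is compatible with $\check{L}_{1,2}(0)\neq0$. The rest is routine coefficient matching.
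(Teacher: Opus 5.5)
Your proposal is correct and follows essentially the same route as the paper. The partial-fraction form is read off from the pole structure together with the local data of Proposition~\ref{PropAsymptoticExpansionA12}; the coefficients of $\lambda^{-k}$ at infinity give rows $2$ through $2d$ of $\mathbf{V}(\mathbf{q})$, and vanishing at $\lambda=0$ gives the first row. You also supply justifications the paper leaves implicit: the poles at the $q_j$ are simple because $[\check{A}_{\mathrm{oper},\boldsymbol{\alpha}}]_{1,2}=[\check{A}_{\boldsymbol{\alpha}}]_{1,2}/\check{L}_{1,2}$, and $\mathbf{V}(\mathbf{q})$ is invertible. One small correction: $\mathbf{V}(\mathbf{q})$ is exactly the Vandermonde matrix $(q_j^{i-1})_{1\le i,j\le 2d}$ times $\diag(1/q_1,\dots,1/q_{2d})$, so no row shift is needed.
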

\begin{proof}The proof is given in Appendix \ref{ProofEntryA12}.
\end{proof}
Finally, one obtains the general form of entry $\left[\check{A}_{\text{oper},\boldsymbol{\alpha}}(\lambda)\right]_{1,1}$.

\begin{proposition}\label{Propcalpha}The entry $\left[\check{A}_{\mathrm{oper},\boldsymbol{\alpha}}(\lambda)\right]_{1,1}$ is given by
\begin{align}
     \label{ExpressionA11} \left[\check{A}_{\mathrm{oper},\boldsymbol{\alpha}}(\lambda)\right]_{1,1}=
     \sum_{j=1}^{2d}\frac{\check{\rho}^{(\boldsymbol{\alpha})}_j}{\lambda-q_j} \qquad \text{with} \qquad \forall\, j\in \llbracket 1,2d\rrbracket \,:\, \check{\rho}^{(\boldsymbol{\alpha})}_j=-p_j\check{\mu}^{(\boldsymbol{\alpha})}_j.
\end{align}
\end{proposition}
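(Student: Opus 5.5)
The plan is to follow the same strategy as in \cite{marchal2024hamiltonianrepresentationisomonodromicdeformations}: first pin down the pole structure of $\left[\check{A}_{\mathrm{oper},\boldsymbol{\alpha}}(\lambda)\right]_{1,1}$, and then fix its residues by a local analysis of the compatibility equations \eqref{Compat} at each apparent singularity $q_j$.

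\textbf{Step 1: location of the poles.} Gauge back from the geometric gauge. The matrix $\check{A}_{\boldsymbol{\alpha}}(\lambda)$ is a polynomial in $\lambda$ plus a simple pole at $0$, since its poles are bounded by those of $\check{L}$. The gauge matrix $G(\lambda)=\begin{pmatrix}1&0\\ \check{L}_{1,1}&\check{L}_{1,2}\end{pmatrix}$ gives
\[
\check{A}_{\mathrm{oper},\boldsymbol{\alpha}}=G\check{A}_{\boldsymbol{\alpha}}G^{-1}+\mathcal{L}_{\boldsymbol{\alpha}}[G]\,G^{-1}.
\]
Since $G^{-1}$ has first column $(1,-\check{L}_{1,1}/\check{L}_{1,2})^T$, we obtain
\[
\left[\check{A}_{\mathrm{oper},\boldsymbol{\alpha}}\right]_{1,1}=\check{A}_{1,1}-\check{A}_{1,2}\,\frac{\check{L}_{1,1}}{\check{L}_{1,2}} .
\]
Consequently:
\begin{itemize}
\item at the $q_j$ this entry has at most simple poles;
\item at $\lambda=0$ the simple poles of $\check{A}_{1,1}$ and of $\check{L}_{1,1}$ must combine. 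I expect the normalization \eqref{NormcheckL11}, together with the analogous normalization of $\check{A}$ at $0$ coming from the asymptotics of $\check{\Psi}$ in \autoref{PropLopersym}, to kill the residue at $0$;
\item at infinity the entry decays.
\end{itemize}

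\textbf{Step 2: regularity at $0$ and decay at infinity.} A cleaner route is to work directly in the oper gauge. We have $\left[\check{A}_{\mathrm{oper}}\right]_{1,1}=\mathcal{L}_{\boldsymbol{\alpha}}[\check{\Psi}_{1,1}]\check{\Psi}^{-1}$ restricted to the first row. The asymptotics of \autoref{PropLopersym} then show the following.
\begin{itemize}
\item At $\lambda\to 0$, $\mathcal{L}_{\boldsymbol{\alpha}}\ln\check{\Psi}_{1,j}=O(1)$, because $t_{0,0}$ is not deformed and $X_0=0$ is fixed. Hence the first row of $\check{A}_{\mathrm{oper}}$ is regular there, consistent with $\left[\check{A}_{\mathrm{oper}}\right]_{1,2}=O(\lambda)$ in \autoref{PropAsymptoticExpansionA12}.
\item At infinity, $\mathcal{L}_{\boldsymbol{\alpha}}\check{\Psi}_{1,j}$ equals $\mp\sum_k\frac{\alpha_{\infty,k}}{k}\lambda^k\check{\Psi}_{1,j}$ plus a term coming from $A_{\infty,0}$ and $B_{\infty,0}$.
\end{itemize}
Writing the first row relation $\mathcal{L}_{\boldsymbol{\alpha}}\check{\Psi}_{1,\cdot}=a_{11}\check{\Psi}_{1,\cdot}+a_{12}\partial_\lambda\check{\Psi}_{1,\cdot}$ and comparing with the leading exponential behaviour, the polynomial part of $a_{11}$ is forced to vanish, given the matching of $a_{12}$ already encoded in \autoref{PropAsymptoticExpansionA12}. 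It is also fine to normalize the constant term to zero, since the deformations preserve the normalization of $\check{\Psi}_{1,1}$ at infinity up to $O(\lambda^{-1})$. Hence $\left[\check{A}_{\mathrm{oper}}\right]_{1,1}=O(\lambda^{-1})$, which gives the partial fraction form $\sum_j\check{\rho}_j/(\lambda-q_j)$.

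\textbf{Step 3: residues at the apparent singularities.} Expand the second equation of \eqref{Compat} near $\lambda=q_j$. Here $\left[\check{L}_{\mathrm{oper}}\right]_{2,2}=\frac{1}{\lambda-q_j}+O(1)$ and $\mathcal{L}_{\boldsymbol{\alpha}}\left[\check{L}_{\mathrm{oper}}\right]_{2,2}=\frac{\mathcal{L}_{\boldsymbol{\alpha}}q_j}{(\lambda-q_j)^2}+\dots$, while $\left[\check{A}_{\mathrm{oper}}\right]_{1,2}=\frac{\check{\mu}_j}{\lambda-q_j}+c_j+O(\lambda-q_j)$. Matching the $(\lambda-q_j)^{-3}$ terms is automatic. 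The $(\lambda-q_j)^{-2}$ terms give $\mathcal{L}_{\boldsymbol{\alpha}}q_j=-2\check{\rho}_j+\dots$ mixed with the constant $c_j$, which is not yet enough to conclude. I will therefore also use the first equation of \eqref{Compat} near $q_j$, where $\left[\check{L}_{\mathrm{oper}}\right]_{2,1}=-\frac{p_j}{\lambda-q_j}+O(1)$. Its most singular order $(\lambda-q_j)^{-3}$ gives
\[
2\check{\rho}_j-2p_j(-\check{\mu}_j)\cdot\tfrac{1}{2}\ \text{(schematically)}=\dots,
\]
and after collecting terms this should reduce to $\check{\rho}_j(2-1)=-p_j\check{\mu}_j$, that is, $\check{\rho}_j=-p_j\check{\mu}_j$.

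The main obstacle is this bookkeeping of the cubic-order terms. It is the standard apparent-singularity computation, and it requires that the LHS $\mathcal{L}_{\boldsymbol{\alpha}}\left[\check{L}_{\mathrm{oper}}\right]_{2,1}$ has at most double poles at $q_j$, so that the cubic coefficient yields a pure algebraic identity. The secondary difficulty is rigorously excluding the constant term at infinity in Step 2.
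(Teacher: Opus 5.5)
Your route is the paper's. The paper also gets the pole structure of $a_{11}\coloneqq[\check A_{\mathrm{oper},\boldsymbol\alpha}(\lambda)]_{1,1}$ from $a_{11}=(\check Z_{\boldsymbol\alpha,1}\check Y_2-\check Z_{\boldsymbol\alpha,2}\check Y_1)/(\check Y_2-\check Y_1)$ and the asymptotics of $\check\Psi$, so your Step 1 detour through the geometric gauge is unnecessary. It then reads off $\check\rho_j$ at order $(\lambda-q_j)^{-3}$ of the first equation in \eqref{Compat}. However, the two steps you leave loose are exactly the ones with content, and the justifications you give for them do not hold.

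\textbf{Constant term at infinity.} The polynomial parts do cancel, leaving $a_{11}=\tfrac12\mathcal{L}_{\boldsymbol\alpha}[A_{\infty,0}+B_{\infty,0}]+O(\lambda^{-1})$. This constant is not a free normalization: shifting it means adding a scalar to the traceless $\check A_{\boldsymbol\alpha}$. Your reason, that the flow preserves the normalization of $\check\Psi_{1,1}$, is false in general. Matching $\lambda^0$ in $\check Z_{\boldsymbol\alpha,1}=a_{11}+a_{12}\check Y_1$ gives $\mathcal{L}_{\boldsymbol\alpha}[A_{\infty,0}]=-\sum_{i=0}^{2d}t_{\infty,i+1}c_i$, where $c_i$ is the $\lambda^{-i}$ coefficient of $a_{12}\coloneqq[\check A_{\mathrm{oper},\boldsymbol\alpha}]_{1,2}$. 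For $d=1$ and the $t_{\infty,1}$-flow this equals $Q_{\infty,1}-t_{\infty,2}/t_{\infty,3}$, which is not zero. Only the sum $A_{\infty,0}+B_{\infty,0}$ is frozen, because the Wronskian $\check\Psi_{1,1}\partial_\lambda\check\Psi_{1,2}-\check\Psi_{1,2}\partial_\lambda\check\Psi_{1,1}=\check L_{1,2}$ has fixed leading coefficient $-t_{\infty,2d+1}$. The clean way to use this is the trace identity
\[
2a_{11}+\partial_\lambda a_{12}+a_{12}\,[\check L_{\mathrm{oper}}]_{2,2}=\mathrm{tr}\,\check A_{\mathrm{oper},\boldsymbol\alpha}=\mathcal{L}_{\boldsymbol\alpha}\ln\det\check\Psi_{\mathrm{oper}}=\mathcal{L}_{\boldsymbol\alpha}\ln\check L_{1,2}(\lambda)=-\sum_{j=1}^{2d}\frac{\mathcal{L}_{\boldsymbol\alpha}[q_j]}{\lambda-q_j}.
\]
The first equality is \eqref{TrivialEntriesA}; the others use $\det\check\Psi=1$ and that $t_{\infty,2d+1}$ is fixed. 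Combined with \autoref{PropAsymptoticExpansionA12}, this gives at once $a_{11}=O(\lambda^{-1})$ at infinity and $a_{11}=O(1)$ at $0$. It is precisely the integration constant that the $[\check L_{\mathrm{oper}}]_{2,2}$ equation of \eqref{Compat}, which is its $\lambda$-derivative, cannot see.

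\textbf{Residues at $q_j$.} Your cubic-order count is wrong. Write the local expansions
\[
a_{11}=\frac{\check\rho_j}{\lambda-q_j}+O(1),\quad a_{12}=\frac{\check\mu_j}{\lambda-q_j}+O(1),\quad [\check L_{\mathrm{oper}}]_{2,1}=-\frac{p_j}{\lambda-q_j}+O(1),\quad [\check L_{\mathrm{oper}}]_{2,2}=\frac{1}{\lambda-q_j}+O(1).
\]
The $(\lambda-q_j)^{-3}$ coefficients on the right of the first equation of \eqref{Compat} are then:
\[
2\check\rho_j\ \text{from}\ \partial_\lambda^2a_{11},\qquad +\check\rho_j\ \text{from}\ -[\check L_{\mathrm{oper}}]_{2,2}\partial_\lambda a_{11},\qquad 2p_j\check\mu_j\ \text{from}\ 2[\check L_{\mathrm{oper}}]_{2,1}\partial_\lambda a_{12},\qquad p_j\check\mu_j\ \text{from}\ a_{12}\partial_\lambda[\check L_{\mathrm{oper}}]_{2,1}.
\]
The left side has at most a double pole, hence $3(\check\rho_j+p_j\check\mu_j)=0$. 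Your ``$\check\rho_j(2-1)$'' has the wrong sign for the $[\check L_{\mathrm{oper}}]_{2,2}$ term and is inconsistent with your own schematic line, so it reaches the right answer only by accident. With these two repairs the argument is complete.
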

\begin{proof}The proof is given in Appendix \ref{AppendixA11}.
\end{proof}

\begin{remark}Note that the matrix $\left[\check{A}_{\text{oper},\boldsymbol{\alpha}}(\lambda)\right]$ has the same form as $\left[\td{A}_{\text{oper},\boldsymbol{\alpha}}(\lambda)\right]$ except that the Darboux coordinates $\mathbf{q}$ are used in $\mathbf{V}(\mathbf{q})$ instead of the Darboux coordinates $\mathbf{u}$. All other formulas and the general structure, including $\left(\nu_{\infty,k}^{(\boldsymbol{\alpha})}\right)_{0\leq k\leq 2d-1}$ are identical.
\end{remark}

\subsection{Compatibility equations and Hamiltonians for $(\mathbf{q},\mathbf{p})$}
The previous sections provide the general form of the Lax matrices through  \autoref{PropLopersym}, \autoref{PropAsymptoticExpansionA12}, \autoref{PropA12Form}, \autoref{Propcalpha} and \eqref{TrivialEntriesA}. We will see below that inserting this previous knowledge into the compatibility equations \eqref{Compat} provides the evolutions of the Darboux coordinates $(\mathbf{q},\mathbf{p})$. The first step is to look at order $(\lambda-q_j)^{-2}$ in $\mathcal{L}_{\boldsymbol{\alpha}}\left[\left[\check{L}_{\text{oper}}(\lambda)\right]_{2,2}\right]$. We obtain, for all $j\in \llbracket 1, 2d\rrbracket$:
\beq \label{Lqj}\mathcal{L}_{\boldsymbol{\alpha}}[q_j]=2\check{\mu}^{(\boldsymbol{\alpha})}_jp_j
- \nu^{(\boldsymbol{\alpha})}_{\infty,0} - \sum_{1\leq i\neq j\leq 2d}\frac{\check{\mu}^{(\boldsymbol{\alpha})}_j+\check{\mu}^{(\boldsymbol{\alpha})}_i}{q_j-q_i}.\eeq
Note in particular that because $\left[\check{L}_{\text{oper}}(\lambda)\right]_{2,2}$ is regular at $\lambda=0$, we do not have a term $\frac{\check{\mu}^{(\boldsymbol{\alpha})}_j}{q_j}$.

The second step is then to determine the coefficients $\left(\check{H}_{\infty,j}\right)_{0\leq j\leq 2d-2}$ and $\check{H}_{0,1}$ that remain unknown in $\left[\check{L}_{\text{oper}}(\lambda)\right]_{2,1}$. To achieve this, we look at order $(\lambda-q_j)^{-2}$ in $\mathcal{L}_{\boldsymbol{\alpha}}\left[\left[\check{L}_{\text{oper}}(\lambda)\right]_{2,1}\right]$ using \eqref{Compat}. We obtain

\bea - p_j \mathcal{L}_{\boldsymbol{\alpha}}[q_j]&=&-2 \check{\mu}^{(\boldsymbol{\alpha})}_j\left(
-\check{P}_2(q_j) +\sum_{k=0}^{2d-2}\check{H}_{\infty,k}q_j^k+ \frac{\check{H}_{0,1}}{q_j} -\sum_{1\leq i\neq j\leq 2d} \frac{p_i}{q_j-q_i}\right)\cr
&&+p_j\left(\nu^{(\boldsymbol{\alpha})}_{\infty,0}+\sum_{1\leq i\neq j\leq 2d} \frac{\check{\mu}^{(\boldsymbol{\alpha})}_i}{q_j-q_i}\right)- \check{\mu}^{(\boldsymbol{\alpha})}_j p_j\left(
\sum_{1\leq i\neq j\leq 2d} \frac{1}{q_j-q_i}\right).
\eea
Inserting \eqref{Lqj} provides:
\beq \label{DefCi}\forall j\in \llbracket 1,2d\rrbracket\,:\, \sum_{k=0}^{2d-2}\check{H}_{\infty,k}q_j^k+\frac{\check{H}_{0,1}}{q_j}=p_j^2
+\check{P}_2(q_j)+ \sum_{1\leq i\neq j\leq 2d}\frac{p_i-p_j}{q_j-q_i},
\eeq
where it is obvious that the r.h.s. is independent of the deformation vector $\boldsymbol{\alpha}$. The last relation can be rewritten into a matrix form.

\begin{lemma}\label{PropDefCi2} We have 
\beq \label{DefCi2}
\begin{pmatrix}\frac{1}{q_1}& 1&q_1& \dots &q_1^{2d-2}\\
\frac{1}{q_2}& 1&q_2& \dots &q_2^{2d-2}\\
\vdots& \vdots&\vdots&&\vdots\\
\frac{1}{q_{2d}}& 1&q_{2d}& \dots &q_{2d}^{2d-2}\end{pmatrix}
\begin{pmatrix}\check{H}_{0,1}\\ \check{H}_{\infty,0}\\ \vdots\\ \check{H}_{\infty,2d-2}\end{pmatrix}=\begin{pmatrix} p_1^2 +\check{P}_2(q_1)+\underset{1\leq i\neq 1\leq 2d}{\sum}\frac{p_i-p_1}{q_1-q_i}\\
p_2^2 +\check{P}_2(q_2)+\underset{1\leq i\neq 2\leq 2d}{\sum}\frac{p_i-p_2}{q_2-q_i}\\
\vdots\\
p_{2d}^2+\check{P}_2(q_{2d})+\underset{1\leq i\neq 2d\leq 2d}{\sum}\frac{p_i-p_{2d}}{q_{2d}-q_i}
\end{pmatrix}
\eeq
\end{lemma}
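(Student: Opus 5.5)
The lemma is just \eqref{DefCi} written row by row. My plan is therefore to make sure \eqref{DefCi} itself is justified, and then read off the matrix form.

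\textbf{Step 1: the residue computation.} Fix $j\in\llbracket 1,2d\rrbracket$. I expand the first equation of \eqref{Compat} near $\lambda=q_j$ and keep the coefficient of $(\lambda-q_j)^{-2}$.

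On the left-hand side, $\mathcal{L}_{\boldsymbol{\alpha}}\big[[\check{L}_{\mathrm{oper}}]_{2,1}\big]$ acts on $-p_j/(\lambda-q_j)$. It produces $-p_j\mathcal{L}_{\boldsymbol{\alpha}}[q_j]/(\lambda-q_j)^2$. The other terms of $[\check{L}_{\mathrm{oper}}]_{2,1}$ are the polynomial part $\sum_k\check{H}_{\infty,k}\lambda^k$, the term $\check{H}_{0,1}/\lambda$, and $\check{P}_2$. These are regular at $q_j$, provided the $q_i$ are distinct and nonzero.

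On the right-hand side I use the forms of the entries given by \autoref{PropA12Form} and \autoref{Propcalpha}:
\begin{itemize}
\item $[\check A]_{1,2}=\nu^{(\boldsymbol{\alpha})}_{\infty,0}+\sum_i\check\mu^{(\boldsymbol{\alpha})}_i/(\lambda-q_i)$;
\item $[\check A]_{1,1}=-\sum_i p_i\check\mu^{(\boldsymbol{\alpha})}_i/(\lambda-q_i)$.
\end{itemize}
These expansions produce poles of order up to four at $q_j$. The orders four and three must cancel automatically, and I would check this quickly as a consistency test. The $(\lambda-q_j)^{-2}$ coefficient then gives exactly the displayed identity preceding \eqref{DefCi}. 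In it, the value of $[\check{L}_{\mathrm{oper}}]_{2,1}+p_j/(\lambda-q_j)$ at $q_j$ appears multiplied by $-2\check\mu^{(\boldsymbol{\alpha})}_j$. This value is
\[
-\check{P}_2(q_j)+\sum_k\check{H}_{\infty,k}q_j^k+\frac{\check{H}_{0,1}}{q_j}-\sum_{i\neq j}\frac{p_i}{q_j-q_i}.
\]

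\textbf{Step 2: substituting \eqref{Lqj}.} Next I insert $\mathcal{L}_{\boldsymbol{\alpha}}[q_j]$ from \eqref{Lqj}. This evolution comes from the $(\lambda-q_j)^{-2}$ order of the second compatibility equation. After substitution:
\begin{itemize}
\item the $\nu^{(\boldsymbol{\alpha})}_{\infty,0}$ terms cancel;
\item the $\check\mu^{(\boldsymbol{\alpha})}_i$, $i\neq j$, terms cancel;
\item the term $2\check\mu^{(\boldsymbol{\alpha})}_jp_j^2$ survives;
\item the term $\check\mu^{(\boldsymbol{\alpha})}_jp_j\sum_{i\neq j}1/(q_j-q_i)$ survives and combines with $\sum_{i\ne j}p_i/(q_j-q_i)$ into $\sum_{i\neq j}(p_i-p_j)/(q_j-q_i)$.
\end{itemize}
Everything that remains is proportional to $\check\mu^{(\boldsymbol{\alpha})}_j$.

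\textbf{Step 3: dividing out $\check\mu^{(\boldsymbol{\alpha})}_j$.} Dividing by $-2\check\mu^{(\boldsymbol{\alpha})}_j$ gives \eqref{DefCi}. Division is legitimate for generic $\boldsymbol{\alpha}$: $\mathbf V(\mathbf q)$ is invertible and $\boldsymbol{\nu}^{(\boldsymbol{\alpha})}$ ranges over all of $\mathbb{C}^{2d}$ as $\boldsymbol{\alpha}$ varies, because $M_\infty(\mathbf t)$ is invertible since $t_{\infty,2d+1}\neq0$. Hence $\check\mu^{(\boldsymbol{\alpha})}_j\neq0$ on an open dense set of $\boldsymbol{\alpha}$. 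Since the resulting relation does not involve $\boldsymbol{\alpha}$, it holds identically.

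\textbf{Step 4: matrix form.} Row $j$ of \eqref{DefCi2} is $\frac{1}{q_j}\check H_{0,1}+\sum_{k=0}^{2d-2}q_j^k\check H_{\infty,k}$ on the left and the right-hand side of \eqref{DefCi} on the right. Stacking the $2d$ rows gives \eqref{DefCi2}.

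I would add a remark that the coefficient matrix is $\mathbf V(\mathbf q)^T$. Its determinant is a Vandermonde in $(0,q_1,\dots,q_{2d})$ divided by $\prod_i q_i$. So it is invertible for distinct nonzero $q_i$, and the system determines $\check{\mathbf H}$ uniquely.

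The main obstacle is the careful bookkeeping of the double-pole coefficient in Step 1, in particular the cross terms coming from $2[\check{L}_{\mathrm{oper}}]_{2,1}\,\partial_\lambda[\check A]_{1,2}$ and from $[\check{L}_{\mathrm{oper}}]_{2,2}\,\partial_\lambda[\check A]_{1,1}$. I would do this by writing each factor as a Laurent series at $q_j$ up to the needed order.
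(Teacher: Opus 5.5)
Your proposal is correct and follows the paper's own derivation: extract the $(\lambda-q_j)^{-2}$ coefficient of the $(2,1)$ compatibility equation, substitute \eqref{Lqj} to obtain \eqref{DefCi}, and stack the $2d$ rows. Your genericity argument for dividing by $\check{\mu}^{(\boldsymbol{\alpha})}_j$ usefully fills a step the paper leaves implicit.

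There are two harmless slips. First, the poles at $q_j$ on the right-hand side have order at most three, not four. Second, $\det \mathbf{V}(\mathbf{q})=\prod_{i<j}(q_j-q_i)/\prod_i q_i$, i.e.\ the Vandermonde in $(q_1,\dots,q_{2d})$, not in $(0,q_1,\dots,q_{2d})$, divided by $\prod_i q_i$. Neither affects your invertibility conclusion.
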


Finally, in order to obtain the evolution equation for $p_j$ we look at order $(\lambda-q_j)^{-1}$ of the entry $\mathcal{L}_{\boldsymbol{\alpha}}\left[\left[\check{L}_{\text{oper}}(\lambda)\right]_{2,1}\right]$. We get, for all $j\in \llbracket 1,2d\rrbracket$,
\beq \label{Lpj} \mathcal{L}_{\boldsymbol{\alpha}}[p_j]= \sum_{1\leq i\neq j\leq 2d}\frac{(\check{\mu}^{(\boldsymbol{\alpha})}_i+\check{\mu}^{(\boldsymbol{\alpha})}_j)(p_i-p_j)}{(q_j-q_i)^2} +\check{\mu}^{(\boldsymbol{\alpha})}_j\left(-\check{P}_2'(q_j)+\sum_{k=1}^{2d-2}k\check{H}_{\infty,k}q_j^{k-1}-\frac{\check{H}_{0,1}}{q_j^2}\right).
\eeq
Thus, we have obtained the general evolutions for $\left(p_j,q_j\right)_{1\leq j\leq 2d}$ through \eqref{Lqj} and \eqref{Lpj}. These evolutions are Hamiltonian because of the following theorem.

\begin{theorem}[Hamiltonians for $(\mathbf{q},\mathbf{p})$] \label{HamTheorem} The Hamiltonian for $(\mathbf{q},\mathbf{p})$ is given by
\beq \label{DefHamqp} \mathrm{Ham}^{(\boldsymbol{\alpha})}(\mathbf{q},\mathbf{p})\coloneqq\sum_{k=0}^{2d-2} \nu_{\infty,k+1}^{\boldsymbol{(\alpha)}}\check{H}_{\infty,k}+\nu_{\infty,0}^{\boldsymbol{(\alpha)}}\left(\check{H}_{0,1}-\sum_{j=1}^{2d} p_j\right)
\eeq
in the sense that 
\beq \forall\, j\in \llbracket 1,2d\rrbracket \,:\, \mathcal{L}_{\boldsymbol{\alpha}}[q_j]=\frac{\partial \mathrm{Ham}^{(\boldsymbol{\alpha})}(\mathbf{q},\mathbf{p})}{\partial p_j}\, \mathrm{ and }\, \mathcal{L}_{\boldsymbol{\alpha}}[p_j]=-\frac{\partial \mathrm{Ham}^{(\boldsymbol{\alpha})}(\mathbf{q},\mathbf{p})}{\partial q_j}.\eeq
Quantities involved in the Hamiltonians are defined by \autoref{PropAsymptoticExpansionA12} and \autoref{PropDefCi2}.
\end{theorem}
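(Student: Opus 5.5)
Our approach is to verify the two Hamilton equations directly, treating $\check{H}_{\infty,k}$ and $\check{H}_{0,1}$ as the functions of $(\mathbf{q},\mathbf{p})$ (and of the times) defined by \autoref{PropDefCi2}. The $\nu^{(\boldsymbol{\alpha})}_{\infty,k}$ depend only on $\mathbf{t}$ and $\boldsymbol{\alpha}$ through \eqref{RelationNuAlphaInfty}, so they are constants for the partial derivatives in $(q_j,p_j)$. The key device is to rewrite \eqref{DefCi} as
\[
F_j(\mathbf{q},\mathbf{p})\coloneqq \sum_{k=0}^{2d-2}\check{H}_{\infty,k}q_j^k+\frac{\check{H}_{0,1}}{q_j}-p_j^2-\check{P}_2(q_j)-\sum_{i\neq j}\frac{p_i-p_j}{q_j-q_i}=0 ,
\]
an identity in $(\mathbf{q},\mathbf{p})$ for all $j$, and to differentiate it. We then contract with the vector $\check{\boldsymbol{\mu}}^{(\boldsymbol{\alpha})}$. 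The defining system \eqref{RelationNuMuMatrixForm} says exactly
\[
\sum_j \check{\mu}_j q_j^{-1}=\nu_{\infty,0},\qquad \sum_j\check{\mu}_j q_j^{k}=\nu_{\infty,k+1}\quad (0\le k\le 2d-2).
\]
Hence for any derivation $\partial$,
\[
\sum_j \check{\mu}_j\Big(\sum_k \partial\check{H}_{\infty,k}\,q_j^k+\frac{\partial\check{H}_{0,1}}{q_j}\Big)=\sum_k\nu_{\infty,k+1}\partial\check{H}_{\infty,k}+\nu_{\infty,0}\,\partial\check{H}_{0,1}.
\]
This is precisely the variation of $\mathrm{Ham}^{(\boldsymbol{\alpha})}$, apart from the explicit term $-\nu_{\infty,0}\sum_j p_j$.

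\textbf{The $q$-equation.} Apply $\partial_{p_m}$ to $F_j=0$ and contract with $\check{\mu}_j$. For $j=m$ we have $\partial_{p_m}(\text{rhs}_m)=2p_m+\sum_{i\neq m}\frac{1}{q_m-q_i}$, and for $j\neq m$ we have $\partial_{p_m}(\text{rhs}_j)=\frac{1}{q_j-q_m}$. This gives
\[
\frac{\partial\mathrm{Ham}}{\partial p_m}=2\check{\mu}_m p_m+\check{\mu}_m\sum_{i\neq m}\frac{1}{q_m-q_i}+\sum_{j\neq m}\frac{\check{\mu}_j}{q_j-q_m}-\nu_{\infty,0}.
\]
Since $\frac{\check{\mu}_j}{q_j-q_m}=-\frac{\check{\mu}_j}{q_m-q_j}$, the two sums combine into $-\sum_{i\neq m}\frac{\check{\mu}_m+\check{\mu}_i}{q_m-q_i}$, which reproduces \eqref{Lqj} exactly.

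\textbf{The $p$-equation.} Apply $\partial_{q_m}$ in the same way. There are two kinds of terms.

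\begin{itemize}
\item For $j=m$, the explicit $q_m$-dependence of the left-hand side of $F_m$ contributes $\check{\mu}_m\big(\sum_k k\check{H}_{\infty,k}q_m^{k-1}-\check{H}_{0,1}q_m^{-2}\big)$. These are moved to the other side together with $-\check{\mu}_m\check{P}_2'(q_m)$ and $-\check{\mu}_m\partial_{q_m}\sum_{i\neq m}\frac{p_i-p_m}{q_m-q_i}=\check{\mu}_m\sum_{i\neq m}\frac{p_i-p_m}{(q_m-q_i)^2}$.
\item For $j\neq m$, we get $\check{\mu}_j\partial_{q_m}\frac{p_m-p_j}{q_j-q_m}=\check{\mu}_j\frac{p_m-p_j}{(q_j-q_m)^2}$.
\end{itemize}

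Collecting signs, $-\partial_{q_m}\mathrm{Ham}$ should equal $\sum_{i\neq m}\frac{(\check{\mu}_i+\check{\mu}_m)(p_i-p_m)}{(q_m-q_i)^2}+\check{\mu}_m(-\check{P}_2'(q_m)+\dots)$, matching \eqref{Lpj}. The term $-\nu_{\infty,0}\sum_j p_j$ has no $q$-dependence, so it does not enter here.

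\textbf{Main obstacle.} The delicate point is not the algebra but the logical status of the evolution formulas. The equations \eqref{Lqj} and \eqref{Lpj} were derived from the pole parts at $q_j$ of the compatibility equations \eqref{Compat}. We need these to be genuinely the evolutions, with $\check{\mu}_j$ determined by \eqref{RelationNuMuMatrixForm}, and not further constrained by the remaining orders at $\infty$ and $0$. We take this for granted from \autoref{PropA12Form} and \autoref{Propcalpha}, as the paper does. It then remains only to check the sign conventions in the $q_m$-derivative carefully, including $\partial_{q_m}$ of the terms $\frac{p_m-p_j}{q_j-q_m}$ for $j\ne m$. We expect that bookkeeping to be the only place where an error could creep in.
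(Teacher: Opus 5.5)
Your proof is correct, and it takes a genuinely different and somewhat shorter route than the paper. I drop the superscript $(\boldsymbol{\alpha})$ below.

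\textbf{Comparison with the paper.} The paper first contracts the right-hand sides of \eqref{DefCi} with $\check{\mu}$. This rewrites the Hamiltonian explicitly as $-\frac{1}{2}\sum_{i\neq j}\frac{(\check{\mu}_i+\check{\mu}_j)(p_i-p_j)}{q_i-q_j}-\nu_{\infty,0}\sum_j p_j+\sum_j\check{\mu}_j\left(p_j^2+\check{P}_2(q_j)\right)$, which the paper then differentiates. Since $\check{\mu}_i$ depends on $\mathbf{q}$, this produces terms $\partial_{q_j}\check{\mu}_i$. These are controlled by differentiating $\mathbf{V}(\mathbf{q})\check{\boldsymbol{\mu}}=\boldsymbol{\nu}$ (Lemma~\ref{PropsumC}) and matched against a rewritten form of \eqref{Lpj} (Proposition~\ref{PropLpjbis}).

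You instead differentiate the defining relations of the $\check{H}$'s first and contract with $\check{\mu}$ afterwards. So $\check{\mu}$ is never differentiated, and Lemma~\ref{PropsumC} is bypassed entirely. Write the system of Lemma~\ref{PropDefCi2} as $\mathbf{V}(\mathbf{q})^{\mathsf T}\check{\mathbf{H}}=\mathbf{R}$. The two routes are then related by $(\mathbf{V}\,\partial\check{\boldsymbol{\mu}})^{\mathsf T}\check{\mathbf{H}}=-\check{\boldsymbol{\mu}}^{\mathsf T}(\partial\mathbf{V})^{\mathsf T}\check{\mathbf{H}}$, which is exactly the content of that lemma. The paper's route yields the explicit Calogero-type form of the Hamiltonian (the remark after the theorem) as a byproduct. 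Yours is more economical and makes the $q$-equation and $p$-equation fully parallel.

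\textbf{A sign slip in the $q$-equation.} We have $\partial_{p_m}\sum_{i\neq m}\frac{p_i-p_m}{q_m-q_i}=-\sum_{i\neq m}\frac{1}{q_m-q_i}$. Hence $\partial_{p_m}(\text{rhs}_m)=2p_m-\sum_{i\neq m}\frac{1}{q_m-q_i}$, not $+$.

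With your sign, the two sums would combine to $\sum_{i\neq m}\frac{\check{\mu}_m-\check{\mu}_i}{q_m-q_i}$, not to what you claim. With the correct sign, they give $-\sum_{i\neq m}\frac{\check{\mu}_m+\check{\mu}_i}{q_m-q_i}$, so \eqref{Lqj} is indeed recovered.

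\textbf{The $p$-equation checks out.} The terms with $j\neq m$ contribute $-\check{\mu}_j\frac{p_m-p_j}{(q_j-q_m)^2}=\check{\mu}_j\frac{p_j-p_m}{(q_m-q_j)^2}$ to $-\partial_{q_m}\mathrm{Ham}^{(\boldsymbol{\alpha})}$. Combined with the $j=m$ contribution, this gives exactly \eqref{Lpj}.
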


\begin{proof}The proof is given in \autoref{AppendixProofHamiltonian}.\end{proof}

\begin{remark}
The Hamiltonian may also be rewritten as
\beq
\text{Ham}^{(\boldsymbol{\alpha})}(\mathbf{q},\mathbf{p})=-\frac{1}{2}\displaystyle{\sum_{\substack{(i,j)\in \llbracket 1,2d\rrbracket^2 \\ i\neq j }}} \frac{(\check{\mu}^{(\boldsymbol{\alpha})}_i+\check{\mu}^{(\boldsymbol{\alpha})}_j)(p_i-p_j)}{q_i-q_j} - \nu^{(\boldsymbol{\alpha})}_{\infty,0}\sum_{j=1}^{2d}  p_j+\sum_{j=1}^{2d}\check{\mu}^{(\boldsymbol{\alpha})}_j\left(p_j^2 +\check{P}_2(q_j)\right).\eeq
\end{remark}

\begin{remark}\label{HamiltonianStrucutre}It is important to notice that the formulas for the Hamiltonians are almost identical in the two oper gauges. They are obtained as the same time-dependent linear combinations of the $\left(\td{H}_{\infty,k}\right)_{0\leq k\leq 2d-2} \cup \td{H}_{0,1}$ (resp. $\left(\check{H}_{\infty,k}\right)_{0\leq k\leq 2d-2} \cup \check{H}_{0,1}$) where these coefficients are almost identical up to the change of Darboux coordinates $(\mathbf{u},\mathbf{v})\leftrightarrow (\mathbf{p},\mathbf{q})$ in \autoref{PropHamOper} and \autoref{HamTheorem}. The only difference is that the r.h.s. of \eqref{HCoeffOperDarbouxCoord} giving $\left(\td{H}_{\infty,k}\right)_{0\leq k\leq 2d-2}$ and $\td{H}_{0,1}$ has a term in $\frac{v_i}{u_i}+t_{\infty,2d+1}u_i^{2d-1}$, which is missing in \autoref{PropDefCi2}. This is due to the normalization of $\td{L}(\lambda)$ that differs from $\check{L}(\lambda)$. More specifically, the fact that $\check{L}_{1,2}(\lambda)$ is chosen to have no pole at $\lambda=0$ (\autoref{ChoiceRepresentative}) removes the term $\frac{p_i}{q_i}$, while the fact that $\td{L}(\lambda)=\diag(-t_{\infty,2d+1},t_{\infty,2d+1})\lambda^{2d}+O(\lambda^{2d-1})$ adds the term $t_{\infty,2d+1}u_i^{2d-1}$.  
\end{remark}
We will now define a new set of coordinates analogous to the geometric Darboux coordinates but adapted to the symmetry through the normalization condition, this is the main construction of the next subsection. 

\subsection{Parameterization using symmetric geometric Darboux coordinates}
\sloppy{The symmetric geometric Darboux coordinates are denoted by $\left(\mathbf{Q}_\infty, \mathbf{P}_{\infty}\right)\coloneqq\left(Q_{\infty, 0},\dots, Q_{\infty,2d-1}, P_{\infty, 0},\dots, P_{\infty,2d-1}\right)$. They are also canonical Darboux coordinates and they are related to the symmetric oper Darboux coordinates by the time-independent symplectic (i.e., preserving the symplectic two-form $\Omega$) change of coordinates given by the following definition.}

\begin{definition}[Symmetric geometric Darboux coordinates]\label{DefNewcoor2} We define the symmetric geometric Darboux coordinates $\left(\mathbf{Q}_\infty, \mathbf{P}_{\infty}\right)\coloneqq\left(Q_{\infty, 0},\dots, Q_{\infty,2d-1}, P_{\infty, 0},\dots, P_{\infty,2d-1}\right)$ by 
 \bea \check{L}_{1,2}(\lambda)=-t_{\infty,2d+1}\underset{j=1}{\overset{2d}{\prod}}(\lambda-q_j)&=& -t_{\infty,2d+1}\left(\sum_{k=0}^{2d-1} Q_{\infty,k}\lambda^k +\lambda^{2d}\right),\\
\forall \, i\in \llbracket 1,2d\rrbracket\,\,:\,\, p_i&=&\sum_{k=0}^{2d-1} P_{\infty,k}\frac{\partial Q_{\infty,k}(q_1,\dots,q_{2d})}{\partial q_i}. \nonumber
\eea   
\end{definition}

The fact that this change of coordinates is symplectic is a consequence of Lemma $6.3$ of \cite{MarchalP1Hierarchy}. Consequently, the Darboux coordinates $\left(\mathbf{Q}_\infty, \mathbf{P}_{\infty}\right)$ are canonical and the corresponding Hamiltonian is obtained by taking the Hamiltonian for $\left(\mathbf{q},\mathbf{p}\right)$ and replacing the coordinates $\left(\mathbf{q},\mathbf{p}\right)$ by their expression in terms of $\left(\mathbf{Q}_\infty,\mathbf{P}_\infty\right)$. Moreover, we obtain explicit expression of $\check{L}(\lambda)$ using these Darboux coordinates:

\begin{theorem}[Expression of the Lax matrices $\check{L}(\lambda)
$ and $\check{A}_{\boldsymbol{\alpha}}(\lambda)$ in terms of $\left(\mathbf{Q}_\infty,\mathbf{P}_\infty\right)$]\label{TheoLaxMatricesQinfty} We have:
\bea \left[\check{L}(\lambda)\right]_{1,2}&=&-t_{\infty,2d+1}\left(\sum_{k=0}^{2d-1} Q_{\infty,k}\lambda^k +\lambda^{2d}\right),\cr
\left[\check{L}(\lambda)\right]_{1,1}&=&-\sum_{k=0}^{2 d-1} P_{\infty,2d-1-k}\lambda^k-\sum_{k=0}^{2d-2}\sum_{m=0}^{2d-2-k}P_{\infty,m}Q_{\infty,k+1+m}\lambda^k+ \frac{t_{0,0}}{\lambda} \cr
&&+\frac{t_{0,0}}{Q_{\infty,0}}\left(\lambda^{2d-1}+\sum_{k=1}^{2d-1} Q_{\infty,k}\lambda^{k-1}\right),\cr
\left[\check{L}(\lambda)\right]_{2,1}&=& \left[\frac{\underset{j= 2 d-1}{\overset{4d}{\sum}}\left(\underset{m=0}{\overset{4d-j}{\sum}} t_{\infty,2 d +1-m}t_{\infty,j+m-2d+1}\right) \lambda^{j} -\check{L}_{1,1}(\lambda)^2}{\check{L}_{1,2}(\lambda)}\right]_{\infty,+}\cr&&
-\frac{1}{\lambda}\Res_{\lambda\to \infty} \frac{\underset{j= 2 d-1}{\overset{4d}{\sum}}\left( \underset{m=0}{\overset{4d-j}{\sum}} t_{\infty,2 d +1-m}t_{\infty,j+m-2d+1} \right) \lambda^{j}  -\check{L}_{1,1}(\lambda)^2}{\check{L}_{1,2}(\lambda) }, \cr
\left[\check{L}(\lambda)\right]_{2,2}&=&-\left[\check{L}(\lambda)\right]_{1,1},
\eea
and 
\bea \left[\check{A}_{\boldsymbol{\alpha}}(\lambda)\right]_{1,2}&=&-t_{\infty,2d+1}\left(\sum_{i=0}^{2d-1
}\nu_{\infty,i}^{(\boldsymbol{\alpha})}\lambda^{2d-i} +\sum_{k=1}^{2d-1}\sum_{i=0}^{k-1
}\nu_{\infty,i}^{(\boldsymbol{\alpha})}Q_{\infty,k} \lambda^{k-i}\right),\cr
\left[\check{A}_{\boldsymbol{\alpha}}(\lambda)\right]_{1,1}&=&
\left[\check{L}_{1,1}(\lambda) \left(\sum_{i=0}^{2d-1} \nu^{(\boldsymbol{\alpha})}_{\infty,i}\lambda^{-i}\right)\right]_{\infty,+},\cr
\left[\check{A}_{\boldsymbol{\alpha}}(\lambda)\right]_{2,1}&=& -\frac{t_{0,0}\mathcal{L}_{\boldsymbol{\alpha}}[Q_{\infty,0}]}{t_{\infty,2d+1}(Q_{\infty,0})^2\lambda} + \left[\frac{\check{L}_{1,1}(\lambda)}{\check{L}_{1,2}(\lambda)}\left(\frac{\check{L}_{1,1}(\lambda)}{\check{L}_{1,2}(\lambda)}\left[\check{A}_{\boldsymbol{\alpha}}(\lambda)\right]_{1,2}-2\left[\check{A}_{\boldsymbol{\alpha}}(\lambda)\right]_{1,1}\right)\right]_{\infty,+} \cr&&
+\left[\frac{\check{L}_{1,1}(\lambda)}{\check{L}_{1,2}(\lambda)}\left(\frac{\check{L}_{1,1}(\lambda)}{\check{L}_{1,2}(\lambda)}\left[\check{A}_{\boldsymbol{\alpha}}(\lambda)\right]_{1,2}-2\left[\check{A}_{\boldsymbol{\alpha}}(\lambda)\right]_{1,1}\right)\right]_{0,-}\cr
&&+\frac{t_{0,0}(1-t_{0,0})\left(\nu_{\infty,2d-1}^{\boldsymbol{(\alpha)}} +\underset{k=1}{\overset{2d-1}{\sum}} \nu_{\infty,k-1}^{(\boldsymbol{\alpha})} Q_{\infty,k}\right) }{t_{\infty,2d+1}(Q_{\infty,0})^2\lambda}\cr
&&+\left[-\frac{\underset{k=2d}{\overset{4d}{\sum}}\underset{j=k}{\overset{4d}{\sum}}\underset{m=0}{\overset{4d-j}{\sum}} t_{\infty,2 d +1-m}t_{\infty,j+m-2d+1}\nu^{(\boldsymbol{\alpha})}_{\infty,j-k} \lambda^{k}
}{t_{\infty,2d+1}\left(\underset{k=0}{\overset{2d-1}{\sum}} Q_{\infty,k}\lambda^k +\lambda^{2d}\right)}\right]_{\infty,+},\cr
\left[\check{A}_{\boldsymbol{\alpha}}(\lambda)\right]_{2,2}&=&-\left[\check{A}_{\boldsymbol{\alpha}}(\lambda)\right]_{1,1},
\eea
with $\nu_{\infty,2d}^{(\boldsymbol{\alpha})}\coloneqq -\underset{\lambda\to \infty}{\Res}\lambda^{2d-1}\frac{[\check{A}_{\boldsymbol{\alpha}}(\lambda)]_{1,2}}{\check{L}_{1,2}(\lambda)}$.
\end{theorem}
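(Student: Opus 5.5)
The plan is to derive each entry of $\check{L}(\lambda)$ and $\check{A}_{\boldsymbol{\alpha}}(\lambda)$ from three inputs: its pole structure, its known asymptotics, and the definition of $(\mathbf{Q}_\infty,\mathbf{P}_\infty)$ in \autoref{DefNewcoor2}.

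\textbf{The entries of $\check{L}(\lambda)$.} The entry $\check{L}_{1,2}$ is immediate from \autoref{DefNewcoor2}. For $\check{L}_{1,1}$ I use the characterization given after \autoref{ChoiceRepresentative}: $\check{L}_{1,1}$ is the unique function of the form $\frac{t_{0,0}}{\lambda}$ plus a polynomial of degree at most $2d-1$ with $\check{L}_{1,1}(q_i)=p_i$. So it suffices to check that the proposed right-hand side has this form and takes the values $p_i$ at the $q_i$.
\begin{itemize}
\item Write $Q(\lambda)\coloneqq\sum_k Q_{\infty,k}\lambda^k+\lambda^{2d}$. The extra term satisfies $\frac{t_{0,0}}{\lambda}+\frac{t_{0,0}}{Q_{\infty,0}}\cdot\frac{Q(\lambda)-Q_{\infty,0}}{\lambda}=\frac{t_{0,0}Q(\lambda)}{Q_{\infty,0}\lambda}$, which vanishes at every $q_i$.
\item It remains to show that the double-sum polynomial $R(\lambda)\coloneqq-\sum_k P_{\infty,2d-1-k}\lambda^k-\sum_{k,m}P_{\infty,m}Q_{\infty,k+1+m}\lambda^k$ satisfies $R(q_i)=p_i=\sum_k P_{\infty,k}\,\partial_{q_i}Q_{\infty,k}$.
\item For this I use $\partial_{q_i}Q(\lambda)=-Q(\lambda)/(\lambda-q_i)$. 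Expanding $Q(\lambda)/(\lambda-q_i)$ as a polynomial in $\lambda$ gives $\partial_{q_i}Q_{\infty,k}=-\sum_{m\ge k+1}Q_{\infty,m}q_i^{m-k-1}$ (with $Q_{\infty,2d}=1$).
\item Reindexing the resulting sum should reproduce $R(q_i)$, with the term $m=2d$ giving the first sum. This is the same computation as in \cite{MarchalAlameddineIsospectralIsomono2023} for \autoref{GeoLaxMatrices}.
\end{itemize}

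For $\check{L}_{2,1}$, $\mathfrak{sl}_2$ tracelessness gives $\check{L}_{2,1}=-(\det\check{L}+\check{L}_{1,1}^2)/\check{L}_{1,2}$. The spectral invariant $\det\check{L}$ is gauge independent, so its polar parts are fixed by the irregular times and $t_{0,0}$. Since $\check{L}_{2,1}$ has the form polynomial $+\,c/\lambda$, its polynomial part is $\left[(\,\cdot\,)/\check{L}_{1,2}\right]_{\infty,+}$, in which only the top coefficients $-\check{P}_2$-type terms of $-\det\check{L}$ contribute. The coefficient $c$ is then fixed by observing that $\check{L}_{2,1}\check{L}_{1,2}+\check{L}_{1,1}^2$ must reproduce the known terms. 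Comparing the $\lambda^{-1}$ behaviour at infinity, equivalently applying the residue theorem to the rational function (the unknown lower coefficients of $\det\check{L}$ do not contribute at the relevant orders), gives the stated residue term.

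\textbf{The entries of $\check{A}_{\boldsymbol{\alpha}}(\lambda)$.} I invert the oper gauge: $\check{A}_{\boldsymbol{\alpha}}=G^{-1}\check{A}_{\mathrm{oper},\boldsymbol{\alpha}}G-G^{-1}\mathcal{L}_{\boldsymbol{\alpha}}[G]$, where $G=\begin{pmatrix}1&0\\ \check{L}_{1,1}&\check{L}_{1,2}\end{pmatrix}$. This gives the first-row relations
\[
\check{A}_{1,2}=\check{L}_{1,2}\,[\check{A}_{\mathrm{oper}}]_{1,2},\qquad \check{A}_{1,1}=[\check{A}_{\mathrm{oper}}]_{1,1}+\check{L}_{1,1}[\check{A}_{\mathrm{oper}}]_{1,2}.
\]
\begin{itemize}
\item By \autoref{PropA12Form}, $\check{A}_{1,2}$ is a polynomial. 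Its expansion at infinity is $\check{L}_{1,2}\sum_i\nu_{\infty,i}^{(\boldsymbol{\alpha})}\lambda^{-i}$ by \autoref{PropAsymptoticExpansionA12}. Taking the polynomial part yields the stated formula, and the vanishing at $0$ from \autoref{PropAsymptoticExpansionA12} is consistent with the absence of a constant term.
\item By \autoref{Propcalpha}, $\check{A}_{1,1}$ has no poles at the $q_j$, since they cancel against $\check{L}_{1,1}\check{\mu}_j/(\lambda-q_j)$ thanks to $\check{\rho}_j=-p_j\check{\mu}_j$. It also has no pole at $0$, because $[\check{A}_{\mathrm{oper}}]_{1,2}=O(\lambda)$ kills the $t_{0,0}/\lambda$ term. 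Hence $\check{A}_{1,1}$ is the polynomial part of its expansion at infinity, which is the stated formula.
\item For $\check{A}_{2,1}$ I use the $(1,2)$ entry of the compatibility equation \eqref{compatibility}, $\partial_\lambda\check{A}_{1,2}-\mathcal{L}_{\boldsymbol{\alpha}}\check{L}_{1,2}+2(\check{A}_{1,1}\check{L}_{1,2}-\check{A}_{1,2}\check{L}_{1,1})=0$. I combine it with the tracelessness identity $\check{A}_{2,1}\check{L}_{1,2}-\check{A}_{1,2}\check{L}_{2,1}+\dots$; alternatively I use the gauge formula directly, which expresses $\check{A}_{2,1}$ through $\check{L}_{1,1}/\check{L}_{1,2}$, $\check{A}_{1,\cdot}$, and $\mathcal{L}_{\boldsymbol{\alpha}}$ of $\check{L}_{1,1}/\check{L}_{1,2}$.
\item $\check{A}_{2,1}$ is polynomial plus a simple pole at $0$. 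Its polynomial part comes from the expansion at infinity: this produces the $[\cdot]_{\infty,+}$ terms and the $\det$-coefficient term, where I substitute the top coefficients of $\check{L}_{2,1}$.
\item The residue at $0$ comes from the $\lambda\to0$ expansion. This involves the $t_{0,0}$ part of $\check{L}_{1,1}$, $\check{L}_{1,2}(0)=-t_{\infty,2d+1}Q_{\infty,0}$, and the derivative $\mathcal{L}_{\boldsymbol{\alpha}}[Q_{\infty,0}]$, which arises from differentiating $t_{0,0}/(\lambda\,\check{L}_{1,2})$. Here the quantity $\nu_{\infty,2d}^{(\boldsymbol{\alpha})}$ enters as the $\lambda^{1}$-coefficient, i.e.\ $\partial_\lambda\check{A}_{1,2}(0)$ in disguise.
\end{itemize}

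The main obstacle I expect is this last residue bookkeeping at $\lambda=0$ for $\check{A}_{2,1}$: identifying the $t_{0,0}(1-t_{0,0})$ coefficient and the $\mathcal{L}_{\boldsymbol{\alpha}}[Q_{\infty,0}]$ term exactly. The rest is mechanical matching of polynomial parts.
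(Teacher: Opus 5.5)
Your treatment of $\check{L}(\lambda)$ and of the first row of $\check{A}_{\boldsymbol{\alpha}}(\lambda)$ follows the paper's argument: first-row gauge relations through $G$, local expansions, and determinant plus residue at infinity for $\check{L}_{2,1}$. Your interpolation check of $\check{L}_{1,1}$ correctly fills in what the paper calls standard.

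One small correction for $\check{A}_{1,2}$. The polynomial part of $\check{L}_{1,2}\sum_{i=0}^{2d-1}\nu^{(\boldsymbol{\alpha})}_{\infty,i}\lambda^{-i}$ has constant term $-t_{\infty,2d+1}\sum_{k=0}^{2d-1}Q_{\infty,k}\nu^{(\boldsymbol{\alpha})}_{\infty,k}$, which is not in the statement. The $O(\lambda^{-2d})$ remainder affects exactly that coefficient, so the expansion at infinity fixes only the coefficients of $\lambda^1,\dots,\lambda^{2d}$. The constant is $0$ because $[\check{A}_{\mathrm{oper},\boldsymbol{\alpha}}]_{1,2}=O(\lambda)$. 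This is the relation $\nu^{(\boldsymbol{\alpha})}_{\infty,2d}=-\sum_k Q_{\infty,k}\nu^{(\boldsymbol{\alpha})}_{\infty,k}$, so $\nu^{(\boldsymbol{\alpha})}_{\infty,2d}$ is tied to $\check{A}_{1,2}(0)$, not to $\partial_\lambda\check{A}_{1,2}(0)$ as you claim.

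The genuine gap is $\check{A}_{2,1}$, which you leave as the main obstacle, and your sketch omits the ingredient that produces the hard terms. Take the second row of $G^{-1}\check{A}_{\mathrm{oper},\boldsymbol{\alpha}}G-G^{-1}\mathcal{L}_{\boldsymbol{\alpha}}[G]$ and simplify it with \eqref{TrivialEntriesA} and the $(1,2)$ entry of \eqref{compatibility}. This gives
\[
\check{A}_{2,1}=-\mathcal{L}_{\boldsymbol{\alpha}}\Big[\frac{\check{L}_{1,1}}{\check{L}_{1,2}}\Big]+\frac{\partial_\lambda[\check{A}_{\mathrm{oper},\boldsymbol{\alpha}}]_{1,1}}{\check{L}_{1,2}}+[\check{A}_{\mathrm{oper},\boldsymbol{\alpha}}]_{1,2}\,\frac{[\check{L}_{\mathrm{oper}}]_{2,1}}{\check{L}_{1,2}}+\frac{\check{L}_{1,1}}{\check{L}_{1,2}}\Big(\frac{\check{L}_{1,1}}{\check{L}_{1,2}}\check{A}_{1,2}-2\check{A}_{1,1}\Big).
\]
Since $\check{A}_{2,1}$ has no poles at the $q_j$, the summands' poles there cancel. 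So $\check{A}_{2,1}$ is the sum of the summands' polynomial parts at $\infty$ and polar parts at $0$. The key summand, missing from your list of ingredients, is $[\check{A}_{\mathrm{oper},\boldsymbol{\alpha}}]_{1,2}[\check{L}_{\mathrm{oper}}]_{2,1}/\check{L}_{1,2}$.

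At $\lambda=0$, $[\check{L}_{\mathrm{oper}}]_{2,1}$ has the double pole $-t_{0,0}(1-t_{0,0})\lambda^{-2}$ (\autoref{PropLopersym}). This reflects the exponent $1-t_{0,0}$ of $\check{\Psi}_{1,2}$ and cannot be read off $\det\check{L}$, whose double pole is $-(t_{0,0})^2\lambda^{-2}$. The linear coefficient of $\check{A}_{1,2}/\check{L}_{1,2}$ at $0$ is $\big(\nu^{(\boldsymbol{\alpha})}_{\infty,2d-1}+\sum_{k=1}^{2d-1}\nu^{(\boldsymbol{\alpha})}_{\infty,k-1}Q_{\infty,k}\big)/Q_{\infty,0}$. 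Multiplying the double pole by this coefficient and dividing by $\check{L}_{1,2}(0)=-t_{\infty,2d+1}Q_{\infty,0}$ gives exactly the $t_{0,0}(1-t_{0,0})$ term.

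At infinity, $[\check{L}_{\mathrm{oper}}]_{2,1}/\check{L}_{1,2}$ grows like $\lambda^{2d}$. The polynomial part of this summand therefore needs $[\check{A}_{\mathrm{oper},\boldsymbol{\alpha}}]_{1,2}$ through order $\lambda^{-2d}$, and that is the only place $\nu^{(\boldsymbol{\alpha})}_{\infty,2d}$ enters. The remaining summands behave as follows:
\begin{itemize}
\item $-\mathcal{L}_{\boldsymbol{\alpha}}[\check{L}_{1,1}/\check{L}_{1,2}]$ gives only the $\mathcal{L}_{\boldsymbol{\alpha}}[Q_{\infty,0}]$ residue, as you said.
\item The last summand contributes its $[\cdot]_{\infty,+}$ and $[\cdot]_{0,-}$ parts.
\item $\partial_\lambda[\check{A}_{\mathrm{oper},\boldsymbol{\alpha}}]_{1,1}/\check{L}_{1,2}$ contributes nothing.
\end{itemize}

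Note also that $\check{A}_{2,1}$ appears in the $(1,1)$ entry of \eqref{compatibility}, not the $(1,2)$ entry:
\[
\check{L}_{1,2}\check{A}_{2,1}=\partial_\lambda\check{A}_{1,1}-\mathcal{L}_{\boldsymbol{\alpha}}[\check{L}_{1,1}]+\check{A}_{1,2}\check{L}_{2,1}.
\]
The right-hand side is regular at $0$, because $t_{0,0}$ is constant and $\check{A}_{1,2}(0)=0$. Hence $\check{A}_{2,1}$ is actually a polynomial, and the three $\lambda^{-1}$ contributions in the statement cancel on solutions. In the reduced $d=1$ case this cancellation is exactly $\mathcal{L}_{\boldsymbol{\alpha}}[Q_{\infty,0}]=\nu^{(\boldsymbol{\alpha})}_{\infty,1}(1-2Q_{\infty,0}P_{\infty,0})$. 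So computing ``the residue of $\check{A}_{2,1}$ at $0$'' would simply return $0$. The stated formula is specifically the term-by-term polar decomposition of the identity above, and that is the computation your proposal still has to supply.
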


\begin{proof}The proof is given in \autoref{AppendixLaxExpression}.  
\end{proof}

The explicit expression of the matrix $\check{L}(\lambda)$ in terms of the Darboux coordinates $\left(\mathbf{Q}_\infty,\mathbf{P}_{\infty}\right)$ and the fact that the change of coordinates $\left(\mathbf{q},\mathbf{p}\right)$ to $\left(\mathbf{Q}_\infty,\mathbf{P}_{\infty}\right)$ is time-independent and symplectic implies that we get explicit formulas for the Hamiltonian evolutions of $\left(\mathbf{Q}_\infty,\mathbf{P}_{\infty}\right)$. 

\begin{theorem}[Hamiltonian for $\left(\mathbf{Q}_\infty,\mathbf{P}_\infty\right)$] \label{HamTheorem2} The Hamiltonian for $\left(\mathbf{Q}_\infty,\mathbf{P}_\infty\right)$ is given by
\beq \label{DefHam} \mathrm{Ham}_{(\boldsymbol{\alpha})}(\mathbf{Q}_\infty,\mathbf{P}_\infty)\coloneqq\sum_{k=0}^{2d-2} \nu_{\infty,k+1}^{\boldsymbol{(\alpha)}}\check{H}_{\infty,k}+\nu_{\infty,0}^{\boldsymbol{(\alpha)}}\left(\check{H}_{0,1} +2d P_{\infty,2d-1} +\sum_{k=0}^{2d-2} (k+1)Q_{\infty,k+1}P_{\infty,k}\right),
\eeq
in the sense that 
\beq \forall\, k\in \llbracket 0,2d-1\rrbracket \,:\, \mathcal{L}_{\boldsymbol{\alpha}}[Q_{\infty,k}]=\frac{\partial \mathrm{Ham}_{(\boldsymbol{\alpha})}(\mathbf{Q}_\infty,\mathbf{P}_\infty)}{\partial P_{\infty,k}}\, \mathrm{ and }\, \mathcal{L}_{\boldsymbol{\alpha}}[P_{\infty,k}]=-\frac{\partial \mathrm{Ham}_{(\boldsymbol{\alpha})}(\mathbf{Q}_\infty,\mathbf{P}_\infty)}{\partial Q_{\infty,k}}.\eeq
Coefficients $\left(\nu_{\infty,j}^{\boldsymbol{(\alpha)}}\right)_{0\leq j\leq 2d-1}$ are determined by  \autoref{PropAsymptoticExpansionA12} while coefficients $\left(\check{H}_{\infty,k}\right)_{0\leq k\leq 2d-2} \cup \check{H}_{0,1}$ are given by
\bea  \label{DefHs}
\forall \, j\in \llbracket 0, 2d-2\rrbracket\,:\, \check{H}_{\infty,j}(\mathbf{Q}_{\infty},\mathbf{P}_{\infty};\mathbf{t},t_{\infty,0},t_{0,0})&= &-\Res_{\lambda\to \infty}\lambda^{-j-1} \left[ (\check{L}_{1,1})^2+\check{L}_{2,1}\check{L}_{1,2} +\check{L}_{1,2}\partial_\lambda\left(\frac{ \check{L}_{1,1}}{\check{L}_{1,2}}\right)\right], \nonumber \\
\check{H}_{0,1} (\mathbf{Q}_{\infty},\mathbf{P}_{\infty};\mathbf{t},t_{\infty,0},t_{0,0})&=&\Res_{\lambda\to 0}(\check{L}_{1,1})^2+\check{L}_{2,1}\check{L}_{1,2} +\check{L}_{1,2} \partial_\lambda\left(\frac{ \check{L}_{1,1}}{\check{L}_{1,2}}\right),
\eea
where the Lax matrix $\check{L}(\lambda)$ is given in terms of $\left(\mathbf{Q}_\infty,\mathbf{P}_\infty\right)$ by \autoref{TheoLaxMatricesQinfty}.
\end{theorem}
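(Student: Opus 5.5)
The proof goes in three steps: transport Theorem~\ref{HamTheorem} through the change of coordinates of Definition~\ref{DefNewcoor2}, compute the one correction term $-\sum_j p_j$, and then rewrite $\check{H}_{\infty,k}$ and $\check{H}_{0,1}$ intrinsically in terms of $\check{L}(\lambda)$.

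\textbf{Step 1 (transport of the Hamiltonian).} The map $(\mathbf{q},\mathbf{p})\mapsto(\mathbf{Q}_\infty,\mathbf{P}_\infty)$ of Definition~\ref{DefNewcoor2} is symplectic by Lemma~6.3 of \cite{MarchalP1Hierarchy}. It is also time-independent: the $Q_{\infty,k}$ are, up to sign, the elementary symmetric functions of $\mathbf{q}$, with no dependence on $\mathbf{t}$. Hence Hamilton's equations are preserved, and the Hamiltonian for $(\mathbf{Q}_\infty,\mathbf{P}_\infty)$ is simply $\mathrm{Ham}^{(\boldsymbol{\alpha})}(\mathbf{q},\mathbf{p})$ of Theorem~\ref{HamTheorem}, re-expressed in the new variables. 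No extra term $\partial_t$ of a generating function appears.

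\textbf{Step 2 (the correction term).} We must show $-\sum_{j=1}^{2d}p_j=2dP_{\infty,2d-1}+\sum_{k=0}^{2d-2}(k+1)Q_{\infty,k+1}P_{\infty,k}$. By definition, $\sum_i p_i=\sum_k P_{\infty,k}\sum_i \partial_{q_i}Q_{\infty,k}$. Let $\check{Q}(\lambda)=\prod_j(\lambda-q_j)=\lambda^{2d}+\sum_k Q_{\infty,k}\lambda^k$. Then $\sum_i\partial_{q_i}\check{Q}(\lambda)=-\sum_i\prod_{j\neq i}(\lambda-q_j)=-\check{Q}'(\lambda)$. Comparing coefficients of $\lambda^k$ gives
\[
\sum_i \partial_{q_i}Q_{\infty,k}=-(k+1)Q_{\infty,k+1},
\]
with the convention $Q_{\infty,2d}=1$. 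This yields the stated identity directly.

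\textbf{Step 3 (the coefficients $\check{H}$).} It remains to show that the $\check{H}_{\infty,j}$ and $\check{H}_{0,1}$ defined by Lemma~\ref{PropDefCi2} coincide with the residue formulas~\eqref{DefHs}. Formula~\eqref{LInTermsOfTdL} gives
\[
\left[\check{L}_{\mathrm{oper}}\right]_{2,1}=-\det\check{L}+\check{L}_{1,2}\,\partial_\lambda\!\left(\frac{\check{L}_{1,1}}{\check{L}_{1,2}}\right)
=\check{L}_{1,1}^2+\check{L}_{1,2}\check{L}_{2,1}+\check{L}_{1,2}\,\partial_\lambda\!\left(\frac{\check{L}_{1,1}}{\check{L}_{1,2}}\right).
\]
Comparing with the expression of $\left[\check{L}_{\mathrm{oper}}\right]_{2,1}$ in Proposition~\ref{PropLopersym}, the $\check{H}_{\infty,j}$ are the coefficients of $\lambda^j$ for $0\le j\le 2d-2$ at infinity. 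The known term $-\check{P}_2$ contributes only in degrees $\ge 2d-1$ and $-2$, and the term $-\sum_j p_j/(\lambda-q_j)$ is $O(\lambda^{-1})$. Therefore $\check{H}_{\infty,j}=-\Res_{\lambda\to\infty}\lambda^{-j-1}[\cdots]$. Likewise, $\check{H}_{0,1}$ is the residue at $0$: the $\lambda^{-2}$ term of $\check{P}_2$ does not contribute, and the $p_j/(\lambda-q_j)$ terms are regular at $0$.

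\textbf{Main obstacle.} One must make sure that the $\check{L}$ of Theorem~\ref{TheoLaxMatricesQinfty} is genuinely the matrix whose oper form produces these coefficients. This rests on that theorem and on the normalization $\check{L}_{1,1}\sim t_{0,0}/\lambda$ from~\eqref{NormcheckL11}. Given these, the identification reduces to the bookkeeping of Laurent coefficients described above.
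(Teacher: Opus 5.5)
Your proposal is correct and takes essentially the paper's route: transport the $(\mathbf{q},\mathbf{p})$ Hamiltonian of Theorem~\ref{HamTheorem} through the time-independent cotangent-lift change of variables, rewrite $-\sum_j p_j$ using $\sum_i\partial_{q_i}\prod_j(\lambda-q_j)=-\partial_\lambda\prod_j(\lambda-q_j)$, and read off the $\check{H}$'s as Laurent coefficients of $[\check{L}_{\mathrm{oper}}]_{2,1}$ written via \eqref{LInTermsOfTdL}. The paper only states the symplectic, time-independent transport argument and leaves the last two computations implicit (they mirror $\td{\delta}_{0,1}$ and \eqref{DeftdHs} in the tilde setting), so your coefficient bookkeeping correctly supplies the missing details.
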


For completeness, we mention that \autoref{HamTheorem2} can be used to obtain the Hamiltonian in every time-direction:

\begin{proposition}
\label{PropHamReduced2}The Hamiltonian evolution can be rewritten as:
\beq \label{NewHamReduced2}\begin{pmatrix}\mathrm{Ham}_{{t_{\infty,1}}}(\mathbf{Q}_{\infty},\mathbf{P}_{\infty};\mathbf{t},t_{\infty,0},t_{0,0})\\ \vdots \\ (2d-1)\mathrm{Ham}_{{t_{\infty,2d-1}}}(\mathbf{Q}_{\infty},\mathbf{P}_{\infty};\mathbf{t},t_{\infty,0},t_{0,0})\\
(2d)\mathrm{Ham}_{{t_{\infty,2d}}}(\mathbf{Q}_{\infty},\mathbf{P}_{\infty};\mathbf{t},t_{\infty,0},t_{0,0})
\end{pmatrix}=\left(M_{\infty}(\mathbf{t})\right)^{-1}\begin{pmatrix}\check{H}_{\infty,2d-2}(\mathbf{Q}_{\infty},\mathbf{P}_{\infty};\mathbf{t},t_{\infty,0},t_{0,0})\\ \vdots\\ \check{H}_{\infty,0}(\mathbf{Q}_{\infty},\mathbf{P}_{\infty};\mathbf{t},t_{\infty,0},t_{0,0}) \\ \check{H}_{0,1}(\mathbf{Q}_{\infty},\mathbf{P}_{\infty};\mathbf{t},t_{\infty,0},t_{0,0})+\check{\delta}_{0,1}(\mathbf{Q}_\infty,\mathbf{P}_\infty)
\end{pmatrix},\eeq 
with
\beq \check{\delta}_{0,1}(\mathbf{Q}_\infty,\mathbf{P}_\infty)\coloneqq2d P_{\infty,2d-1} +\sum_{k=0}^{2d-2} (k+1)Q_{\infty,k+1}P_{\infty,k}.\eeq
\end{proposition}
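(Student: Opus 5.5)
The plan is to derive \eqref{NewHamReduced2} directly from \autoref{HamTheorem2}, using linearity of the Hamiltonian in the deformation vector $\boldsymbol{\alpha}$ and the linear relation \eqref{RelationNuAlphaInfty} between $\boldsymbol{\nu}^{(\boldsymbol{\alpha})}$ and $\boldsymbol{\alpha}$. Write
\[
\mathrm{Ham}_{(\boldsymbol{\alpha})}=\sum_{j=0}^{2d-1}\nu^{(\boldsymbol{\alpha})}_{\infty,j}\,\mathcal{H}_j,
\qquad \mathcal{H}_0\coloneqq \check{H}_{0,1}+\check{\delta}_{0,1},\quad \mathcal{H}_{j}\coloneqq \check{H}_{\infty,j-1}\ (1\leq j\leq 2d-1).
\]
The first step is to identify $\check{\delta}_{0,1}$ with $-\sum_j p_j$, so that this expression agrees with \eqref{DefHamqp}. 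Since the change of variables is time-independent and symplectic (Lemma 6.3 of \cite{MarchalP1Hierarchy}), the Hamiltonian transforms by substitution. For the identification itself, the generating structure $p_i=\sum_k P_{\infty,k}\partial_{q_i}Q_{\infty,k}$ gives $\sum_i p_i=\sum_k P_{\infty,k}\sum_i\partial_{q_i}Q_{\infty,k}$. Because $Q_{\infty,k}=(-1)^{2d-k}e_{2d-k}(\mathbf{q})$, one has $\sum_i\partial_{q_i}e_m=(2d-m+1)e_{m-1}$, which yields $\sum_i\partial_{q_i}Q_{\infty,k}=-(k+1)Q_{\infty,k+1}$ with the convention $Q_{\infty,2d}=1$. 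This recovers exactly $-\check{\delta}_{0,1}$. This identity is already implicit in \autoref{HamTheorem2}, so I would simply cite it.

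The second step isolates the Hamiltonian along each time direction. Taking $\boldsymbol{\alpha}=e_{k}$, that is $\mathcal{L}_{\boldsymbol{\alpha}}=\partial_{t_{\infty,k}}$, gives $\mathrm{Ham}_{t_{\infty,k}}=\sum_j \nu^{(e_k)}_{\infty,j}\mathcal{H}_j$. By \eqref{RelationNuAlphaInfty}, $\boldsymbol{\nu}^{(e_k)}=M_\infty^{-1}\mathbf{b}_k$, where $\mathbf{b}_k$ has the single nonzero entry $1/k$ in position $2d-k$ (indexing the positions from $0$). Hence $k\,\mathrm{Ham}_{t_{\infty,k}}=\big(M_\infty^{-T}\boldsymbol{\mathcal{H}}\big)_{2d-k}$ with $\boldsymbol{\mathcal{H}}=(\mathcal{H}_0,\dots,\mathcal{H}_{2d-1})^T$.

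The third step, which I expect to be the main bookkeeping obstacle, reconciles $M_\infty^{-T}$ acting in this ordering with the $M_\infty^{-1}$ appearing in \eqref{NewHamReduced2}, whose right-hand vector is listed in reversed order $(\mathcal{H}_{2d-1},\dots,\mathcal{H}_0)$. Let $J$ be the reversal permutation. The matrix $M_\infty$ is lower triangular Toeplitz, so it is persymmetric: $JM_\infty J=M_\infty^T$, and the same holds for its inverse. Therefore $M_\infty^{-T}\boldsymbol{\mathcal{H}}=J M_\infty^{-1}J\boldsymbol{\mathcal{H}}$. Reading off component $2d-k$ of the left side corresponds to component $k-1$ of $M_\infty^{-1}(J\boldsymbol{\mathcal{H}})$. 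This is precisely row $k$ of \eqref{NewHamReduced2}, whose left vector is $(1\cdot\mathrm{Ham}_{t_{\infty,1}},\dots,2d\,\mathrm{Ham}_{t_{\infty,2d}})$, and whose right vector is $J\boldsymbol{\mathcal{H}}=(\check{H}_{\infty,2d-2},\dots,\check{H}_{\infty,0},\check{H}_{0,1}+\check{\delta}_{0,1})$. I will check the index shift on the smallest case, $d=1$, to confirm the endpoint conventions.
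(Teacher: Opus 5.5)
Your proposal is correct and follows the paper's proof. Both arguments use the linearity of $\mathrm{Ham}_{(\boldsymbol{\alpha})}$ in $\boldsymbol{\alpha}$, the relation \eqref{RelationNuAlphaInfty}, and the persymmetry $JM_\infty^{\mathsf T}J=M_\infty$ of the lower-triangular Toeplitz matrix to turn $M_\infty^{-\mathsf T}$ acting on the natural ordering into $M_\infty^{-1}$ acting on the reversed vector. Your extra check that $\check{\delta}_{0,1}=-\sum_j p_j$ is correct but not needed, since it is already built into \autoref{HamTheorem2}.
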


\begin{proof}The proof follows from the fact that
\beq \label{Aux}\text{Ham}_{(\boldsymbol{\alpha})}(\mathbf{Q}_\infty,\mathbf{P}_\infty;\mathbf{t},t_{\infty,0},t_{0,0})= \left(\nu^{(\boldsymbol{\alpha})}_{\infty,2d-1},\dots,\nu^{(\boldsymbol{\alpha})}_{\infty,0}\right)\begin{pmatrix}\check{H}_{\infty,2d-2}(\mathbf{Q}_\infty,\mathbf{P}_\infty;\mathbf{t},t_{\infty,0},t_{0,0})\\ \vdots\\ \check{H}_{\infty,0}(\mathbf{Q}_\infty,\mathbf{P}_\infty;\mathbf{t},t_{\infty,0},t_{0,0})\\ \check{H}_{0,1}(\mathbf{Q}_\infty,\mathbf{P}_\infty;\mathbf{t},t_{\infty,0},t_{0,0})+\delta_{0,1}(\mathbf{Q}_\infty,\mathbf{P}_\infty) \end{pmatrix}. \eeq
\autoref{PropAsymptoticExpansionA12} gives:
\bea &&\left(\nu^{(\boldsymbol{\alpha})}_{\infty,0},\dots,\nu^{(\boldsymbol{\alpha})}_{\infty,2d-1}\right) (M_{\infty}(\mathbf{t}))^{\mathsf T}=\left(\frac{\alpha_{\infty,2d}}{2d},\dots, \frac{\alpha_{\infty,1}}{1}\right) \cr
&&\Leftrightarrow \,\,\left(\nu^{(\boldsymbol{\alpha})}_{\infty,2d-1},\dots,\nu^{(\boldsymbol{\alpha})}_{\infty,0}\right) J (M_\infty(\mathbf{t}))^{\mathsf T}=\left(\frac{\alpha_{\infty,2d}}{2d},\dots, \frac{\alpha_{\infty,1}}{1}\right), \eea
where $J$ is the $(2d)\times (2d)$ matrix with only $1$ on the anti-diagonal. Thus we have
\bea \left(\nu^{(\boldsymbol{\alpha})}_{\infty,2d-1},\dots,\nu^{(\boldsymbol{\alpha})}_{\infty,0}\right) &=&\left(\frac{\alpha_{\infty,2d}}{2d},\dots, \frac{\alpha_{\infty,1}}{1}\right)(M_\infty(\mathbf{t})^{-1})^{\mathsf T}J= \left(\frac{\alpha_{\infty,1}}{1} ,\dots,\frac{\alpha_{\infty,2d}}{2d} \right)
J(M_\infty(\mathbf{t})^{-1})^{\mathsf T}J\cr
&=&\left(\frac{\alpha_{\infty,1}}{1} ,\dots,\frac{\alpha_{\infty,2d}}{2d} \right)M_\infty(\mathbf{t})^{-1}, \eea
where we have used that for any lower-triangular Toeplitz matrix $T$, we have $JT^{\mathsf T}J=T$. Inserting this relation into \eqref{Aux} ends the proof.
\end{proof}

\section{The FN PII hierarchy as a self-similarity reduction of the mKdV hierarchy}\label{SecFNP2}
The FN PII hierarchy is obtained from the mKdV hierarchy via the \textit{self-similarity reduction} following the symmetries of the mKdV hierarchy \cite{Ablowitz1977ExactLO,Airault1979RationalSO,Flaschka1980}. Let us briefly recall the construction. 

\begin{definition}[The mKdV hierarchy] Define the following Lenard type recursion:
\begin{align}
     \forall\, m\geq0\,:\, \frac{\partial}{\partial x} R_{m+1}[v] =& \left( \frac{\partial^3}{\partial x^3} + 4(v_x - v^2) \frac{\partial}{\partial x} + 2(v_x-v^2)_x \right) R_m[v], \qquad  \text{with} \qquad  
    R_0[v] = \frac{1}{2}.
\end{align}
The mKdV hierarchy is defined as the set of PDEs given by
\begin{align}
    \frac{\partial}{\partial T_{m+1}} v + \frac{\partial}{\partial x} \left( \frac{\partial}{\partial x} +2v \right)R_m[v_x-v^2] = 0, \,\,\,\,\, \forall\, m \in \mathbb{N},
\end{align}
where $(T_m)_{m\geq 1}$ and $x$ are the parameters.
\end{definition}
Every equation in the hierarchy can be seen as a Hamiltonian flow and each single equation defines a symmetry for all the others, this is due to the commutation of the Hamiltonian flows. The solution space for the $m$-th mKdV equation denoted $v(x;T_1,T_2,...)$ corresponds to the additional condition $\frac{\partial v}{\partial T_{m+1} }=0$. Furthermore, this hierarchy has an additional set of symmetries, let us for this define the Virasoro infinitesimal generator 
\begin{align}
    \frac{d}{d s_m} \coloneqq  \sum_{l=0}^m (2l+1) T_{l+1} \frac{\partial}{\partial T_{l+1}}. \end{align}
The stationary solutions satisfy $\frac{d v}{d s_m} = 0$ and thus one obtains from the form of the hierarchy
\begin{align}
    \frac{d v }{ d s_m} = - \sum_{l=0}^m (2l+1) T_{l+1} \frac{\partial}{\partial x} \left( \frac{\partial}{\partial x} +2v \right)R_l[v_x-v^2] = 0,
\end{align}
which, after integration, provides
\begin{align}
    - \sum_{l=0}^m (2l+1) T_{l+1} \left( \frac{\partial}{\partial x} +2v \right)R_l[v_x-v^2] = \alpha_m,
\end{align}
where $\alpha_m$ is the integration constant in this setup. From the $m=0$ equation of the hierarchy, set $T_1=-x$ so that the above equation is nothing but an ODE in the variable $x$ dependent on some additional parameters. These parameters can be absorbed due to the additional symmetry reduction
\begin{align}
    v(x,T_{m+1} ) =& \frac{u(z)}{[(2m+1) T_{m+1}]^{\frac{1}{2m+1}}} , \qquad z = \frac{x}{\left[(2m+1) T_{m+1}\right]^{\frac{1}{2m+1}}}, \cr
    R_l[v_x -v^2] =& \frac{1}{\left[(2m+1) T_{m+1}\right]^{\frac{2l}{2m+1}}} \text{Len}_l[u_z-u^2], \cr
    t_0 = & -z, \qquad t_l = \frac{(2l+1) T_{m+1}}{\left[(2m+1) T_{m+1}\right]^{\frac{2l+1}{2m+1}}},\,\, \forall\, l \in \llbracket 1, m\rrbracket \,\,,\,\,\, t_m = 1.
 \end{align}
In this way, one obtains the FN PII hierarchy defined as follows:
\begin{definition}[The FN PII hierarchy] The FN PII hierarchy is defined as the set of nonlinear ODEs given by 
\begin{align}
    \left( \frac{d}{d z} + 2  u \right) \text{Len}_n[u_z-u^2] + \sum_{l=1}^{m-1}t_l  \left( \frac{d}{d z} + 2  u \right) \text{Len}_l[u_z-u^2] = z u + \alpha_m, 
 \end{align}
 with the operator $\text{Len}_m$ satisfying the Lenard recursion
 \begin{align}
      \text{Len}_0[u_z-u^2] =& \frac{1}{2} \qquad  \text{and} \qquad
    \frac{d}{d z} \text{Len}_{m+1} = \left( \frac{d^3}{d z^3} + 4(u_z - u^2) \frac{d}{d z} + 2(u_z-u^2)_z \right) \text{Len}_m,
 \end{align} 
 and $\alpha_m$ is a constant.
\end{definition}

In \cite{Flaschka1980}, the FN PII equation was recovered from the compatibility equations of linear differential equations involving the FN PII Lax matrix via the above reduction. Following a proposal of \cite{https://doi.org/10.1002/sapm1974534249}, the construction has been generalized to the full hierarchy in \cite{articleCJM,Kudryashov} for specific values of the times and then by \cite{mazzocco2007hamiltonian} for generic values. This last work provides the isomonodromic formulation of the FN PII hierarchy summarized below.

\begin{definition} \label{DefIsomondoromicProb}
    Let $d\geq 1$. The isomonodromic deformation problem of the $d$-th member of the FN PII hierarchy is given by 
    \bea
        \frac{\partial}{\partial \lambda} \Psi_{\text{FN}}(\lambda) &=& L^{(d)}_{\text{FN}}(\lambda) \Psi_{\text{FN}}(\lambda), \qquad \frac{\partial}{\partial z} \Psi_{\text{FN}}(\lambda) = \begin{pmatrix}
            - \lambda &u \\ u& \lambda    \end{pmatrix} \Psi_{\text{FN}}(\lambda)  \cr
         \text{and } \,  \frac{\partial}{\partial t_k} \Psi_{\text{FN}}(\lambda) &=& \frac{1}{2 k +1}\left( M^{(k)}(\lambda) - \begin{pmatrix}
             0  & (\partial_z  + 2 u ) \text{Len}_k \\
             (\partial_z  + 2 u ) \text{Len}_k  & 0
         \end{pmatrix} \right) \Psi_{\text{FN}}(\lambda)\cr
         &\coloneqq&A_{FN,k}^{(d)}(\lambda)\Psi_{\text{FN}}(\lambda),\,\, \forall\, k\in \llbracket 1,d-1\rrbracket,
    \eea
    with the Lax matrix 
    \beq L_{\text{FN}}^{(d)}(\lambda)\coloneqq\frac{1}{\lambda}\left[\begin{pmatrix}-\lambda z& -\alpha_d\\ -\alpha_d& \lambda z \end{pmatrix}+M^{(d)}(\lambda)+\sum_{i=1}^{d-1} t_i M^{(i)}(\lambda)\right]\eeq
and with $M^{(l)}(\lambda)$ a polynomial in $\lambda$ given by
\beq \forall\, l\in \llbracket 1,d\rrbracket\,:\,  M^{(l)}(\lambda)\coloneqq\begin{pmatrix}\underset{j=1}{\overset{2l+1}{\sum}} A_j^{(l)} \lambda^j & \underset{j=1}{\overset{2l}{\sum}} B_j^{(l)} \lambda^j  \\ \underset{j=1}{\overset{2l}{\sum}} C_j^{(l)}\lambda^j & -\underset{j=1}{\overset{2l+1}{\sum}} A_j^{(l)} \lambda^j \end{pmatrix}
\eeq 
with the coefficients 
\begin{align}
    A^{(l)}_{2l+1} \coloneqq& 4^l, \qquad A_{2k} = 0,\qquad  \forall\,k \in \llbracket 0,l\rrbracket,  \nonumber \\
    A^{(l)}_{2k+1} \coloneqq& \frac{4^{k+1}}{2} \left[ \text{Len}_{l-k} (u_z-u^2) - \frac{d}{d z } \left( \frac{d}{d z }  + 2 u \right) \text{Len}_{l-k-1} [u_z-u^2]   \right], \qquad \forall\,k \in \llbracket 0,l-1\rrbracket,\nonumber \\
     B^{(l)}_{2k+1} \coloneqq& \frac{4^{k+1}}{2} \frac{d}{d z } \left( \frac{d}{d z }  + 2 u \right) \text{Len}_{l-k-1}[u_z-u^2], \qquad \forall\,k \in \llbracket 0,l-1\rrbracket, \nonumber \\ 
      B^{(l)}_{2k} \coloneqq& -4^{k} \left( \frac{d}{d z }  + 2 u \right) \text{Len}_{l-k}[u_z-u^2], \qquad \forall\,k\in \llbracket 1, l\rrbracket, \nonumber \\
      C^{(l)}_{2k+1} \coloneqq& - B^{(l)}_{2k+1}, \,\,\,\forall\, k\in \llbracket0 ,l-1\rrbracket, \qquad \qquad  C^{(l)}_{2k} \coloneqq B^{(l)}_{2k}, \qquad \forall\, k \in \llbracket 0,l\rrbracket.
\end{align}
\end{definition}

In the previous definition, the deformation parameters are $(z=t_0,t_1,\dots,t_{d-1})$ while the spectral parameter is $\lambda$ and $t_d=1$. The Lax matrix is $L_{\text{FN}}^{(d)}$. The compatibility of the above three equations provides the FN PII hierarchy and their consistency has been checked in \cite{mazzocco2007hamiltonian}. We will not reproduce the computations here and will state the main observation behind the above definition: the linear problem of \autoref{DefIsomondoromicProb} is the isomonodromic problem of the FN PII hierarchy. It is rather useful to regroup the coefficients of the matrices into the following form:
\beq L_{\text{FN}}^{(d)}(\lambda)\coloneqq\begin{pmatrix}\underset{k=0}{\overset{d}{\sum}} a_{2k+1}^{(d)}\lambda^{2k} & \underset{k=0}{\overset{d}{\sum}}  b_{2k}^{(d)}\lambda^{2k-1}+ \underset{k=0}{\overset{d-1}{\sum}}  b_{2k+1}^{(d)}\lambda^{2k}\\\underset{k=0}{\overset{d}{\sum}}  b_{2k}^{(d)}\lambda^{2k-1}- \underset{k=0}{\overset{d-1}{\sum}}  b_{2k+1}^{(d)}\lambda^{2k}& -\underset{k=0}{\overset{d}{\sum}}  a_{2k+1}^{(d)}\lambda^{2k}
    \end{pmatrix} \,.
\eeq
The correspondence between $\left(A_j^{(l)},B_j^{(l)},C_j^{(l)}\right)_{j,l}$ and $(a_{2k+1},b_{2k},b_{2k+1})_{0\leq k\leq d}$ is given by
\begin{align}
a_1^{(d)}=&\sum_{l=1}^d t_l A_{1}^{(l)} -z, \qquad  b_0^{(d)}=-\alpha_d, \qquad a_{2k+1}^{(d)}=\sum_{l=1}^d t_l A_{2k+1}^{(l)} \,\,,\,\, \forall \, k\in \llbracket 1,d\rrbracket, \cr
b_{2k+1}^{(d)}=&\sum_{l=1}^d t_l B_{2k+1}^{(l)} \,\,,\,\, \forall \, k\in \llbracket 1,d-1\rrbracket, \quad \quad b_{2k}^{(d)}=\sum_{l=1}^d t_l B_{2k}^{(l)} \,\,,\,\, \forall \, k\in \llbracket 1,d\rrbracket.
\end{align} 
The isomonodromic reformulation of the hierarchy allows one to attach a linear differential system whose isomonodromic deformations provide the members of the hierarchy. However, in order to have explicit expressions, one needs to introduce a set of Darboux coordinates. In \cite{mazzocco2007hamiltonian}, the construction of the Darboux coordinates is based on considering the problem as a vector field on the coadjoint orbits of a twisted loop algebra. Using the Poisson structure on the dual loop algebra one obtains a degenerate Poisson-Lie bracket whose symplectic leaves are the co-adjoint orbits of the elements. This idea of the Poisson structure on dual loop algebras is known and well documented, see for instance \cite{Babelon_Bernard_Talon_2003,HarnadRouthier1994}. This is the main construction of the isospectral approach to integrable hierarchies. Following this, one obtains the Casimir elements as the times $t_1,\dots,t_{d}$.  One important observation of this approach is that the dimension of the coadjoint orbits is $2d$, this is of central importance when discussing the definition of the coordinates used in \cite{mazzocco2007hamiltonian} where the choice was made to compensate this miss-match. Note that this is due to a grading on the loop algebra manifested by a symmetry on the entries of the Lax matrices. Indeed, in this setting, the Lax matrix and the wave matrix are constrained by the symmetry relations:
\begin{align}
     \Psi_{\text{FN}}(-\lambda)= \sigma_1 \Psi_{\text{FN}}(\lambda)\sigma_1, \qquad 
L_{\text{FN}}^{(d)}(-\lambda)=-\sigma_1 L_{\text{FN}}^{(d)}(\lambda)\sigma_1, \qquad A_{\text{FN},k}^{(d)}(-\lambda)=\sigma_1 A_{\text{FN},k}^{(d)}(\lambda)\sigma_1,
\end{align}
where we denote $(\sigma_1,\sigma_2,\sigma_3)$ the three Pauli matrices:
\begin{align}
    \label{DefPauliMatrices} \sigma_1\coloneqq\begin{pmatrix} 0&1\\ 1&0\end{pmatrix}\,,\, \qquad\sigma_2\coloneqq\begin{pmatrix}
    0&-i\\ i&0
\end{pmatrix}\,,\, \qquad \sigma_3\coloneqq\begin{pmatrix} 1&0\\0&-1\end{pmatrix}.
\end{align}
In particular this symmetry condition implies that the exponential behavior of each column of the wave function $\Psi(\lambda)$ at each pole cannot be a single exponential since the symmetry mixes entry $(1,2)$ with entry $(2,1)$. This indicates that the current gauge necessarily differs from the one used in \autoref{SecP4hierarchy} where this property is always verified and is a central element to derive the Hamiltonians (see \cite{marchal2024hamiltonianrepresentationisomonodromicdeformations} for details about the relation between asymptotics of $\td{\Psi}(\lambda)$ and the Lax matrices). In \cite{alameddine2026symmetryreductionpainleveiv}, the authors tried to enforce on $\td{\Psi}(\lambda)$ the previous symmetry condition and to identify Darboux coordinates canonically but the identification of the Darboux coordinates was complicated. However, upon the proper identification of the Darboux coordinates, the authors could match the underlying Hamiltonian evolutions. As we see below, fixing the normalization issue will allow for a direct correspondence of the Darboux coordinates. In order to achieve a gauge where each column of the wave matrix has pure exponential behavior at each pole, we perform the following gauge transformation:
\begin{definition}[Transformation for the FN PII hierarchy]\label{DefgaugeP2} We define
\beq \check{\Psi}_{\text{FN}}(\lambda)\coloneqq S\Psi_{\text{FN}}(\lambda)S^{-1}. 
\eeq    
\end{definition}
Note that we have $S\sigma_1 S^{-1}=-\sigma_3$ and we have:
\begin{align}
    \check{L}_{\text{FN}}^{(d)}(\lambda)=SL_{\text{FN}}^{(d)}(\lambda) S^{-1} \,, \qquad \check{A}_{\text{FN},k}^{(d)}(\lambda)=SA_{\text{FN},k}^{(d)}(\lambda) S^{-1}.
\end{align} 
Therefore the symmetry condition is equivalent to 
\begin{align}
\check{\Psi}_{\text{FN}}(-\lambda)= \sigma_3 \check{\Psi}_{\text{FN}}(\lambda) \sigma_3, \qquad
 \check{L}_{\text{FN}}^{(d)}(-\lambda)=-\sigma_3 \check{L}_{\text{FN}}^{(d)}(\lambda) \sigma_3, \qquad
 \check{A}_{\text{FN},k}^{(d)}(-\lambda)= \sigma_3  \check{A}_{\text{FN},k}^{(d)}(\lambda) \sigma_3.    
\end{align}
This is equivalent in coordinates to 
\bea
    \left[\check{L}_{\text{FN}}^{(d)}(-\lambda)\right]_{1,1}&=&- \left[\check{L}_{\text{FN}}^{(d)}(\lambda)\right]_{1,1}, \qquad
 \left[\check{L}_{\text{FN}}^{(d)}(-\lambda)\right]_{2,2}=- \left[\check{L}_{\text{FN}}^{(d)}(\lambda)\right]_{2,2}, \qquad
 \left[\check{L}_{\text{FN}}^{(d)}(-\lambda)\right]_{1,2}= \left[\check{L}_{\text{FN}}^{(d)}(\lambda)\right]_{1,2}, \cr
 \left[\check{L}_{\text{FN}}^{(d)}(-\lambda)\right]_{2,1}&=& \left[\check{L}_{\text{FN}}^{(d)}(\lambda)\right]_{2,1}, \qquad
\left[\check{A}_{\text{FN},k}^{(d)}(-\lambda)\right]_{1,1}=\left[\check{A}_{\text{FN},k}^{(d)}(\lambda)\right]_{1,1}, \qquad
\left[\check{A}_{\text{FN},k}^{(d)}(-\lambda)\right]_{2,2}=\left[\check{A}_{\text{FN},k}^{(d)}(\lambda)\right]_{2,2}, \cr
\left[\check{A}_{\text{FN},k}^{(d)}(-\lambda)\right]_{1,2}&=&-\left[\check{A}_{\text{FN},k}^{(d)}(\lambda)\right]_{1,2}, \qquad
\left[\check{A}_{\text{FN},k}^{(d)}(-\lambda)\right]_{2,1}=-\left[\check{A}_{\text{FN},k}^{(d)}(\lambda)\right]_{2,1}.
\eea
More precisely we have:
\small{\bea \check{L}_{\text{FN}}^{(d)}(\lambda)&=&\begin{pmatrix} -\frac{1}{2}\left([L_{\text{FN}}^{(d)}(\lambda)]_{1,2}+[L_{\text{FN}}^{(d)}(\lambda)]_{2,1}\right) & \frac{1}{2}\left([L_{\text{FN}}^{(d)}(\lambda)]_{1,2}-[L_{\text{FN}}^{(d)}(\lambda)]_{2,1}\right)+[L_{\text{FN}}^{(d)}(\lambda)]_{1,1} \\  -\frac{1}{2}\left([L_{\text{FN}}^{(d)}(\lambda)]_{1,2}-[L_{\text{FN}}^{(d)}(\lambda)]_{2,1}\right)+[L_{\text{FN}}^{(d)}(\lambda)]_{1,1}& \frac{1}{2}\left([L_{\text{FN}}^{(d)}(\lambda)]_{1,2}+[L_{\text{FN}}^{(d)}(\lambda)]_{2,1}\right)\end{pmatrix}\cr
&=&\begin{pmatrix} -\underset{k=0}{\overset{d}{\sum}}b_{2k}^{(d)} \lambda^{2k-1} & \underset{k=0}{\overset{d}{\sum}}a_{2k+1}^{(d)} \lambda^{2k}+ \underset{k=0}{\overset{d-1}{\sum}}b_{2k+1}^{(d)} \lambda^{2k} \\ \underset{k=0}{\overset{d}{\sum}}a_{2k+1}^{(d)} \lambda^{2k}- \underset{k=0}{\overset{d-1}{\sum}}b_{2k+1}^{(d)} \lambda^{2k}& \underset{k=0}{\overset{d}{\sum}}b_{2k}^{(d)} \lambda^{2k-1}\end{pmatrix}.
\eea}
\normalsize{Note} that the normalization at infinity is now given by:
\beq \check{L}_{\text{FN}}^{(d)}(\lambda)\overset{\lambda\to \infty}{=} -t_{2d+1} \begin{pmatrix} 0&1\\ 1&0\end{pmatrix} \lambda^{2d} +O(\lambda^{2d-1}),\eeq
and that the symmetry condition $\check{\Psi}_{\text{FN}}(-\lambda)= \sigma_3 \check{\Psi}_{\text{FN}}(\lambda) \sigma_3$ is compatible with the fact that each column of $\check{\Psi}_{\text{FN}}(\lambda)$ is a pure exponential at each pole. 

\medskip

In \cite{mazzocco2007hamiltonian}, the authors introduced a parameterization of the Lax matrix in terms of coordinates $(Q_k,P_k)_{1\leq k\leq d}$ that they proved to be canonical with respect to the Poisson structure. 

\begin{definition}[Darboux coordinates for the FN PII hierarchy (Section $5$ of \cite{mazzocco2007hamiltonian})]\label{DefDarbouxFNP2} Let $\left(q_i^{\text{FN}}\right)_{1\leq i\leq 2d}$ be the zeros of $\left[\check{L}_{\text{FN}}^{(d)}(\lambda)\right]_{1,2}$ such that $q^{\text{FN}}_{i+d}=-q^{\text{FN}}_i$ for all $1\leq i\leq d$. Their dual partners are $\left(p^{\text{FN}}_i\right)_{1\leq i\leq 2d}=\left[\check{L}_{\text{FN}}^{(d)}(q_i)\right]_{1,1}$ and satisfy $p^{\text{FN}}_{i+d}=-p^{\text{FN}}_i$ for all $1\leq i\leq d$. From these oper Darboux coordinates $\left(q^{\text{FN}}_i,p^{\text{FN}}_i\right)_{1\leq i\leq d}$, we define $\left(Q^{\text{FN}}_k,P^{\text{FN}}_k\right)_{1\leq k\leq d}$ by:
\bea Q^{\text{FN}}_k&\coloneqq&\Pi^{\text{FN}}_{2k}=\sum_{1\leq i_1<\dots<i_{2k}\leq 2d} q^{\text{FN}}_{i_1}\dots q^{\text{FN}}_{i_{2k}} \,\,,\,\, \forall \, k\in \llbracket 1, d\rrbracket,\cr
-2\left(p^{\text{FN}}_i+\frac{b_0}{q^{\text{FN}}_i}\right)&=&\sum_{k=1}^{d} P^{\text{FN}}_{k} \frac{\partial Q^{\text{FN}}_{k}}{\partial q^{\text{FN}}_i}  \,\,,\,\,  \forall \, i\in \llbracket 1, 2d\rrbracket.
\eea
Alternatively, we have for all $k\in \llbracket 1,d\rrbracket$:
\begin{align}
    P^{\text{FN}}_k\coloneqq\sum_{j=1}^{d}\frac{1}{2j}b_{2j}^{(d)} \frac{\partial S^{\text{FN}}_{2j}}{\partial \Pi^{\text{FN}}_{2k}},\, \qquad\text{ where } \qquad S^{\text{FN}}_{j}=\sum_{i=1}^{2d} \left(q^{\text{FN}}_i\right)^j \,\,,\,\, \forall \,j\in \llbracket 1,d\rrbracket.
\end{align}  
\end{definition}

Let us note that we have swapped $Q_k \leftrightarrow P_k$ in the definition given in \cite{mazzocco2007hamiltonian}. This choice is motivated for a better identification with the formalism of \autoref{SecP4hierarchy} and so that the symplectic two-form $\Omega=\underset{k=1}{\overset{d}{\sum}} dQ^{\text{FN}}_k\wedge dP^{\text{FN}}_k$ match with no sign difference with the formalism obtained from the Painlev\'{e} IV hierarchy. In particular, we have that $\left(q^{\text{FN}}_i,p^{\text{FN}}_i\right)_{1\leq i\leq d}$ and $\left(Q^{\text{FN}}_k,P^{\text{FN}}_k\right)_{1\leq k\leq d}$ are canonical with respect to the Kostant-Kirillov Poisson structure in the sense that:
\bea \left\{q^{\text{FN}}_i,q^{\text{FN}}_j\right\}&=&0\,\,,\,\, \left\{p^{\text{FN}}_i,p^{\text{FN}}_j\right\}=0 \,\,,\,\, \left\{q^{\text{FN}}_i,p^{\text{FN}}_j\right\}=\delta_{i,j}\,\,,\,\, \forall \,(i,j)\in \llbracket 1,d\rrbracket^2\cr
\left\{Q^{\text{FN}}_i,Q^{\text{FN}}_j\right\}&=&0\,\,,\,\, \left\{P^{\text{FN}}_i,P^{\text{FN}}_j\right\}=0 \,\,,\,\, \left\{Q^{\text{FN}}_i,P^{\text{FN}}_j\right\}=\delta_{i,j}\,\,,\,\, \forall \,(i,j)\in \llbracket 1,d\rrbracket^2
\eea
Note also that we have by definition:
\beq \left[\check{L}_{\text{FN}}^{(d)}(\lambda)\right]_{1,2}=-t_{\infty,2d+1}\prod_{i=1}^d (\lambda^2- q_i^2)=-t_{\infty,2d+1}\prod_{i=1}^{2d} (\lambda- q_i) = -t_{\infty,2d+1}\left(\lambda^{2d} +\sum_{j=0}^{d-1} Q^{\text{FN}}_{d-j} \lambda^{2j}\right)
\eeq
and 
\bea \left[\check{L}_{\text{FN}}^{(d)}(\lambda)\right]_{1,1}&=&\frac{1}{\lambda}\left(-b_0^{(d)}+ \sum_{i=1}^{d} (p_iq_i+b_0^{(d)})  \frac{\underset{1\leq j\neq i \leq d}{\prod}(\lambda^2-q_j^2)}{\underset{1\leq j\neq i \leq d}{\prod}(q_i^2-q_j^2)}\right)\cr
&=&-\frac{b_0^{(d)}}{\lambda}-\sum_{k=1}^{d}\left(P^{\text{FN}}_{k}+\sum_{m=0}^{d-1-k} Q^{\text{FN}}_{d-k-m}P^{\text{FN}}_{d-m}\right)\lambda^{2k-1}.
\eea
Let us now turn our attention to the symmetry reduction and the identification with this formalism. 

\section{Symmetry reduction from the PIV hierarchy to the FN PII hierarchy} \label{secSymmetry}

\subsection{Symmetry reduction}
We will discuss in this section the symmetry reduction that we impose on the horizontal sections and the Lax matrices in the respective normalization that we established in \autoref{SecP4hierarchy} and \autoref{SecFNP2}. This is given in the following definition.

\begin{definition}[Symmetry condition on the PIV hierarchy]\label{DefSymmetry} The symmetry is given by
\begin{align}
\td{\Psi}(-\lambda)= \sigma_1 \td{\Psi}(\lambda)\sigma_1, \qquad
\td{L}(-\lambda)= -\sigma_1 \td{L}(\lambda)\sigma_1, \qquad
\td{A}_{\boldsymbol{\alpha}}(-\lambda)= \sigma_1 \td{A}_{\boldsymbol{\alpha}}(\lambda)\sigma_1.
\end{align}
Translated to the new normalization, one has the diagonal $\mathbb{Z}_2$-involution
\beq \check{\Psi}(-\lambda)= \sigma_3 \check{\Psi}(\lambda)\sigma_3.\eeq
It lifts to an involution at the level of the Lax matrices given by 
\begin{align} \check{L}(-\lambda) = -\sigma_3 \check{L}(\lambda)\sigma_3,  \qquad 
\check{A}_{\boldsymbol{\alpha}}(-\lambda) =  \sigma_3 \check{A}(\lambda)\sigma_3.
\end{align}
\end{definition}

The first symmetry is the one studied in \cite{alameddine2026symmetryreductionpainleveiv}. In particular, it reproduces the symmetry corresponding to the FN PII hierarchy described in \autoref{SecFNP2}. It turns out that the symmetry is more transparent in the $\check{\Psi}(\lambda)$ normalization, because $\sigma_3$ is diagonal and thus the symmetry is equivalent to parity conditions on the entries of the Lax matrices:
\begin{align}
\left[\check{L}(-\lambda)\right]_{1,1}=&- \left[\check{L}(\lambda)\right]_{1,1}, \qquad
\left[\check{L}(-\lambda)\right]_{2,2}=- \left[\check{L}(\lambda)\right]_{2,2}, \qquad
\left[\check{L}(-\lambda)\right]_{1,2}= \left[\check{L}(\lambda)\right]_{1,2}, \nonumber \\
\left[\check{L}(-\lambda)\right]_{2,1}=& \left[\check{L}(\lambda)\right]_{2,1}, \qquad
\left[\check{A}_{\boldsymbol{\alpha}}(-\lambda)\right]_{1,1}=\left[\check{A}_{\boldsymbol{\alpha}}(\lambda)\right]_{1,1}, \qquad
\left[\check{A}_{\boldsymbol{\alpha}}(-\lambda)\right]_{2,2}=\left[\check{A}_{\boldsymbol{\alpha}}(\lambda)\right]_{2,2}, \nonumber \\ 
\left[\check{A}_{\boldsymbol{\alpha}}(-\lambda)\right]_{1,2}=&-\left[\check{A}_{\boldsymbol{\alpha}}(\lambda)\right]_{1,2}, \qquad
\left[\check{A}_{\boldsymbol{\alpha}}(-\lambda)\right]_{2,1}=-\left[\check{A}_{\boldsymbol{\alpha}}(\lambda)\right]_{2,1}.
\end{align}

\subsection{Explicit Hamiltonian representation of the FN PII hierarchy}
The previous symmetry implies that half of the symmetric geometric Darboux coordinates and half of the times of the PIV hierarchy vanish. More precisely, one has the following result. 
\begin{theorem}[Reduced Lax and Hamiltonian for the PIV hierarchy]\label{TheoReducedLaxHamSym} Under the symmetry of \autoref{DefSymmetry}, we have:
\begin{align}
    t_{\infty,2k}=&0 \,\,,\,\, \forall \, k\in \llbracket 0, d\rrbracket , \qquad Q_{\infty,2k+1}=0 \,\,,\,\, \forall \, k\in \llbracket 0, d-1\rrbracket, \nonumber  \\
P_{\infty,2k+1}=&0 \,\,,\,\, \forall \, k\in \llbracket 0, d-1\rrbracket \qquad
\nu_{2k}^{(\boldsymbol{\alpha})}=0 \,\,,\,\, \forall \, k\in \llbracket 0, d\rrbracket, \nonumber \\
\check{H}_{0,1}=&0, \qquad
\check{H}_{\infty,2k+1}= 0 \,\,,\,\, \forall \, k\in \llbracket 0, d-2\rrbracket.
\end{align}
Consequently the evolutions of the reduced Darboux coordinates $\left(Q_{\infty,2k},P_{\infty,2k}\right)_{0\leq k\leq d-1}$ are Hamiltonians. Moreover, the Hamiltonians are given by
\beq \label{DefHamRed2} \mathrm{Ham}^{\mathrm{red}}_{(\boldsymbol{\alpha})}(Q_{\infty,0},\dots,Q_{\infty,2d-2},P_{\infty,0},\dots,P_{\infty,2d-2};t_{\infty,1},\dots,t_{\infty,2d+1},t_{0,0})=\sum_{k=0}^{d-1} \nu_{\infty,2k+1}^{\boldsymbol{(\alpha)}}\check{H}^{\mathrm{red}}_{\infty,2k},
\eeq
with
\small{\beq \label{RelationNuAlphaInftyRed} M^{\mathrm{red}}_\infty\begin{pmatrix} \nu^{(\boldsymbol{\alpha})}_{\infty,1}\\\nu^{(\boldsymbol{\alpha})}_{\infty,3}\\ \vdots \\ \nu^{(\boldsymbol{\alpha})}_{\infty,2d-1}\end{pmatrix}=\begin{pmatrix}\frac{\alpha_{\infty,2d-1}}{2d-1}\\\frac{\alpha_{\infty,2d-3}}{2d-3}\\ \vdots \\ \frac{\alpha_{\infty,1}}{1}\end{pmatrix} \,\text{ , } \,M^{\mathrm{red}}_{\infty}(t_{\infty,3},\dots,t_{\infty,2d+1})\coloneqq\begin{pmatrix}
   t_{\infty,2d+1}&0&\dots &\dots&0 \\
t_{\infty,2d-1} &t_{\infty,2d+1} &\ddots  & &0\\
\vdots &\ddots&\ddots&\ddots&0\\
\vdots &&\ddots&\ddots&0\\
t_{\infty,3} &\dots&\dots&t_{\infty,2d-1}& t_{\infty,2d+1}\\ 
\end{pmatrix},\eeq}
\normalsize{or} equivalently
\small{
\bea &&\label{NewHamReduced4}\begin{pmatrix}\mathrm{Ham}^{\mathrm{red}}_{{t_{\infty,1}}}(Q_{\infty,0},\dots,Q_{\infty,2d-2},P_{\infty,0},\dots,P_{\infty,2d-2};t_{\infty,1},\dots,t_{\infty,2d+1},t_{0,0})\\ \vdots \\ (2d-3)\mathrm{Ham}^{\mathrm{red}}_{{t_{\infty,2d-3}}}(Q_{\infty,0},\dots,Q_{\infty,2d-2},P_{\infty,0},\dots,P_{\infty,2d-2};t_{\infty,1},\dots,t_{\infty,2d+1},t_{0,0})\\
(2d-1)\mathrm{Ham}^{\mathrm{red}}_{{t_{\infty,2d-1}}}(Q_{\infty,0},\dots,Q_{\infty,2d-2},P_{\infty,0},\dots,P_{\infty,2d-2};t_{\infty,1},\dots,t_{\infty,2d+1},t_{0,0})
\end{pmatrix}=\cr
&&\left(M^{\mathrm{red}}_{\infty}(t_{\infty,3},\dots,t_{\infty,2d+1})\right)^{-1}\begin{pmatrix}\check{H}^{\mathrm{red}}_{\infty,2d-2}(Q_{\infty,0},\dots,Q_{\infty,2d-2},P_{\infty,0},\dots,P_{\infty,2d-2};t_{\infty,1},\dots,t_{\infty,2d+1},t_{0,0})\\ \vdots\\ \check{H}^{\mathrm{red}}_{\infty,2}(Q_{\infty,0},\dots,Q_{\infty,2d-2},P_{\infty,0},\dots,P_{\infty,2d-2};t_{\infty,1},\dots,t_{\infty,2d+1},t_{0,0}) \\ \check{H}^{\mathrm{red}}_{\infty,0}(Q_{\infty,0},\dots,Q_{\infty,2d-2},P_{\infty,0},\dots,P_{\infty,2d-2};t_{\infty,1},\dots,t_{\infty,2d+1},t_{0,0}) 
\end{pmatrix}\cr&&\eea}
\normalsize{and}
\bea  \label{DefHsReduced}
&&\forall \, j\in \llbracket 0, d-1\rrbracket\,:\, \check{H}^{\mathrm{red}}_{\infty,2j}(Q_{\infty,0},\dots,Q_{\infty,2d-2},P_{\infty,0},\dots,P_{\infty,2d-2};t_{\infty,1},\dots,t_{\infty,2d+1},t_{0,0})=\cr&&-\Res_{\lambda\to \infty}\lambda^{-2j-1} \left[ \left(\check{L}^{\mathrm{red}}_{1,1}(\lambda)\right)^2+\check{L}^{\mathrm{red}}_{2,1}(\lambda)\check{L}^{\mathrm{red}}_{1,2}(\lambda) +\check{L}^{\mathrm{red}}_{1,2}(\lambda)\,\partial_\lambda\left(\frac{ \check{L}^{\mathrm{red}}_{1,1}(\lambda)}{\check{L}^{\mathrm{red}}_{1,2}(\lambda)}\right)\right].
\eea
The Lax matrices of \autoref{TheoLaxMatricesQinfty} reduce to
\bea \left[\check{L}^{\mathrm{red}}(\lambda)\right]_{1,2}&=&-t_{\infty,2d+1}\left(\sum_{k=0}^{d-1} Q_{\infty,2k}\lambda^{2k} +\lambda^{2d}\right),\cr
\left[\check{L}^{\mathrm{red}}(\lambda)\right]_{1,1}&=&-\sum_{k=1}^{d} P_{\infty,2 d-2k}\lambda^{2k-1}-\sum_{k=1}^{d-1}\sum_{m=0}^{d-1-k}P_{\infty,2m}Q_{\infty,2k+2m}\lambda^{2k-1}+ \frac{t_{0,0}}{\lambda} \cr
&&+\frac{t_{0,0}}{Q_{\infty,0}}\left(\lambda^{2d-1}+\sum_{k=1}^{d-1} Q_{\infty,2k}\lambda^{2k-1}\right),\cr
[\check{L}^{\mathrm{red}}(\lambda)]_{2,1}&=& \left[\frac{\underset{j= d}{\overset{2d}{\sum}}\left(\underset{m=0}{\overset{2d-j}{\sum}} t_{\infty,2 d +1-2m}t_{\infty,2j+2m-2d+1}\right) \lambda^{2j} -\check{L}_{1,1}(\lambda)^2}{\check{L}_{1,2}(\lambda)}\right]_{\infty,+},\cr
\left[\check{L}^{\mathrm{red}}(\lambda)\right]_{2,2}&=&-\left[\check{L}^{\mathrm{red}}(\lambda)\right]_{1,1},
\eea
and 
\bea \left[\check{A}^{\mathrm{red}}_{\boldsymbol{\alpha}}(\lambda)\right]_{1,2}&=&-t_{\infty,2d+1}\left(\sum_{i=0}^{d-1}\nu_{\infty,2i+1}^{(\boldsymbol{\alpha})}\lambda^{2d-2i-1} +
\sum_{k=1}^{d-1}\sum_{i=0}^{k-1}\nu_{\infty,2i+1}^{(\boldsymbol{\alpha})}Q_{\infty,2k} \lambda^{2k-2i-1}\right),\cr
\left[\check{A}^{\mathrm{red}}_{\boldsymbol{\alpha}}(\lambda)\right]_{1,1}&=& \left[\check{L}^{\mathrm{red}}_{1,1}(\lambda) \left(\sum_{i=0}^{d-1} \nu^{(\boldsymbol{\alpha})}_{\infty,2i+1}\lambda^{-2i-1}\right)\right]_{\infty,+},\cr
\left[\check{A}^{\mathrm{red}}_{\boldsymbol{\alpha}}(\lambda)\right]_{2,1}&=& -\frac{t_{0,0}\mathcal{L}_{\boldsymbol{\alpha}}[Q_{\infty,0}]}{t_{\infty,2d+1}(Q_{\infty,0})^2\lambda} + \left[\frac{\check{L}^{\mathrm{red}}_{1,1}(\lambda)}{\check{L}^{\mathrm{red}}_{1,2}(\lambda)}\left(\frac{\check{L}^{\mathrm{red}}_{1,1}(\lambda)}{\check{L}^{\mathrm{red}}_{1,2}(\lambda)}\left[\check{A}^{\mathrm{red}}_{\boldsymbol{\alpha}}(\lambda)\right]_{1,2}-2\left[\check{A}^{\mathrm{red}}_{\boldsymbol{\alpha}}(\lambda)\right]_{1,1}\right)\right]_{\infty,+} \cr&&
+\left[\frac{\check{L}^{\mathrm{red}}_{1,1}(\lambda)}{\check{L}^{\mathrm{red}}_{1,2}(\lambda)}\left(\frac{\check{L}^{\mathrm{red}}_{1,1}(\lambda)}{\check{L}^{\mathrm{red}}_{1,2}(\lambda)}\left[\check{A}^{\mathrm{red}}_{\boldsymbol{\alpha}}(\lambda)\right]_{1,2}-2\left[\check{A}^{\mathrm{red}}_{\boldsymbol{\alpha}}(\lambda)\right]_{1,1}\right)\right]_{0,-}\cr
&&+\frac{t_{0,0}(1-t_{0,0})\left(\nu_{\infty,2d-1}^{\boldsymbol{(\alpha)}} +\underset{k=1}{\overset{d-1}{\sum}} \nu_{\infty,2k-1}^{(\boldsymbol{\alpha})} Q_{\infty,2k}\right) }{t_{\infty,2d+1}(Q_{\infty,0})^2\lambda}\cr
&&+\left[\frac{\underset{k=d}{\overset{2d-1}{\sum}}\underset{j=k+1}{\overset{2d}{\sum}}\underset{m=0}{\overset{2d-j}{\sum}} t_{\infty,2 d +1-2m}t_{\infty,2j+2m-2d+1}\nu^{(\boldsymbol{\alpha})}_{\infty,2j-2k-1} \lambda^{2k+1}
}{t_{\infty,2d+1}\left(\underset{k=0}{\overset{d-1}{\sum}} Q_{\infty,2k}\lambda^{2k} +\lambda^{2d}\right)}\right]_{\infty,+},\cr
\left[\check{A}^{\mathrm{red}}_{\boldsymbol{\alpha}}(\lambda)\right]_{2,2}&=&-\left[\check{A}^{\mathrm{red}}_{\boldsymbol{\alpha}}(\lambda)\right]_{1,1}.
\eea
\end{theorem}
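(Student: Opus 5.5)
The plan is to read off every vanishing statement from the parity conditions on the entries of $\check{L}$ and $\check{A}_{\boldsymbol{\alpha}}$ listed after \autoref{DefSymmetry}. After that, the Hamiltonian statement should follow from a general fact about involutions acting diagonally in Darboux coordinates.

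\textbf{Step 1 (times).} From the asymptotics of $\check{\Psi}$ in \autoref{PropLopersym} and the relation $\check{\Psi}(-\lambda)=\sigma_3\check{\Psi}(\lambda)\sigma_3$, we get $\check{\Psi}_{1,1}(-\lambda)=\check{\Psi}_{1,1}(\lambda)$. Comparing exponents, $\sum_k \frac{t_{\infty,k}}{k}\lambda^k$ must be even modulo constants and $\ln$ terms. Hence $t_{\infty,2k}=0$ for $k\ge 1$. The even parity of $\check{L}_{1,2}$ together with the $\ln\lambda$ monodromy term (equivalently, the trace/determinant structure of $\check{L}$ at infinity) gives $t_{\infty,0}=0$; I would check this via the $\lambda^{-1}$ coefficient of $\det\check{L}$.

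\textbf{Step 2 (coordinates).} Evenness of $\check{L}_{1,2}$ in \autoref{TheoLaxMatricesQinfty} kills the odd coefficients, so $Q_{\infty,2k+1}=0$. Oddness of $\check{L}_{1,1}$ then forces its even-degree coefficients to vanish. Substituting $Q_{\infty,\mathrm{odd}}=0$ into the formula of \autoref{TheoLaxMatricesQinfty}, the coefficient of $\lambda^{2j}$ is $-P_{\infty,2d-1-2j}$ plus a triangular combination of $P_{\infty,\mathrm{odd}}Q_{\infty,\mathrm{even}}$ terms. The extra $\frac{t_{0,0}}{Q_{\infty,0}}(\dots)$ part contributes only odd powers once the odd $Q$'s vanish. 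Solving triangularly downward then gives $P_{\infty,2k+1}=0$.

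\textbf{Step 3 ($\nu$ and $\check H$).} Oddness of $[\check{A}_{\boldsymbol{\alpha}}]_{1,2}$ together with \autoref{PropA12Form} and \autoref{PropAsymptoticExpansionA12} shows that the expansion $\sum_i\nu_{\infty,i}\lambda^{-i}$ of the oper entry is odd, because the oper gauge matrix preserves parities. Hence $\nu_{\infty,2k}=0$. Compatibly, $M_\infty$ with the even times set to zero maps odd-indexed $\nu$'s onto odd-indexed $\alpha$'s, which yields $M^{\mathrm{red}}_\infty$ and forces $\alpha_{\infty,2k}=0$.

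The integrand $(\check L_{1,1})^2+\check L_{2,1}\check L_{1,2}+\check L_{1,2}\partial_\lambda(\check L_{1,1}/\check L_{1,2})$ is even in $\lambda$. Therefore its residue against $\lambda^{-2k-2}$ at infinity vanishes, and so does its residue at $0$. This gives $\check H_{\infty,2k+1}=0$ and $\check H_{0,1}=0$.

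The reduced Lax matrices then follow by direct substitution into \autoref{TheoLaxMatricesQinfty}. The residue term in $\check L_{2,1}$ vanishes by parity, and only odd $\nu$ survive in $\check A$.

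\textbf{Step 4 (Hamiltonian structure).} The involution acts on $(\mathbf{Q}_\infty,\mathbf{P}_\infty)$ by $Q_{\infty,k}\mapsto(-1)^kQ_{\infty,k}$ and $P_{\infty,k}\mapsto(-1)^kP_{\infty,k}$. This is a symplectic map, since $\Omega=\sum dQ_{\infty,k}\wedge dP_{\infty,k}$ is preserved, and its fixed locus $\{Q_{\mathrm{odd}}=P_{\mathrm{odd}}=0\}$ is a symplectic submanifold with Darboux coordinates $(Q_{\infty,2k},P_{\infty,2k})$. The Hamiltonian \eqref{DefHam} restricted to even $\boldsymbol{\alpha}$-directions is invariant. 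Its $\partial/\partial P_{\infty,2k+1}$ and $\partial/\partial Q_{\infty,2k+1}$ derivatives vanish on the fixed locus, being odd functions there, so the flow is tangent to the locus. The evolution of the even coordinates is then generated by the restriction $\mathrm{Ham}^{\mathrm{red}}$.

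Setting $\nu_{\infty,0}=0$ removes the $\check\delta_{0,1}$ term, and $\nu_{\infty,\mathrm{even}}=0$ pairs the odd $\nu$'s only with $\check H_{\infty,2k}$. This gives \eqref{DefHamRed2}, and inverting $M^{\mathrm{red}}_\infty$ as in \autoref{PropHamReduced2} gives \eqref{NewHamReduced4}.

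\textbf{Main obstacle.} I expect the hardest step to be Step 2, namely confirming that the $t_{0,0}/Q_{\infty,0}$ correction term and the triangular sums have exactly the right parity so that $P_{\infty,\mathrm{odd}}=0$ follows. A secondary obstacle is checking the index bookkeeping in the reduced $[\check A]_{2,1}$ formula.
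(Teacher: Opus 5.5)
\textbf{Step 1 fails as written.} Set $\theta(\lambda)=\sum_{k=1}^{2d+1}\frac{t_{\infty,k}}{k}\lambda^k+t_{\infty,0}\ln\lambda$ and take $\check\Psi_{1,1}\sim e^{-\theta}$ as in \autoref{PropLopersym}. Then $\check\Psi_{1,1}(-\lambda)=\check\Psi_{1,1}(\lambda)$ forces $\theta$ to be even up to constants. That kills the \emph{odd}-indexed times, including $t_{\infty,2d+1}=-4^d$. Your ``hence $t_{\infty,2k}=0$'' is a parity slip; the correct conclusion from your premises is a contradiction.

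The contradiction arises because the two facts you combine do not hold for the same matrix. Since $\theta$ has odd leading degree, $\lambda\mapsto-\lambda$ turns $e^{-\theta}$ into something with the leading behaviour of $e^{+\theta}$. So in a basis with pure-exponential columns the involution must exchange the two columns (anti-diagonal right factor). The relation with right factor $\sigma_3$ holds instead for $S\td{\Psi}S^{-1}$, whose columns each mix $e^{\pm\theta}$.

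The clean fix is to argue with the Lax matrix, which does not see the choice of basis. We have $\det\check L(-\lambda)=\det\left(-\sigma_3\check L(\lambda)\sigma_3\right)=\det\check L(\lambda)$. By \eqref{Impot}, $-\det\check L(\lambda)=y(\lambda)^2+O(\lambda^{2d-2})$ with $y(\lambda)=\sum_{k=0}^{2d+1}t_{\infty,k}\lambda^{k-1}$. Inductively, the coefficient of $\lambda^{4d+1-2r}$, $1\le r\le d+1$, reduces to $2t_{\infty,2d+1}t_{\infty,2d+2-2r}$ and must vanish. This gives $t_{\infty,2d}=t_{\infty,2d-2}=\dots=t_{\infty,0}=0$. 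In particular $t_{\infty,0}$ is read off from the $\lambda^{2d-1}$ coefficient of $\det\check L$, not the $\lambda^{-1}$ one, which lies in the part not fixed by the times.

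Once this is repaired, the rest of your argument is correct and follows the route the paper intends. The paper gives no separate proof; it relies on the parity conditions on the entries of $\check L$ and $\check A_{\boldsymbol\alpha}$, read off in $(\mathbf{Q}_\infty,\mathbf{P}_\infty)$. Two small additions:
\begin{itemize}
\item Evenness of $\check L_{1,2}$ forbids a $\lambda^{-1}$ term. So the symmetric representative is automatically the one of \autoref{ChoiceRepresentative}, and \autoref{TheoLaxMatricesQinfty} applies.
\item Your Step 4 makes explicit what the paper compresses into ``Consequently'', but the invariance of the Hamiltonian is asserted rather than checked. On fibres with $t_{\infty,2k}=0$ the involution is $Q_{\infty,k}\mapsto(-1)^kQ_{\infty,k}$, $P_{\infty,k}\mapsto(-1)^kP_{\infty,k}$. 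The integrand of the residue formulas goes to its value at $-\lambda$, so $\check H_{\infty,k}\mapsto(-1)^k\check H_{\infty,k}$. Hence the Hamiltonian with only odd $\nu$'s is invariant; you should state this transformation rule.
\end{itemize}
Step 2 is routine rather than the main obstacle.

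Finally, substituting into \autoref{TheoLaxMatricesQinfty} gives a minus sign in front of the last bracket of $[\check A^{\mathrm{red}}_{\boldsymbol\alpha}]_{2,1}$. That sign is correct: it reproduces $[\check A^{\mathrm{red}}_{\boldsymbol\alpha}]_{2,1}=-t_{\infty,3}\nu^{(\boldsymbol\alpha)}_{\infty,1}\lambda$ for $d=1$. So a mismatch with the printed statement there is not a bookkeeping error on your side.
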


The previous theorem provides an explicit Lax pair representation of the FN PII hierarchy and explicit formulas for the Hamiltonians. Moreover, one can match entries $\left(\left[\check{L}^{\text{red}}(\lambda)\right]_{1,1},\left[\check{L}^{\text{red}}(\lambda)\right]_{1,2}\right)$ with $\left(\left[\check{L}_{\text{FN}}^{(d)}(\lambda)\right]_{1,1},\left[\check{L}_{\text{FN}}^{(d)}(\lambda)\right]_{1,2}\right)$ of \autoref{SecFNP2} to get

\begin{proposition}[Identification of the FN PII hierarchy with the reduced PIV hierarchy after symmetry]\label{PropIdentification}The Lax matrices $\check{L}^{\mathrm{red}}(\lambda)$ and $\check{L}_{\mathrm{FN}}^{(d)}(\lambda)$ match under the identification $t_{\infty,2d+1}=-4^{d}$ and
\begin{align}
    \alpha_d=&t_{0,0}, \qquad
z=t_{\infty,1}, \qquad
-4^{k}t_k= t_{\infty,2k+1}\,\,\,,\,\, \forall\, k\in \llbracket 1,d-1\rrbracket, \nonumber \\
Q_{d-k}^{\mathrm{FN}}=&Q_{\infty,2k} \,\,\,,\,\, \forall\, k\in \llbracket 0,d-1\rrbracket,   \qquad
P_d^{\mathrm{FN}}=P_{\infty,0}-\frac{t_{0,0}}{Q_{\infty,0}}, \nonumber \\
P_{d-k}^{\mathrm{FN}}=&P_{\infty,2k} \,\,,\,\, \forall\, k\in \llbracket 1,d-1\rrbracket.
\end{align}
 
Thus, it is trivial to obtain the Hamiltonian evolutions for $\left(Q_1^{\mathrm{FN}},\dots,Q_d^{\mathrm{FN}}, P_1^{\mathrm{FN}}, \dots, P_d^{\mathrm{FN}}\right)$ from \autoref{TheoReducedLaxHamSym}.\end{proposition}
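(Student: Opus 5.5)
We compare the two Lax matrices entry by entry in the $\check{\ }$ normalization, where the symmetry acts diagonally. Both $\check{L}^{\mathrm{red}}$ and $\check{L}^{(d)}_{\mathrm{FN}}$ belong to $F_d$, have the same pole structure, and satisfy the parity conditions $\check{L}(-\lambda)=-\sigma_3\check{L}(\lambda)\sigma_3$. In the traceless rank-two setting, the $(2,1)$ entry is determined by $\check{L}_{1,1}$, $\check{L}_{1,2}$ and the spectral invariant $\det\check{L}$. The latter is fixed by the formal normal forms at $\infty$ and $0$, up to the coefficients $\check{H}$ that are themselves expressed through the first row (\autoref{PropDefCi2}). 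So it suffices to match the first rows, together with the irregular times and the monodromy. We would therefore first fix the times. Comparing leading orders at infinity gives $t_{\infty,2d+1}=-4^d$ directly from the normalization $A^{(d)}_{2d+1}=4^d$. Expanding the diagonal entries $\mp\sum_k t_{\infty,k}\lambda^{k-1}$ of the formal normal form shows that in the FN Lax matrix the odd times come from $\lambda z$ and $t_iM^{(i)}$. Matching the eigenvalue asymptotics of $\check{L}^{(d)}_{\mathrm{FN}}$ at infinity, $\pm\sqrt{-\det}$, with $\pm\sum t_{\infty,2k+1}\lambda^{2k}$ then yields $z=t_{\infty,1}$ and $t_{\infty,2k+1}=-4^kt_k$. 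The main point here is that the top coefficients of $\det$ are determined by the Casimirs $t_k$ alone, not by $u$ and its derivatives. At $\lambda=0$, the residue of $\check{L}^{(d)}_{\mathrm{FN}}$ is $\begin{pmatrix}-b_0&0\\0&b_0\end{pmatrix}$ after conjugation by $S$. Comparing with \eqref{NormcheckL11}, $\check{L}_{1,1}=t_{0,0}/\lambda+O(1)$, gives $\alpha_d=t_{0,0}$, since $b_0=-\alpha_d$.

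Next we match the $(1,2)$ entries. Both are even monic polynomials of degree $2d$ times $-t_{\infty,2d+1}$. Comparing $-t_{\infty,2d+1}(\lambda^{2d}+\sum_k Q_{\infty,2k}\lambda^{2k})$ with $-t_{\infty,2d+1}(\lambda^{2d}+\sum_j Q^{\mathrm{FN}}_{d-j}\lambda^{2j})$ gives $Q^{\mathrm{FN}}_{d-k}=Q_{\infty,2k}$. The representative-fixing condition of \autoref{ChoiceRepresentative} (no pole at $0$ in $\check{L}_{1,2}$) is automatically satisfied by the FN matrix, so no residual diagonal conjugation is needed. This is the step to double-check.

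Then we match the $(1,1)$ entries. Both are odd in $\lambda$ with the simple pole $t_{0,0}/\lambda$. We compare the coefficient of $\lambda^{2k-1}$ in the reduced formula of \autoref{TheoReducedLaxHamSym} with $-\bigl(P^{\mathrm{FN}}_k+\sum_{m=0}^{d-1-k}Q^{\mathrm{FN}}_{d-k-m}P^{\mathrm{FN}}_{d-m}\bigr)$. Substituting $Q^{\mathrm{FN}}_{d-k}=Q_{\infty,2k}$ and the ansatz $P^{\mathrm{FN}}_{d-k}=P_{\infty,2k}+c_k$, the double sums coincide after reindexing $m\mapsto d-m$. The leftover term $\frac{t_{0,0}}{Q_{\infty,0}}(\lambda^{2d-1}+\sum Q_{\infty,2k}\lambda^{2k-1})$ has exactly the shape of the terms multiplying $P_{\infty,0}$: $P_{\infty,0}$ appears with $\lambda^{2d-1}$ and with $Q_{\infty,2k}\lambda^{2k-1}$ in the $m=0$ part of the sum. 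This term is therefore absorbed by $c_0=-t_{0,0}/Q_{\infty,0}$ and $c_k=0$ for $k\ge1$, which is the stated shift $P_d^{\mathrm{FN}}=P_{\infty,0}-t_{0,0}/Q_{\infty,0}$. The careful bookkeeping of this reindexing is the main obstacle, and we would verify it on $d=1,2$ first.

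Finally, the $(2,1)$ entries agree because of the determinant argument above. We would also check consistency: the shift $P_d^{\mathrm{FN}}=P_{\infty,0}-t_{0,0}/Q_{\infty,0}$ adds to $P_d$ a function of $Q_d$ alone, namely $t_{0,0}/Q_d$ up to sign. It is therefore a canonical transformation, $dQ\wedge dP$ is preserved, and the Hamiltonian evolutions transfer from \autoref{TheoReducedLaxHamSym} unchanged, the transformation being time-independent.
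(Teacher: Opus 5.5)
Your proposal is correct and follows the paper's route. The paper's justification is exactly the comparison of $[\check{L}^{\mathrm{red}}]_{1,2}$ and $[\check{L}^{\mathrm{red}}]_{1,1}$ with the explicit formulas for $[\check{L}^{(d)}_{\mathrm{FN}}]_{1,2}$ and $[\check{L}^{(d)}_{\mathrm{FN}}]_{1,1}$ in terms of $(Q^{\mathrm{FN}},P^{\mathrm{FN}})$, which forces the shift $c_0=-t_{0,0}/Q_{\infty,0}$ (the double sums in fact coincide term by term, with no reindexing needed). Your determinant argument for the $(2,1)$ entry and the eigenvalue matching for the times make explicit what the paper leaves implicit; the one input you assert without proof, that the coefficients of $-\det \check{L}^{(d)}_{\mathrm{FN}}$ from $\lambda^{4d}$ down to $\lambda^{2d}$ depend only on $z,t_1,\dots,t_{d-1}$, holds (it can be checked directly for $d=1,2$) and is likewise taken for granted in the paper.
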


Note that only $P_d^{\text{FN}}$ has a shift by $-\frac{t_{0,0}}{Q_{\infty,0}}$ which does not modify the symplectic two-form. Thus, up to ordering and the trivial shift by  $-\frac{t_{0,0}}{Q_{\infty,0}}$, our construction of Darboux coordinates from the symmetry reduction of the PIV hierarchy matches with Darboux coordinates proposed in \cite{mazzocco2007hamiltonian}. However, the main advantage of our construction is that it inserts the mKdV hierarchy in the standard isomonodromic deformation framework and also provides explicit formulas for the Hamiltonian and the Lax system in \autoref{TheoReducedLaxHamSym}. This completes the problem of finding the full Hamiltonian structure for the hierarchy. We will illustrate this on the first two members of the hierarchy in the next section.

\section{Examples for $d=1$ and $d=2$} \label{secExamples}
We will present in this section the first two members of the hierarchy as a direct application of the general theory developed in the article. 
\subsection{The case $d=1$}
After symmetry, the case $d=1$ corresponds to the Flaschka-Newell Lax pair for the PII equation. For completeness, we provide the complete computations starting from the Painlev\'{e} IV hierarchy before applying the symmetry. Applying \autoref{TheoLaxMatricesQinfty}, one has the Lax matrix before symmetry:
\begin{align}
    \left[\check{L}(\lambda)\right]_{1,2} = & - t_{\infty,3} \left( Q_{\infty,0} + Q_{\infty,1} \lambda + \lambda^2 \right),  \nonumber \\
    \left[\check{L}(\lambda)\right]_{1,1} = & \left(\frac{t_{0,0}}{Q_{\infty,0}}- P_{\infty,0}\right) \lambda - P_{\infty,1} - Q_{\infty,1} P_{\infty,0}+ \frac{t_{0,0}Q_{\infty,1}}{Q_{\infty,0}} + \frac{t_{0,0}}{\lambda},  \nonumber \\
    \left[\check{L}(\lambda)\right]_{2,1} = & -t_{\infty,3} \lambda^2 + \left( t_{\infty,3} Q_{\infty,1}-2 t_{\infty,2}  \right) \lambda + \frac{\left( \frac{t_{0,0}}{Q_{\infty,0}} - P_{\infty,0} \right)^2 - (t_{\infty,2})^2    }{t_{\infty,3}} 
     - 2 t_{\infty,1} + 2 t_{\infty,2} Q_{\infty,1} \nonumber \\
     & - \left(  (Q_{\infty,1})^2 - Q_{\infty,0} \right) t_{\infty,3} + \frac{1}{\lambda} \bigg( Q_{\infty,0} \left( -2 Q_{\infty,1} t_{\infty,3} + 2 t_{\infty,2} \right) +  t_{\infty,3}(Q_{\infty,1})^3 - 2 t_{\infty,2} (Q_{\infty,1})^2   \nonumber  \\
     & + \frac{Q_{\infty,1}}{t_{\infty,3}} \left( (P_{\infty,0})^2 + 2 t_{\infty,1} t_{\infty,3} + (t_{\infty,2})^2 \right) -2 t_{\infty,0} -2 \frac{t_{\infty,2} t_{\infty,1} - P_{\infty,0}P_{\infty,1} }{t_{\infty,3}}  \nonumber \\  
     &-\frac{2 t_{0,0}}{Q_{\infty,0} t_{\infty,3}} \left( P_{\infty,0} Q_{\infty,1} + P_{\infty,1} \right) + \frac{(t_{0,0})^2 Q_{\infty,1}}{t_{\infty,3} (Q_{\infty,0})^2}   \bigg),  \nonumber   \\
    \left[\check{L}(\lambda)\right]_{2,2} =&  - \left[\check{L}(\lambda)\right]_{1,1}.  
\end{align}
The auxiliary matrix $\check{A}_{\boldsymbol{\alpha}}(\lambda)$ is given by 
\small{\begin{align}
     \left[\check{A}_{\boldsymbol{\alpha}}(\lambda)\right]_{1,2}=& - t_{\infty,3} \nu_{\infty,0}^{(\boldsymbol{\alpha})} \lambda^2 -t_{\infty,3} \left( \nu_{\infty,1}^{(\boldsymbol{\alpha})} +     \nu_{\infty,0}^{(\boldsymbol{\alpha})}  Q_{\infty,1} \right) \lambda, 
     \nonumber \\
     \left[\check{A}_{\boldsymbol{\alpha}}(\lambda)\right]_{1,1}=& \nu_{\infty,0}^{(\boldsymbol{\alpha})} \left( \frac{t_{0,0}}{Q_{\infty,0}} - P_{\infty,0}  \right)  \lambda + \nu_{\infty,1}^{(\boldsymbol{\alpha})} \left( \frac{t_{0,0}}{Q_{\infty,0}} - P_{\infty,0} \right) +  \nu_{\infty,0}^{(\boldsymbol{\alpha})} \left( - P_{\infty,1}  - Q_{\infty,1} P_{\infty,0} + \frac{t_{0,0} Q_{\infty,1}}{Q_{\infty,0}} \right),   \nonumber \\
     \left[\check{A}_{\boldsymbol{\alpha}}(\lambda)\right]_{2,1}=& - t_{\infty,3} \nu_{\infty,0}^{(\boldsymbol{\alpha})} \lambda^2 + \left( t_{\infty,3}\nu_{\infty,0}^{(\boldsymbol{\alpha})} Q_{\infty,1}   -  2t_{\infty,2} \nu_{\infty,0}^{(\boldsymbol{\alpha})}  -  t_{\infty,3} \nu_{\infty,1}^{(\boldsymbol{\alpha})} \right) \lambda + \frac{1}{t_{\infty,3}} \left( \frac{t_{0,0}}{Q_{\infty,0}} - P_{\infty,0}  \right)^2 \nu_{\infty,0}^{(\boldsymbol{\alpha})}  \nonumber \\
     & -  \frac{(t_{\infty,2})^2 \nu_{\infty,0}^{(\boldsymbol{\alpha})}}{t_{\infty,3}} - 2 t_{\infty,1} \nu_{\infty,0}^{(\boldsymbol{\alpha})}  -  2 t_{\infty,2} \nu_{\infty,1}^{(\boldsymbol{\alpha})} + 2t_{\infty,3} \left( Q_{\infty,0} \nu_{\infty,0}^{(\boldsymbol{\alpha})}  + Q_{\infty,1} \nu_{\infty,1}^{(\boldsymbol{\alpha})}  \right)  \nonumber   \\ 
      & -t_{\infty,3} \nu^{(\boldsymbol{\alpha})} _{\infty,0}(Q_{\infty,1})^2 +2t_{\infty,2} \nu^{(\boldsymbol{\alpha})} _{\infty,0}Q_{\infty,1}, 
      \nonumber   \\
     \left[\check{A}_{\boldsymbol{\alpha}}(\lambda)\right]_{2,2}=& - \left[\check{A}_{\boldsymbol{\alpha}}(\lambda)\right]_{1,1}, 
 \end{align}}
 \normalsize{where} $\nu_{\infty,0}^{(\boldsymbol{\alpha})}=\frac{1}{2t_{\infty,3}}\alpha_{\infty,2} $ and $\nu_{\infty,1}^{(\boldsymbol{\alpha})}=\frac{1}{t_{\infty,3}}\alpha_{\infty,1}-\frac{t_{\infty,2}}{2(t_{\infty,3})^2}\alpha_{\infty,2}$. The spectral invariants before the symmetry are provided by \autoref{HamTheorem2} and are given as follows
 \begin{align}
     \check{H}_{\infty,0} = &   -  (t_{\infty,3})^2 (Q_{\infty,0})^2 + \left( 3 (Q_{\infty,1})^2 (t_{\infty,3})^2 - 4 Q_{\infty,1} t_{\infty,2} t_{\infty,3} - (P_{\infty,0})^2 + 2 t_{\infty,1} t_{\infty,3} + (t_{\infty,2})^2  \right) Q_{\infty,0}    \nonumber     \\
     & - (Q_{\infty,1})^4 (t_{\infty,3})^2 +2 (Q_{\infty,1})^3 t_{\infty,2} t_{\infty,3} - \left( 2 t_{\infty,1} t_{\infty,3} +  (t_{\infty,2})^2 \right) (Q_{\infty,1})^2 \nonumber \\
      & + \left( 2 t_{\infty,0} t_{\infty,3} +  2 t_{\infty,1} t_{\infty,2}  \right) Q_{\infty,1} + (P_{\infty,1})^2 + P_{\infty,0} - \frac{t_{0,0} \left( t_{0,0}-1\right)}{Q_{\infty,0}}, \nonumber   \\ 
    \check{H}_{0,1}  = & 2\left( Q_{\infty,1} (t_{\infty,3})^2 - t_{\infty,2} t_{\infty,3}  \right) (Q_{\infty,0})^2 + \big( - (Q_{\infty,1})^3 (t_{\infty,3})^2 + 2 (Q_{\infty,1})^2 t_{\infty,2} t_{\infty,3}  \\
    & - \left((P_{\infty,0})^2  + 2 t_{\infty,1} t_{\infty,3} + (t_{\infty,2})^2 \right) Q_{\infty,1} - 2P_{\infty,0} P_{\infty,1}  + 2 t_{\infty,0} t_{\infty,3} +  2 t_{\infty,1} t_{\infty,2} \big) Q_{\infty,0}  \nonumber   \\ &+ \frac{t_{0,0} \left( t_{0,0}-1\right) Q_{\infty,1}}{Q_{\infty,0}}. \nonumber
   \end{align}
The general Hamiltonian for a general deformation vector is given by
\begin{align}
    \text{Ham}_{(\boldsymbol{\alpha})}(\mathbf{Q}_\infty,\mathbf{P}_\infty) = \nu_{\infty,1}^{(\boldsymbol{\alpha})} \check{H}_{\infty,0}+ \nu_{\infty,0}^{(\boldsymbol{\alpha})} \left( \check{H}_{0,1} + 2 P_{\infty,1} + Q_{\infty,1} P_{\infty,0} \right).
\end{align}

The symmetry condition in the case $d=1$ is equivalent to the following conditions: 
\begin{align}
t_{\infty,0} = t_{\infty,2} = 0=\alpha_{\infty,0}=\alpha_{\infty,2}, \qquad Q_{\infty,1}=0, \qquad P_{\infty,1}=0, \qquad \check{H}_{0,1}=0,  \qquad \nu_{\infty,0}^{(\boldsymbol{\alpha})}=\nu_{\infty,2}^{(\boldsymbol{\alpha})}= 0\,.
 \end{align}
Applying the symmetry on the Lax matrices provides the reduced Lax matrix $\check{L}^{\text{red}}(\lambda)$:
\begin{align}
    \left[\check{L}^{\text{red}}(\lambda)\right]_{1,1} = &    \frac{t_{0,0}}{\lambda} + \left(\frac{t_{0,0}}{Q_{\infty,0}}- P_{\infty,0} \right)  \lambda,  \qquad
    \left[\check{L}^{\text{red}}(\lambda)\right]_{1,2} = - t_{\infty,3} \left( Q_{\infty,0} + \lambda^2 \right),  \\
    \left[\check{L}^{\text{red}}(\lambda)\right]_{2,1} = & -t_{\infty,3} \lambda^2 + \frac{\left( \frac{t_{0,0}}{Q_{\infty,0}} - P_{\infty,0} \right)^2     }{t_{\infty,3}} 
     - 2 t_{\infty,1}  +  Q_{\infty,0} t_{\infty,3}, \qquad  \left[\check{L}^{\text{red}}(\lambda)\right]_{2,2} =  - \left[\check{L}^{\text{red}}(\lambda)\right]_{1,1}.  \nonumber
\end{align}
The reduced auxiliary matrix $\check{A}^{\text{red}}_{\boldsymbol{\alpha}}(\lambda)$ is given by 
\small{\begin{align}
  \left[\check{A}^{\text{red}}_{\boldsymbol{\alpha}}(\lambda)\right]_{1,1}=& \nu_{\infty,1}^{(\boldsymbol{\alpha})}   \left( \frac{t_{0,0}}{Q_{\infty,0} } - P_{\infty,0} \right) , \qquad \left[\check{A}^{\text{red}}_{\boldsymbol{\alpha}}(\lambda)\right]_{1,2} = - t_{\infty,3} \nu_{\infty,1}^{(\boldsymbol{\alpha})}  \lambda, \nonumber \\ 
  \left[\check{A}^{\text{red}}_{\boldsymbol{\alpha}}(\lambda)\right]_{2,1}=& - t_{\infty,3} \nu_{\infty,1}^{(\boldsymbol{\alpha})}  \lambda,  
  \qquad \left[\check{A}^{\text{red}}_{\boldsymbol{\alpha}}(\lambda)\right]_{2,2}=  - \left[\check{A}^{\text{red}}_{\boldsymbol{\alpha}}(\lambda)\right]_{1,1}.
\end{align}}
\normalsize{We} also have $\nu_{\infty,1}^{(\boldsymbol{\alpha})} = \frac{\alpha_{\infty,1}}{t_{\infty,3}}$, and thus a direct computation gives the Hamiltonian
\small{\beq
    \text{Ham}^{\text{red}}_{t_{\infty,1}}(\mathbf{Q}_\infty,\mathbf{P}_\infty) = \frac{1}{t_{\infty,3}} \check{H}^{\text{red}}_{\infty,0}=\frac{1}{t_{\infty,3}}\left(P_{\infty,0}-  (t_{\infty,3})^2 (Q_{\infty,0})^2   -  (P_{\infty,0})^2 Q_{\infty,0} +  2 t_{\infty,1} t_{\infty,3} Q_{\infty,0} -  \frac{t_{0,0} (1- t_{0,0})}{Q_{\infty,0}}\right).
\eeq}
\normalsize{Note} that the evolution equations are
\bea \frac{d}{d t_{\infty,1}}Q_{\infty,0}&=&\frac{1}{t_{\infty,3}}\left(1-2Q_{\infty,0}P_{\infty,0}\right),\cr
\frac{d}{d t_{\infty,1}}P_{\infty,0}&=&2t_{\infty,3} Q_{\infty,0}-2t_{\infty,1}+\frac{1}{t_{\infty,3}}\left((P_{\infty,0})^2+\frac{t_{0,0}(t_{0,0}-1)}{(Q_{\infty,0})^2}\right).
\eea
Let us define 
\beq y\coloneqq P_{\infty,0}- \frac{t_{0,0}}{Q_{\infty,0}}=P_d^{\text{FN}}.\eeq
Then we have
\beq \frac{d^2 y}{d t_{\infty,1}^{\,\,\,\,\,\,\,\,\,\,2}}= \frac{2}{(t_{\infty,3})^2}y^3-\frac{4t_{\infty,1}}{t_{\infty,3}}y -4t_{0,0},\eeq
so that if we define $T\coloneqq\left(-\frac{4}{t_{\infty,3}}\right)^{\frac{1}{3}} t_{\infty,1}$ and $u\coloneqq\frac{1}{4}\left(-\frac{4}{t_{\infty,3}}\right)^{\frac{2}{3}}y$, $u(T)$ satisfies the standard Painlev\'{e} II equation:
\beq \frac{d^2 u}{d T^2}=2u^3+T u-t_{0,0}.\eeq
Note that when $t_{\infty,3}=-4$, we have $4u=y=P_d^{\text{FN}}$ and $T=t_{\infty,1}$ recovering the FN PII Hamiltonian and equation, this is in agreement with Example $7.5$ of \cite{mazzocco2007hamiltonian}.

\subsection{The case $d=2$}
In this section, we will provide the Lax matrices for $d=2$ after the symmetry. Details of the computations before symmetry including Maple files  can be found at 
\href{https://math.univ-lyon1.fr/~marchal/AdditionalRessources/index.html}{O. M. website}. For $d=2$, the symmetry is equivalent to:
\begin{align}
 t_{\infty,0} = & t_{\infty,2} = t_{\infty,4} = 0 , \qquad  Q_{\infty,1} =  Q_{\infty,3} = 0, \qquad P_{\infty,1} =  P_{\infty,3} = 0, \nonumber \\
 &\nu_{\infty,0}^{(\boldsymbol{\alpha})}  =  \nu_{\infty,2}^{(\boldsymbol{\alpha})} = 0, \qquad \check{H}_{0,1} = \check{H}_{\infty,1} = 0.
\end{align}
The Lax matrix after reduction is given by
\begin{align}
    \left[\check{L}^{\text{red}}(\lambda)\right]_{1,1} = & \left(  \frac{t_{0,0}}{Q_{\infty,0}}-  P_{\infty,0}  \right)\lambda^3  + \left(  \frac{t_{0,0} Q_{\infty,2}}{Q_{\infty,0}} - P_{\infty,2} -P_{\infty,0} Q_{\infty,2}\right) \lambda + \frac{t_{0,0}}{\lambda}, \nonumber \\
    \left[\check{L}^{\text{red}}(\lambda)\right]_{1,2} = & - t_{\infty,5} \left( \lambda^4 + Q_{\infty,2} \lambda^2  + Q_{\infty,0} \right),  \nonumber \\
    \left[\check{L}^{\text{red}}(\lambda)\right]_{2,1} = &  - t_{\infty,5} \lambda^4 + \left( \frac{  \left( \frac{t_{0,0}}{Q_{\infty,0}} - P_{\infty,0} \right)^2  }{ t_{\infty,5}} - 2 t_{\infty,3}  + t_{\infty,5} Q_{\infty,2}  \right) \lambda^2 - 2 t_{\infty,1} - \frac{ (t_{\infty,3})^2}{ t_{\infty,5}} + 2 t_{\infty,3} Q_{\infty,2} \nonumber \\&
    -t_{\infty,5}(Q_{\infty,2})^2+t_{\infty,5}Q_{\infty,0}+\frac{Q_{\infty,2}}{t_{\infty,5}}\left(P_{\infty,0}-\frac{t_{0,0}}{Q_{\infty,0}}\right)^2+\frac{2P_{\infty,2}}{t_{\infty,5}}\left(P_{\infty,0}-\frac{t_{0,0}}{Q_{\infty,0}}\right),
     \nonumber \\
     \left[\check{L}^{\text{red}}(\lambda)\right]_{2,2} = & -  \left[\check{L}^{\text{red}}(\lambda)\right]_{1,1}.  
\end{align}
The reduced auxiliary matrix $\check{A}^{\text{red}}_{\boldsymbol{\alpha}}(\lambda)$ is given by 
\begin{align}
      \left[\check{A}^{\text{red}}_{\boldsymbol{\alpha}}(\lambda)\right]_{1,1}=& \nu_{\infty,1}^{(\boldsymbol{\alpha})}  \left( \frac{t_{0,0}}{Q_{\infty,0}} - P_{\infty,0} \right) \lambda^2 + \nu_{\infty,3}^{(\boldsymbol{\alpha})}  \left( \frac{t_{0,0}}{Q_{\infty,0}} - P_{\infty,0} \right) -  \nu_{\infty,1}^{(\boldsymbol{\alpha})} \left( P_{\infty,2} + P_{\infty,0} Q_{\infty,2} - \frac{ t_{0,0} Q_{\infty,2} }{Q_{\infty,0}}   \right), \nonumber \\
       \left[\check{A}^{\text{red}}_{\boldsymbol{\alpha}}(\lambda)\right]_{1,2}=& -t_{\infty,5}\left( \nu_{\infty,1}^{(\boldsymbol{\alpha})}  \lambda^3 + \left( \nu_{\infty,1}^{(\boldsymbol{\alpha})}  Q_{\infty,2} + \nu_{\infty,3}^{(\boldsymbol{\alpha})}  \right) \lambda\right),  \nonumber \\
        \left[\check{A}^{\text{red}}_{\boldsymbol{\alpha}}(\lambda)\right]_{2,1}=& - t_{\infty,5} \nu_{\infty,1}^{(\boldsymbol{\alpha})}  \lambda^3  +  \left(  \frac{1}{t_{\infty,5}}\left(  \frac{t_{0,0}}{Q_{\infty,0}} - P_{\infty,0} \right)^2 \nu_{\infty,1}^{(\boldsymbol{\alpha})} +t_{\infty,5} \nu_{\infty,1}^{(\boldsymbol{\alpha})}  Q_{\infty,2} - 2 t_{\infty,3}\nu_{\infty,1}^{(\boldsymbol{\alpha})}-t_{\infty,5} \nu_{\infty,3}^{(\boldsymbol{\alpha})} \right) \lambda,   \nonumber \\
         \left[\check{A}^{\text{red}}_{\boldsymbol{\alpha}}(\lambda)\right]_{2,2}=&  - \left[\check{A}^{\text{red}}_{\boldsymbol{\alpha}}(\lambda)\right]_{1,1}. 
 \end{align}
It is also straightforward to check that the reduced vectors take the form 
\begin{align}
    \nu_{\infty,1}^{(\boldsymbol{\alpha})} = \frac{\alpha_{\infty,3}}{3 t_{\infty,5}},  \qquad \nu_{\infty,3}^{(\boldsymbol{\alpha})} = \frac{\alpha_{\infty,1}}{t_{\infty,5}} - \frac{t_{\infty,3}   \alpha_{\infty,3} }{3(t_{\infty,5})^2},
\end{align}
and the Hamiltonians are
\bea \text{Ham}^{\text{red}}_{t_{\infty,1}}&=&\frac{t_{0,0}(t_{0,0}-1)}{t_{\infty,5}Q_{\infty,0}}+t_{\infty,5}(Q_{\infty,2})^3-2t_{\infty,3}(Q_{\infty,2})^2+\left(\frac{(t_{\infty,3})^2}{t_{\infty,5}}+2t_{\infty,1}\right)Q_{\infty,2}\cr
&&+\frac{1}{t_{\infty,5}}(P_{\infty,2})^2+\frac{1}{t_{\infty,5}}P_{\infty,0}+\left(2t_{\infty,3}-2t_{\infty,5}Q_{\infty,2}-\frac{(P_{\infty,0})^2}{t_{\infty,5}}\right)Q_{\infty,0},\cr
\text{Ham}^{\text{red}}_{t_{\infty,3}}&=&\frac{t_{0,0}(t_{0,0}-1)(t_{\infty,5}Q_{\infty,2}-t_{\infty,3})}{3(t_{\infty,5})^2Q_{\infty,0}} -\frac{t_{\infty,5}}{3}(Q_{\infty,0})^2-\frac{Q_{\infty,0}(P_{\infty,0})^2}{3(t_{\infty,5})^2}(t_{\infty,5}Q_{\infty,2}-t_{\infty,3})\cr&&
-\frac{t_{\infty,3}}{3(t_{\infty,5})^2}(P_{\infty,2})^2-\frac{t_{\infty,3}}{3}(Q_{\infty,2})^3+\frac{(Q_{\infty,2})^2}{3t_{\infty,5}}\left(2(t_{\infty,3})^2+(t_{\infty,5})^2Q_{\infty,0}\right)\cr&&
+\frac{Q_{\infty,0}}{3t_{\infty,5}}\left(2t_{\infty,1}t_{\infty,5}-(t_{\infty,3})^2-2P_{\infty,0}P_{\infty,2}\right)+\frac{P_{\infty,2}}{t_{\infty,5}}+\frac{P_{\infty,0}}{3(t_{\infty,5})^2}\left(t_{\infty,5}Q_{\infty,2}-t_{\infty,3}\right)\cr&&
-\frac{t_{\infty,3}Q_{\infty,2}\left(2t_{\infty,1}t_{\infty,5}+(t_{\infty,3})^2\right)}{3(t_{\infty,5})^2}.
\eea

\section{Conclusion and outlook} \label{secConclusion}
In this article, we considered the symmetry reduction of the even PIV hierarchy to the FN PII hierarchy. Building on the symmetry reduction of \cite{alameddine2026symmetryreductionpainleveiv}, we have diagonalized the symmetry representation enforced on the Lax matrices and constructed a set of Darboux coordinates that is compatible with the symmetry. This led to the full Hamiltonian description of the FN PII hierarchy with explicit formulas for the coordinate-dependent Lax matrices. One of the main results of the article is that when dealing with a symmetry, it is crucial to select a suitable trivialization for the Hamiltonian structure so that the maximum number of Darboux coordinates vanish by symmetry. This ensures that the reduced Hamiltonians can be extracted directly from the Hamiltonians before symmetry. The existence of such compatible Darboux coordinates was left open in \cite{alameddine2026symmetryreductionpainleveiv} and we have proven both their existence and the efficiency of the strategy in the present work. This enables us to obtain explicit formulas for the Hamiltonians for all times (and not only $z=t_{\infty,1}$ as in \cite{mazzocco2007hamiltonian}) which was an important missing point for the mKdV hierarchy. In particular, the general form of the Hamiltonians (a lower triangular Toeplitz matrix of times multiplied by some spectral invariants) is preserved for the FN PII hierarchy.

\medskip

Another important consequence of the present article is that it shows that the general results of \cite{marchal2024hamiltonianrepresentationisomonodromicdeformations} can be adapted when dealing with additional symmetries. In this case, one may need to select a compatible representative in the orbit that ensures an easier reduction after the symmetry. For example, the even/odd parity in this work underlies the main idea behind the conjugation action by the matrix $S$ of \autoref{DefGauges}, and one realizes in \autoref{secSymmetry} how this is an advantage when reducing the Hamiltonian structure. Our method is applicable beyond the present setup and we list several possibilities:
\begin{itemize}
    \item In this paper, the symmetry is of the form $\Psi(-\lambda)=\sigma \Psi(\lambda) \sigma^{-1}$. Since $\lambda\to -\lambda$ is an involution, it necessarily implies that $\sigma^2=I_2$. Thus $\sigma$ is always diagonalizable and thus, defining $S$ a matrix of eigenvectors of $\sigma$ we may always go into the gauge where $\sigma=\text{diag}(\alpha_1,\alpha_2)$. Since the symmetry is invariant under the multiplication of $\sigma$ by a constant, and because the eigenvalues satisfy $\mu^2=1$, we may always choose a gauge where  $\sigma=\text{diag}(-1,1)=\sigma_3$ without loss of generality. Thus, the present paper addresses the only possible case for symmetries of the form $\Psi(-\lambda)=\sigma \Psi(\lambda) \sigma^{-1}$. Such a symmetry puts constraints on the pole structure that must be invariant under $\lambda\to -\lambda$, and one can verify that the only possible poles are $\{0,\infty\}$. The present paper deals with the case of a regular pole at zero and an irregular pole at infinity, but it is not the only possible case. In this direction, the most general case with untwisted poles corresponds to two irregular poles at zero and infinity which deserves further investigation. In this setup, the only remaining work is to adapt the results of \autoref{SecP4hierarchy} to the case of an irregular pole at zero using \cite{marchal2024hamiltonianrepresentationisomonodromicdeformations}. This includes several models arising in mathematical physics such as the Sine-Gordon model \cite{Newell1985}, the Derivative NLS model (symmetry reduction of the Kaup–Newell hierarchy \cite{Pashaev:2017wqv}). 
    \item Another possible case is to keep the symmetry $\Psi(-\lambda)=\sigma \Psi(\lambda) \sigma^{-1}$ but to have twisted poles at zero or infinity. In this case, one could adapt the results of \cite{MarchalP1Hierarchy} regarding the Painlev\'{e} I hierarchy (one twisted pole at infinity) to get the Hamiltonian description before symmetry when two poles are involved.
    \item The symmetry considered here is of the form $\Psi(-\lambda)=\sigma \Psi(\lambda) \sigma^{-1}$. A natural generalization is to consider action of the dihedral group \cite{Lombardo}, i.e., $\Psi(\omega_N\lambda)=\sigma \Psi(\lambda) \sigma^{-1}$ with $\omega_N=e^{\frac{2i\pi}{N}}$. In this case, one gets $\sigma^N=I_2$ and the only possible poles are again only $\{0,\infty\}$. In this case, $\sigma$ is diagonalizable and one should get to the gauge where $\sigma=\diag(\rho,1)$ with $\rho$ a $N$th root of unity. We believe that the present strategy can be applied to this case.
    \item The present paper is limited to $\mathfrak{sl}_2(\mathbb{C})$ because explicit formulas for the Hamiltonian and Lax matrices are only available in this dimension from \cite{marchal2024hamiltonianrepresentationisomonodromicdeformations}. However, examples in $\mathfrak{sl}_3(\mathbb{C})$ have been dealt with a similar method in \cite{Alameddine:2024xjs}. In higher dimensions, the symmetry $\Psi(-\lambda)=\sigma \Psi(\lambda) \sigma^{-1}$ offers more flexibility regarding the eigenvalues of $\sigma$ since one can choose the number of $-1$ eigenvalues. Provided that explicit formulas can be obtained for $\mathfrak{sl}_n(\mathbb{C})$, we believe that the present strategy can be applied. For example, this could be applied to study higher-dimensional Lax representation of the reductions of the mKdV hierarchy.
\end{itemize}

\appendix
\renewcommand{\subsectionautorefname}{Appendix}
\renewcommand{\sectionautorefname}{Appendix}
\renewcommand{\theequation}{\thesection-\arabic{equation}}

\section{Expression of the auxiliary matrices $\td{A}_{\text{oper}, \boldsymbol{\alpha}}(\lambda)$ and $\td{A}_{\boldsymbol{\alpha}}(\lambda)$}
\subsection{Expression of $\td{A}_{\text{oper}, \boldsymbol{\alpha}}(\lambda)$ in the oper Darboux coordinates}\label{AppendixAuxiliaryMatrixOperGauge}
From \cite{marchal2024hamiltonianrepresentationisomonodromicdeformations}, the entries of the matrix $\td{A}_{\text{oper},\boldsymbol{\alpha}}(\lambda)$ are given by
\bea \left[\td{A}_{\text{oper}, \boldsymbol{\alpha}}(\lambda)\right]_{1,2}&=&\nu_{\infty,0}^{(\boldsymbol{\alpha})}+\sum_{j=1}^{2d} \frac{\td{\mu}_{j}^{(\boldsymbol{\alpha})}}{\lambda-u_j},\cr
\left[\td{A}_{\text{oper}, \boldsymbol{\alpha}}(\lambda)\right]_{1,1}&=&\frac{1}{2\omega}\mathcal{L}_{\boldsymbol{\alpha}}[\omega]- \sum_{j=1}^{2d}\frac{v_j \td{\mu}_j^{(\boldsymbol{\alpha})}}{\lambda-u_j},\cr
\left[\td{A}_{\text{oper}, \boldsymbol{\alpha}}(\lambda)\right]_{2,1}&=& \partial_{\lambda} \left[\td{A}_{\boldsymbol{\alpha}, \text{oper}}(\lambda)\right]_{1,1}+\left[\td{A}_{\text{oper}, \boldsymbol{\alpha}}(\lambda)\right]_{1,2}\left[\td{L}_{\text{oper}}(\lambda)\right]_{2,1},\cr
\left[\td{A}_{\text{oper}, \boldsymbol{\alpha}}(\lambda)\right]_{2,2}&=& \partial_{\lambda} \left[\td{A}_{\text{oper}, \boldsymbol{\alpha}}(\lambda)\right]_{1,2}+\left[\td{A}_{\text{oper}, \boldsymbol{\alpha}}(\lambda)\right]_{1,1}+\left[\td{A}_{\text{oper}, \boldsymbol{\alpha}}(\lambda)\right]_{1,2}\left[\td{L}_{\text{oper}}(\lambda)\right]_{2,2},\cr&&
\eea
where the coefficients $\left(\td{\mu}_1^{(\boldsymbol{\alpha})},\dots, \td{\mu}_{2d}^{(\boldsymbol{\alpha})}\right)$ are given by
\beq \begin{pmatrix}
\frac{1}{u_1}&\frac{1}{u_2}& \dots&\dots& \frac{1}{u_{2d}}\\
1&1 &\dots &\dots &1\\
u_1& u_2&\dots &\dots& u_{2d}\\
\vdots & & & & \vdots\\
\vdots & & & & \vdots\\
u_1^{2d-2}& u_2^{2d-2} &\dots & \dots& u_{2d}^{2d-2}\\
\end{pmatrix}\begin{pmatrix} \td{\mu}^{(\boldsymbol{\alpha})}_1\\ \vdots\\\vdots\\ \td{\mu}^{(\boldsymbol{\alpha})}_{2d}\end{pmatrix}=\begin{pmatrix} \nu^{(\boldsymbol{\alpha})}_{\infty,0}\\\nu^{(\boldsymbol{\alpha})}_{\infty,1}\\ \vdots\\ \nu^{(\boldsymbol{\alpha})}_{\infty,2d-1}\end{pmatrix}.
\eeq

\subsection{Expression of $\td{A}_{\boldsymbol{\alpha}}(\lambda)$ in the geometric Darboux coordinates}\label{AppendixAuxiliaryMatrixGeometricDarboux}
Entries of the matrix $\td{A}_{\boldsymbol{\alpha}}(\lambda)$ are given by

\begin{align}
    \left[\td{A}_{\boldsymbol{\alpha}}(\lambda)\right]_{1,2}=&
\omega\left(\sum_{i=0}^{2d-1}\nu_{\infty,i}^{(\boldsymbol{\alpha})}\lambda^{2d-1-i} +\sum_{k=0}^{2d-2}\sum_{i=0}^{k}\nu_{\infty,i}^{(\boldsymbol{\alpha})}U_{\infty,k} \lambda^{k-i}\right), \\
\left[\td{A}_{\boldsymbol{\alpha}}(\lambda)\right]_{1,1}=&\frac{1}{2\omega}\mathcal{L}_{\boldsymbol{\alpha}}[\omega]+t_{\infty,2d+1}\left(\nu_{\infty,0}^{(\boldsymbol{\alpha})}U_{0,1}+ \sum_{j=1}^{2d-1}\nu_{\infty,j}^{(\boldsymbol{\alpha})}U_{\infty,j-1}\right)+ \left[\td{L}_{1,1}(\lambda) \left(\sum_{i=0}^{2d-1} \nu_{\infty,i}\lambda^{-i}\right)\right]_{\infty,+}. \nonumber
\end{align}

One also has $ \left[\td{A}_{\boldsymbol{\alpha}}(\lambda)\right]_{2,2}=-\left[\td{A}_{\boldsymbol{\alpha}}(\lambda)\right]_{1,1}$ and finally
\bea \left[\td{A}_{\boldsymbol{\alpha}}(\lambda)\right]_{2,1}&=& -\frac{t_{\infty,2d+1}\mathcal{L}_{\boldsymbol{\alpha}}[\omega]}{\omega^2}\lambda -\frac{(t_{\infty, 2d}-t_{\infty,2d+1}U_{\infty,2d-2})\mathcal{L}_{\boldsymbol{\alpha}}[\omega]}{\omega^2}+ \frac{\alpha_{\infty,2d}-t_{\infty,2d+1}\mathcal{L}_{\boldsymbol{\alpha}}[U_{\infty,2d-2}]}{\omega} \cr
&&+\left[
\frac{ \underset{i=0}{\overset{2d+1}{\sum}}\underset{j=2d-1}{\overset{4d}{\sum}}\underset{m=0}{\overset{4d-j}{\sum}}t_{\infty,2d+1-m}t_{\infty,j+m-2d+1}\nu_{\infty,i}^{(\boldsymbol{\alpha}) }\lambda^{j-i}  }{\td{L}_{1,2}(\lambda)}\right]_{\infty,+} -\frac{1}{\omega}\nu_{\infty,0}^{(\boldsymbol{\alpha}) }  \cr
&&+ \left[\frac{\td{L}_{1,1}(\lambda)}{\td{L}_{1,2}(\lambda)}\left(\td{L}_{1,1}(\lambda)\frac{\left[\td{A}_{\boldsymbol{\alpha}}(\lambda)\right]_{1,2}}{\td{L}_{1,2}(\lambda)}-2\left[\td{A}_{\boldsymbol{\alpha}}(\lambda)\right]_{1,1}\right)\right]_{\infty,+},
\eea
with $\nu_{\infty,2d}^{(\boldsymbol{\alpha})}\coloneqq -\underset{j=1}{\overset{2d-1}{\sum}}\nu_{\infty,j}^{(\boldsymbol{\alpha})} U_{\infty,j-1} -\nu_{\infty,0}^{(\boldsymbol{\alpha})} U_{0,1}$ and 
$\nu_{\infty,2d+1}^{(\boldsymbol{\alpha})}\coloneqq
-\underset{j=2}{\overset{2d}{\sum}}\nu_{\infty,j}^{(\boldsymbol{\alpha})}U_{\infty,j-2} -\nu_{\infty,1}^{(\boldsymbol{\alpha})}U_{0,1}$.

\section{Asymptotics of the wave matrix $\check{\Psi}(\lambda)$ and proof of \autoref{PropLopersym}}\label{AppendixAsymptPsi}
Using the fact that $\det \check{\Psi}(\lambda)=\det \td{\Psi}(\lambda)$ is constant and can be set to $1$ by trivial normalization, we have:
\beq \check{L}_{1,2}(\lambda)=\check{\Psi}_{1,1}(\lambda) \partial_\lambda \check{\Psi}_{1,2}(\lambda) -\check{\Psi}_{1,2}(\lambda)\partial_\lambda \check{\Psi}_{1,1}(\lambda).\eeq
Thus, since $\check{L}_{1,2}(\lambda)$ is regular at $\lambda=0$ from \autoref{ChoiceRepresentative}, we have that if we denote
\bea \check{\Psi}_{1,1}(\lambda)&\overset{\lambda\to 0}{=}&\exp\left(\alpha_{1,1}\ln(\lambda)+ A_{0,0}+O\left(\lambda\right)\right),\cr
\check{\Psi}_{1,2}(\lambda)&\overset{\lambda\to 0}{=}&\exp\left(\alpha_{1,2}\ln(\lambda)+ B_{0,0}+O\left(\lambda\right)\right), \eea
then at $\lambda=0$ we have
\bea O(1)&=&\exp\left(\alpha_{1,1}\ln(\lambda)+ A_{0,0}+O\left(\lambda\right)\right)\exp\left(\alpha_{1,2}\ln(\lambda)+ B_{0,0}+O\left(\lambda\right)\right)\left(\frac{\alpha_{1,2}}{\lambda} +O(1)\right) \cr&&-\exp\left(\alpha_{1,2}\ln(\lambda)+ B_{0,0}+O\left(\lambda\right)\right)\exp\left(\alpha_{1,1}\ln(\lambda)+ A_{0,0}+O\left(\lambda\right)\right)\left(\frac{\alpha_{1,1}}{\lambda} +O(1)\right).
\eea
The r.h.s. is thus
\beq (\alpha_{1,2}-\alpha_{1,1}) \lambda^{\alpha_{1,1}+\alpha_{1,2} -1}e^{A_{0,0}+B_{0,0}} + O(\lambda^{\alpha_{1,1}+\alpha_{1,2}}).\eeq
In order to obtain $O(\lambda^0)$, the two possible solutions are either $\alpha_{1,2}=\alpha_{1,1}$ which is excluded because the two solutions of the quantum curve would be the same or $\alpha_{1,1}+\alpha_{1,2}=1$, i.e., $\alpha_{1,2}=1-\alpha_{1,1}$. From the fact that the first column is unchanged from $\td{\Psi}$, we get that
\bea \label{PsiAsymptotics}\check{\Psi}_{1,1}(\lambda)&\overset{\lambda\to 0}{=}&\exp\left(t_{0,0}\ln(\lambda)+ A_{0,0}+O\left(\lambda\right)\right),\cr
\check{\Psi}_{1,2}(\lambda)&\overset{\lambda\to 0}{=}&\exp\left((1-t_{0,0})\ln(\lambda)+ B_{0,0}+O\left(\lambda\right)\right).
\eea
A similar argument can be written at infinity. Starting from
\bea 
 \check{\Psi}_{1,1}(\lambda)&\overset{\lambda\to \infty}{=}&\exp\left(-\sum_{k=1}^{2d+1} \frac{t_{\infty,k}}{k}\lambda^k +\alpha_{1,1}\ln \lambda+ A_{\infty,0} +O\left(\lambda^{-1}\right)\right),\cr
 \check{\Psi}_{1,2}(\lambda)&\overset{\lambda\to \infty}{=}&\exp\left(\sum_{k=1}^{2d+1} \frac{t_{\infty,k}}{k}\lambda^k +\alpha_{1,2}\ln \lambda+ B_{\infty,0} +O\left(\lambda^{-1}\right)\right),
 \eea 
 we get that
 \beq t_{\infty,2d+1}\lambda^{2d}+O(\lambda^{2d-1})= \check{\Psi}_{1,1}(\lambda) \partial_\lambda \check{\Psi}_{1,2}(\lambda) -\check{\Psi}_{1,2}(\lambda)\partial_\lambda \check{\Psi}_{1,1}(\lambda).\eeq
 The r.h.s. is given by
 \bea &&\left(\sum_{k=1}^{2d+1} t_{\infty,k}k \lambda^{k-1} +\frac{\alpha_{1,2}}{\lambda} +O(\lambda^{-2})\right)\exp\left((\alpha_{1,1}+\alpha_{1,2})\ln \lambda+ A_{\infty,0}+B_{\infty,0} +O\left(\lambda^{-1}\right)\right)\cr
 &&- \left(-\sum_{k=1}^{2d+1} t_{\infty,k}k \lambda^{k-1} +\frac{\alpha_{1,1}}{\lambda} +O(\lambda^{-2})\right)\exp\left((\alpha_{1,1}+\alpha_{1,2})\ln \lambda+ A_{\infty,0}+B_{\infty,0} +O\left(\lambda^{-1}\right)\right).\cr&&
 \eea
Thus we must have $\alpha_{1,1}+\alpha_{1,2}=0$ and because of the monodromy, we must have $\alpha_{1,1}=-t_{\infty,0}$ so that 
\bea \label{PsiAsymptotics2}
 \check{\Psi}_{1,1}(\lambda)&\overset{\lambda\to \infty}{=}&\exp\left(-\sum_{k=1}^{2d+1} \frac{t_{\infty,k}}{k}\lambda^k -t_{\infty,0}\ln \lambda+ A_{\infty,0} +O\left(\lambda^{-1}\right)\right),\cr
 \check{\Psi}_{1,2}(\lambda)&\overset{\lambda\to \infty}{=}&\exp\left(\sum_{k=1}^{2d+1} \frac{t_{\infty,k}}{k}\lambda^k +t_{\infty,0}\ln \lambda+ B_{\infty,0} +O\left(\lambda^{-1}\right)\right).
 \eea

 Let us now compute the asymptotic expansion of $[\check{L}_{\text{oper}}(\lambda)]_{2,1}$ at each pole. Since $\check{L}_{\text{oper}}(\lambda)$ is the companion matrix attached to $\check{\Psi}(\lambda)$, it is a straightforward computation to show that
\bea\label{EntriesInTermsOfYi}
\left[\check{L}_{\text{oper}}(\lambda)\right]_{2,2}&=& \frac{\partial_\lambda \check{W}(\lambda)}{\check{W}(\lambda)},\cr
\left[\check{L}_{\text{oper}}(\lambda)\right]_{2,1}&=&- \check{Y}_1(\lambda) \,  \check{Y}_2(\lambda) +  \frac{\check{Y}_2(\lambda) \, \partial_\lambda \check{Y}_1(\lambda) - \check{Y}_1(\lambda) \, \partial_\lambda \check{Y}_2(\lambda) }{\check{Y}_2(\lambda)-\check{Y}_1(\lambda)},
\eea
where we have defined $\check{Y}_j(\lambda)\coloneqq \frac{1}{\check{\Psi}_{1,j}(\lambda)} \frac{\partial \check{\Psi}_{1,j}(\lambda)}{\partial \lambda}$ for $j\in\{1,2\}$ and 
\beq \check{W}(\lambda)\coloneqq\check{\Psi}_{1,1}(\lambda) \partial_\lambda \check{\Psi}_{1,2}(\lambda) -\check{\Psi}_{1,2}(\lambda) \partial_\lambda\check{\Psi}_{1,1}(\lambda)=\det \check{\Psi}_{\text{oper}}(\lambda).\eeq
Using the relation between $\left[\check{L}_{\text{oper}}(\lambda)\right]_{2,2}$ and the Wronskian, we get:
\beq \left[\check{L}_{\text{oper}}(\lambda)\right]_{2,2}=\sum_{j=1}^{2d} \frac{1}{\lambda-q_j}.\eeq
Then, using \eqref{PsiAsymptotics} and \eqref{PsiAsymptotics2} we have:
\bea\label{Y1Y2} \check{Y}_1(\lambda)&\overset{\lambda \to \infty}{=}&-\sum_{k=1}^{2d+1} t_{\infty,k}\lambda^{k-1} -\frac{t_{\infty,0}}{\lambda}+ O\left(\lambda^{-2}\right), \cr
\check{Y}_2(\lambda)&\overset{\lambda \to \infty}{=}&\sum_{k=1}^{2d+1} t_{\infty,k}\lambda^{k-1} +\frac{t_{\infty,0}}{\lambda} + O\left(\lambda^{-2}\right), \cr
\check{Y}_1(\lambda)&\overset{\lambda \to 0}{=}& \frac{t_{0,0}}{\lambda}+ O\left(\lambda^{-2}\right),\cr
\check{Y}_2(\lambda)&\overset{\lambda \to 0}{=}& \frac{1-t_{0,0}}{\lambda}+ O\left(\lambda^{-2}\right).
\eea
Thus, we get:
\bea \frac{\check{Y}_2(\lambda) \, \partial_\lambda \check{Y}_1(\lambda) - \check{Y}_1(\lambda) \, \partial_\lambda \check{Y}_2(\lambda) }{\check{Y}_2(\lambda)-\check{Y}_1(\lambda)}&\overset{\lambda \to \infty}{=}&O\left(\lambda^{2d-2}\right),\cr
\frac{\check{Y}_2(\lambda) \, \partial_\lambda \check{Y}_1(\lambda) - \check{Y}_1(\lambda) \, \partial_\lambda \check{Y}_2(\lambda) }{\check{Y}_2(\lambda)-\check{Y}_1(\lambda)}&\overset{\lambda \to 0}{=}&O\left(\lambda^{-1}\right),
\eea
from which we immediately obtain using \eqref{EntriesInTermsOfYi} that
\bea
\left[\check{L}_{\text{oper}}(\lambda)\right]_{2,1}&\overset{\lambda \to \infty}{=}&\underset{j= 2 d-1}{\overset{4d}{\sum}}\left(\underset{m=0}{\overset{4d-j}{\sum}} t_{\infty,2 d +1-m}t_{\infty,j+m-2d+1}\right) \lambda^{j} + O\left(\lambda^{2d-2}\right),\cr
\left[\check{L}_{\text{oper}}(\lambda)\right]_{2,1}&\overset{\lambda \to 0}{=}&-\frac{t_{0,0}(1-t_{0,0})}{\lambda^2} + O\left(\lambda^{-1}\right).
\eea

\section{Expression of the auxiliary matrix $\check{A}_{\text{oper},\boldsymbol{\alpha}}(\lambda)$}

\subsection{Asymptotics of $\left[\check{A}_{\text{oper},\boldsymbol{\alpha}}(\lambda)\right]_{1,2}$ at zero and infinity}\label{AppendixExpansionA}
We shall observe that the first line of $\check{A}_{\text{oper},\boldsymbol{\alpha}}(\lambda)$ is given by
\bea \label{ExpressionA12A11}
\left[\check{A}_{\text{oper},\boldsymbol{\alpha}}(\lambda)\right]_{1,2}&=&\frac{\check{W}_{\boldsymbol{\alpha}}(\lambda)}{\check{W}(\lambda)}= \frac{\check{Z}_{\boldsymbol{\alpha},2}(\lambda)-\check{Z}_{\boldsymbol{\alpha},1}(\lambda) }{\check{Y}_2(\lambda)-\check{Y}_1(\lambda)},\cr
\left[\check{A}_{\text{oper},\boldsymbol{\alpha}}(\lambda)\right]_{1,1}&=& \frac{\check{Z}_{\boldsymbol{\alpha},1}(\lambda) \check{Y}_2(\lambda) -\check{Z}_{\boldsymbol{\alpha},2}(\lambda) \check{Y}_1(\lambda)}{\check{Y}_2(\lambda) -\check{Y}_1(\lambda)},
\eea
where we have defined
\bea 
\check{Z}_{\boldsymbol{\alpha},i}(\lambda)&\coloneqq& \frac{\mathcal{L}_{\boldsymbol{\alpha}}[\check{\Psi}_{1,j}(\lambda)]}{\check{\Psi}_{1,j}(\lambda)} \,\,,\,\, \forall \, j\in\llbracket 1,2\rrbracket,\cr
\check{W}_{\boldsymbol{\alpha}}(\lambda)&=& \mathcal{L}_{\boldsymbol{\alpha}}[\check{\Psi}_{1,2}(\lambda)] \check{\Psi}_{1,1}(\lambda) - \mathcal{L}_{\boldsymbol{\alpha}}[\check{\Psi}_{1,1}(\lambda)] \check{\Psi}_{1,2}(\lambda).
\eea 
Using \eqref{PsiAsymptotics} and \eqref{PsiAsymptotics2} and the fact that $t_{\infty,2d+1}$ is fixed, we get that $\mathcal{L}_{\boldsymbol{\alpha}}[\check{\Psi}_{1,j}(\lambda)]$ has the following local expansions:
\bea \label{LPsi}  
\mathcal{L}_{\boldsymbol{\alpha}}[\check{\Psi}_{1,1}(\lambda)]&\overset{\lambda\to \infty}{=}&-\left(\sum_{k=1}^{2d} \frac{\alpha_{\infty,k}}{k}\lambda^k +O(1)\right)\exp\left(-\sum_{k=1}^{2d+1} \frac{t_{\infty,k}}{k}\lambda^k -t_{\infty,0}\ln \lambda+ A_{\infty,0} +O\left(\lambda^{-1}\right)\right),\cr
\mathcal{L}_{\boldsymbol{\alpha}}[\check{\Psi}_{1,2}(\lambda)]&\overset{\lambda\to \infty}{=}&\left(\sum_{k=1}^{2d} \frac{\alpha_{\infty,k}}{k}\lambda^k +O(1)\right)\exp\left(\sum_{k=1}^{2d+1} \frac{t_{\infty,k}}{k}\lambda^k +t_{\infty,0}\ln \lambda+ B_{\infty,0} +O\left(\lambda^{-1}\right)\right),\cr
\mathcal{L}_{\boldsymbol{\alpha}}[\check{\Psi}_{1,1}(\lambda)]&\overset{\lambda\to 0}{=}&\left(O(1)\right)\exp\left(t_{0,0}\ln \lambda+ A_{0,0}+O\left(\lambda\right)\right),\cr
\mathcal{L}_{\boldsymbol{\alpha}}[\check{\Psi}_{1,2}(\lambda)]&\overset{\lambda\to 0}{=}&\left(O(1)\right)\exp\left((1-t_{0,0})\ln \lambda+ B_{0,0}+O\left(\lambda\right)\right).
\eea
Thus, from \eqref{ExpressionA12A11} and the asymptotic expansions \eqref{Y1Y2} and \eqref{LPsi}, we deduce that
\bea\label{Asympttt} \left[\check{A}_{\text{oper},\boldsymbol{\alpha}}(\lambda)\right]_{1,2}&\overset{\lambda\to \infty}{=}&\sum_{i=0}^{2d-1} \frac{\nu^{(\boldsymbol{\alpha})}_{\infty,i}}{\lambda^i} +O\left(\lambda^{-2d}\right),\cr
\left[\check{A}_{\text{oper},\boldsymbol{\alpha}}(\lambda)\right]_{1,2}&\overset{\lambda\to 0}{=}& O(\lambda),
\eea
where $\left(\nu^{(\boldsymbol{\alpha})}_{\infty,i}\right)_{0\leq i\leq 2d-1}$ are defined recursively by
\bea \label{Defnuinftyk}
\nu^{(\boldsymbol{\alpha})}_{\infty,0}&=&\frac{\alpha_{\infty,2d}}{(2d)t_{\infty,2d+1}},\cr
\nu^{(\boldsymbol{\alpha})}_{\infty, 2d-k}
&=&\frac{1}{t_{\infty,2d+1}}\left(\frac{\alpha_{\infty,k}}{k}-\sum_{i=0}^{2d-1-k}t_{\infty,k+i+1}\nu^{(\boldsymbol{\alpha})}_{\infty,i}\right)\,\,,\forall \, k\in \llbracket 1,2d-1\rrbracket.
\eea
In particular, equations \eqref{Defnuinftyk}  may be rewritten in a matrix form giving Proposition \ref{PropAsymptoticExpansionA12}.

\subsection{Proof of \autoref{PropA12Form}}\label{ProofEntryA12}
Since $\left[\check{A}_{\text{oper},\boldsymbol{\alpha}}(\lambda)\right]_{1,2}$ is a rational function of $\lambda$ with only possible poles at infinity, zero or at $\left(q_k\right)_{1\leq k\leq 2d}$, the asymptotic expansion at each pole provided by \autoref{PropAsymptoticExpansionA12} implies that
\beq \left[\check{A}_{\text{oper},\boldsymbol{\alpha}}(\lambda)\right]_{1,2}=\nu^{(\boldsymbol{\alpha})}_{\infty,0} + \sum_{j=1}^{2d} \frac{\check{\mu}^{(\boldsymbol{\alpha})}_j}{\lambda-q_j}.\eeq
One then observes that the coefficients $\left(\check{\mu}^{(\boldsymbol{\alpha})}_j\right)_{1\leq j\leq 2d}$ are related to $\left(\nu^{(\boldsymbol{\alpha})}_{\infty,i}\right)_{0\leq i\leq 2d-1}$ through \eqref{Defnuinftyk}. We thus get
\beq \left[\check{A}_{\text{oper},\boldsymbol{\alpha}}(\lambda)\right]_{1,2}=\nu^{(\boldsymbol{\alpha})}_{\infty,0} + \sum_{j=1}^{2d} \frac{\check{\mu}^{(\boldsymbol{\alpha})}_j}{\lambda-q_j}
\overset{\lambda\to \infty}{=}\nu^{(\boldsymbol{\alpha})}_{\infty,0}+\sum_{k=1}^{\infty} \sum_{j=1}^{2d}\frac{\check{\mu}^{(\boldsymbol{\alpha})}_j q_j^{k-1}}{\lambda^{k}},
\eeq
so that
\beq \label{RelationNuMu} \forall\, k\in \llbracket 1, 2d-1\rrbracket\,:\, \sum_{j=1}^{2d} \check{\mu}^{(\boldsymbol{\alpha})}_j q_j^{k-1}=\nu^{(\boldsymbol{\alpha})}_{\infty,k}.
\eeq
The former relations may be rewritten using a $(2d-1)\times (2d)$ matrix $V_\infty(\mathbf{q})$ corresponding to the last $2d-1$ lines of the matrix $\mathbf{V}(\mathbf{q})$ as given in \autoref{PropA12Form}. The first line is given by the expansion at $\lambda=0$ provided by \eqref{Asympttt}:
\beq \label{Eqbis}\left[\check{A}_{\text{oper},\boldsymbol{\alpha}}(\lambda)\right]_{1,2}\overset{\lambda\to 0}{=}O(\lambda)=\nu^{(\boldsymbol{\alpha})}_{\infty,0} - \sum_{j=1}^{2d} \frac{\check{\mu}^{(\boldsymbol{\alpha})}_j}{q_j}\,\,\Rightarrow\,\, \nu^{(\boldsymbol{\alpha})}_{\infty,0}=\sum_{j=1}^{2d} \frac{\check{\mu}^{(\boldsymbol{\alpha})}_j}{q_j}.\eeq

\subsection{Proof of \autoref{Propcalpha}}\label{AppendixA11}
Starting from \eqref{ExpressionA12A11}, we may perform the same kind of computations for $\left[\check{A}_{\text{oper},\boldsymbol{\alpha}}(\lambda)\right]_{1,1}$ as in the previous section. By symmetry, we get that
\beq \left[\check{A}_{\text{oper},\boldsymbol{\alpha}}(\lambda)\right]_{1,1}=
\sum_{j=1}^{2d}\frac{\check{\rho}^{(\boldsymbol{\alpha})}_j}{\lambda-q_j},\eeq
where the coefficients $\left(\check{\rho}^{(\boldsymbol{\alpha})}_j\right)_{1\leq j\leq 2d}$ are obtained by looking at order $(\lambda-q_j)^{-3}$ of $\mathcal{L}_{\boldsymbol{\alpha}}\left[\left[\check{L}_{\text{oper}}(\lambda)\right]_{2,1}\right]$  in \eqref{Compat} giving
\beq \forall\, j\in \llbracket 1,2d\rrbracket\,:\, \check{\rho}_j^{(\boldsymbol{\alpha})}=-p_j\check{\mu_j}^{(\boldsymbol{\alpha})}.\eeq

\section{Proof of Theorem \ref{HamTheorem}}\label{AppendixProofHamiltonian}
\subsection{Preliminary results}
We start with the following lemma:

\begin{lemma}\label{PropsumC} For all $j\in \llbracket 1, 2d\rrbracket$:
\beq
\sum_{k=0}^{2d-2}\sum_{i=1}^{2d}\check{H}_{\infty,k}q_i^k\partial_{q_j} \check{\mu}^{\boldsymbol{(\alpha)}}_i+\sum_{i=1}^{2d}\frac{\check{H}_{0,1}}{q_i}\partial_{q_j}\check{\mu}^{\boldsymbol{(\alpha)}}_i=-\check{\mu}^{\boldsymbol{(\alpha)}}_j\left(\sum_{k=0}^{2d-2}k\check{H}_{\infty,k} q_j^{k-1}-\frac{\check{H}_{0,1}}{q_j}\right).
\eeq
\end{lemma}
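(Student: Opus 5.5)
The identity is a consequence of the linear system in \autoref{PropA12Form} alone, so the plan is to differentiate that system with respect to $q_j$. Write $f_k(x)=x^k$ for $k\in\llbracket 0,2d-2\rrbracket$ and $f_{-1}(x)=x^{-1}$. The rows of $\mathbf{V}(\mathbf{q})\check{\boldsymbol{\mu}}^{(\boldsymbol{\alpha})}=\boldsymbol{\nu}^{(\boldsymbol{\alpha})}$ then read
\beq \sum_{i=1}^{2d}\check{\mu}^{(\boldsymbol{\alpha})}_i f_k(q_i)=\nu^{(\boldsymbol{\alpha})}_{\infty,k+1}\,\,,\,\, k\in\llbracket -1,2d-2\rrbracket. \eeq
The key observation is that the right-hand sides $\nu^{(\boldsymbol{\alpha})}_{\infty,k}$ are determined by \eqref{RelationNuAlphaInfty} from the times and $\boldsymbol{\alpha}$ only. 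They are therefore independent of $\mathbf{q}$.

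Differentiating each row with respect to $q_j$ gives, for every $k\in\llbracket -1,2d-2\rrbracket$,
\beq \sum_{i=1}^{2d} f_k(q_i)\,\partial_{q_j}\check{\mu}^{(\boldsymbol{\alpha})}_i=-\check{\mu}^{(\boldsymbol{\alpha})}_j f_k'(q_j). \eeq
Only the term $i=j$ contributes to the derivative of $f_k(q_i)$, since $\partial_{q_j} f_k(q_i)=\delta_{ij}f_k'(q_j)$.

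Next I multiply the row $k\ge 0$ by $\check{H}_{\infty,k}$, the row $k=-1$ by $\check{H}_{0,1}$, and sum over $k$. The left-hand side is exactly the left-hand side of the lemma. The right-hand side becomes
\beq -\check{\mu}^{(\boldsymbol{\alpha})}_j\Big(\sum_{k=0}^{2d-2}k\check{H}_{\infty,k}q_j^{k-1}-\frac{\check{H}_{0,1}}{q_j^2}\Big). \eeq
Here the $H$'s are treated as coefficients and are not differentiated; this matches how the lemma will be used in the proof of \autoref{HamTheorem}.

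The only point needing care is this last term. Since $\partial_x(x^{-1})=-x^{-2}$, the natural result has $\check{H}_{0,1}/q_j^2$. The statement as printed shows $\check{H}_{0,1}/q_j$, which I read as a typographical slip. The $q_j^{-2}$ form is the one consistent with the term $-\check{H}_{0,1}/q_j^2$ appearing in \eqref{Lpj}, which is where the lemma is applied. Apart from that, the argument is a direct computation.
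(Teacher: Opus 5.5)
Your proposal is correct and follows the paper's proof: differentiate the rows of $\mathbf{V}(\mathbf{q})\check{\boldsymbol{\mu}}^{(\boldsymbol{\alpha})}=\boldsymbol{\nu}^{(\boldsymbol{\alpha})}$ with respect to $q_j$, using that the $\nu^{(\boldsymbol{\alpha})}_{\infty,k}$ do not depend on $\mathbf{q}$, and then contract with the coefficients $\check{H}_{\infty,k}$ and $\check{H}_{0,1}$. Your reading of the last term is also right: the paper's own proof ends with $\check{H}_{0,1}/q_j^2$, which is the form needed in \eqref{Lpj}, so the $q_j$ in the printed statement is a typo.
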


\begin{proof}The proof follows from the expression relating $\left(\nu^{(\boldsymbol{\alpha})}_{\infty,k}\right)_{0\leq k\leq 2d-1}$ and $\left(\check{\mu}_j^{(\boldsymbol{\alpha})}\right)_{1\leq j\leq 2d}$ given by \eqref{RelationNuMu} and \eqref{Eqbis}. Taking the derivative with respect to $q_j$ and using the fact that the $\left(\nu^{(\boldsymbol{\alpha})}_{\infty,k}\right)_{0\leq k\leq 2d-1}$ are independent of $q_j$ gives:
\beq  \forall\, k\in \llbracket 0, 2d-2\rrbracket\,:\, \sum_{i=1}^{2d} (\partial_{q_j}\check{\mu}^{(\boldsymbol{\alpha})}_i) q_i^{k}=
-k \check{\mu}^{(\boldsymbol{\alpha})}_j q_j^{k-1}\,\,,\,\,
\sum_{i=1}^{2d}\frac{\partial_{q_j}\check{\mu}^{(\boldsymbol{\alpha})}_i}{q_i}=\frac{\check{\mu}^{(\boldsymbol{\alpha})}_j}{q_j^2}.
\eeq
Thus
\beq \sum_{k=0}^{2d-2}\sum_{i=1}^{2d}\check{H}_{\infty,k}q_i^k\partial_{q_j} \check{\mu}^{\boldsymbol{(\alpha)}}_i+\sum_{i=1}^{2d} \frac{\check{H}_{0,1}}{q_i}\partial_{q_j}\check{\mu}^{\boldsymbol{(\alpha)}}_i=-\check{\mu}^{(\boldsymbol{\alpha})}_j \left(\sum_{k=0}^{2d-2} k\check{H}_{\infty,k}q_j^{k-1}-  \frac{\check{H}_{0,1}}{q_j^2 }\right).
\eeq
so that the lemma is proved.
\end{proof}

We may now provide an alternative expression for $\mathcal{L}_{\boldsymbol{\alpha}}[p_j]$:

\begin{proposition}\label{PropLpjbis} Let $j\in \llbracket 1,2d\rrbracket$, we have an alternative expression for $\mathcal{L}_{\boldsymbol{\alpha}}[p_j]$:
\bea \label{Lpjbis} \mathcal{L}_{\boldsymbol{\alpha}}[p_j]&=& \sum_{1\leq i\neq j\leq 2d}\frac{(\check{\mu}^{(\boldsymbol{\alpha})}_i+\check{\mu}^{(\boldsymbol{\alpha})}_j)(p_i-p_j)}{(q_j-q_i)^2} +\frac{1}{2}\displaystyle{\sum_{\substack{(r,s)\in \llbracket 1,2d\rrbracket^2 \\ r\neq s }}} \frac{(p_s-p_r)(\partial_{q_j}\check{\mu}^{\boldsymbol{(\alpha)}}_r+\partial_{q_j}\check{\mu}^{\boldsymbol{(\alpha)}}_s)}{q_s-q_r}\cr
&&-\check{\mu}^{(\boldsymbol{\alpha})}_j\check{P}_2'(q_j)-\sum_{r=1}^{2d} (\partial_{q_j} \check{\mu}^{(\boldsymbol{\alpha})}_r)\left(\check{P}_2(q_r)+ p_r^2\right).
\eea
\end{proposition}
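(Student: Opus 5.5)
We start from the expression \eqref{Lpj} for $\mathcal{L}_{\boldsymbol{\alpha}}[p_j]$, which already contains the first sum and the term $-\check{\mu}^{(\boldsymbol{\alpha})}_j\check{P}_2'(q_j)$. It therefore suffices to show that the remaining piece,
\beq
\check{\mu}^{(\boldsymbol{\alpha})}_j\left(\sum_{k=1}^{2d-2}k\check{H}_{\infty,k}q_j^{k-1}-\frac{\check{H}_{0,1}}{q_j^2}\right),
\eeq
equals
\beq
\frac{1}{2}\sum_{r\neq s}\frac{(p_s-p_r)(\partial_{q_j}\check{\mu}^{(\boldsymbol{\alpha})}_r+\partial_{q_j}\check{\mu}^{(\boldsymbol{\alpha})}_s)}{q_s-q_r}-\sum_{r=1}^{2d}(\partial_{q_j}\check{\mu}^{(\boldsymbol{\alpha})}_r)\left(\check{P}_2(q_r)+p_r^2\right).
\eeq

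The main tool is \autoref{PropsumC}. It gives directly that this piece equals
\beq
-\sum_{i=1}^{2d}\partial_{q_j}\check{\mu}^{(\boldsymbol{\alpha})}_i\left(\sum_{k=0}^{2d-2}\check{H}_{\infty,k}q_i^k+\frac{\check{H}_{0,1}}{q_i}\right).
\eeq
The formula written in \autoref{PropsumC} has $\check{H}_{0,1}/q_j$ inside the bracket. The intermediate display in its proof, however, has the correct $\check{H}_{0,1}/q_j^2$, which matches \eqref{Lpj}. We will use that form.

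Next we substitute, for each $i$, the relation \eqref{DefCi} (equivalently \autoref{PropDefCi2}):
\beq
\sum_k\check{H}_{\infty,k}q_i^k+\frac{\check{H}_{0,1}}{q_i}=p_i^2+\check{P}_2(q_i)+\sum_{m\neq i}\frac{p_m-p_i}{q_i-q_m}.
\eeq
The $p_i^2+\check{P}_2(q_i)$ part produces exactly the term $-\sum_r(\partial_{q_j}\check{\mu}^{(\boldsymbol{\alpha})}_r)(\check{P}_2(q_r)+p_r^2)$. What remains is
\beq
-\sum_i\sum_{m\neq i}\partial_{q_j}\check{\mu}^{(\boldsymbol{\alpha})}_i\,\frac{p_m-p_i}{q_i-q_m}.
\eeq
We relabel this as a double sum over ordered pairs $(r,s)$ with $r\neq s$ and symmetrize by averaging it with its image under $r\leftrightarrow s$. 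The summand $\frac{p_s-p_r}{q_s-q_r}$ is symmetric under this swap, so the average carries the factor $\frac{1}{2}(\partial_{q_j}\check{\mu}^{(\boldsymbol{\alpha})}_r+\partial_{q_j}\check{\mu}^{(\boldsymbol{\alpha})}_s)$. This yields the symmetric double sum in the statement.

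The only delicate step is the bookkeeping of signs and indices. We have to check that $-\frac{p_m-p_i}{q_i-q_m}=\frac{p_m-p_i}{q_m-q_i}$ matches $\frac{p_s-p_r}{q_s-q_r}$ under the relabelling, and that the $1/q_j$ versus $1/q_j^2$ issue in \autoref{PropsumC} is resolved consistently with \eqref{Lpj}. Once these are confirmed, the identity is purely algebraic.
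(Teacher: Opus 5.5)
Your proposal is correct and follows the paper's proof essentially verbatim: you rewrite the $\check{H}$-terms of \eqref{Lpj} via \autoref{PropsumC}, substitute \eqref{DefCi}, and symmetrize the resulting double sum over ordered pairs $r\neq s$. You are also right that the statement of \autoref{PropsumC} should read $\check{H}_{0,1}/q_j^2$ rather than $\check{H}_{0,1}/q_j$, as the identity $\sum_i \partial_{q_j}\check{\mu}^{(\boldsymbol{\alpha})}_i/q_i=\check{\mu}^{(\boldsymbol{\alpha})}_j/q_j^2$ in its proof shows.
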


\begin{proof}Using Lemma \ref{PropsumC} we get that the expression \eqref{Lpj} for $\mathcal{L}[p_j]$ becomes:
\beq \mathcal{L}_{\boldsymbol{\alpha}}[p_j]= \sum_{1\leq i\neq j\leq 2d}\frac{(\check{\mu}^{(\boldsymbol{\alpha})}_i+\check{\mu}^{(\boldsymbol{\alpha})}_j)(p_i-p_j)}{(q_j-q_i)^2}-\check{\mu}^{(\boldsymbol{\alpha})}_j\check{P}_2'(q_j)-\sum_{i=1}^{2d} (\partial_{q_j}\check{\mu}^{\boldsymbol{(\alpha)}}_i)\left(\sum_{k=0}^{2d-2}\check{H}_{\infty,k}q_i^k +\sum_{i=1}^{2d} \frac{\check{H}_{0,1}}{q_i}\right).
\eeq
We now use \eqref{DefCi} to get
\bea \mathcal{L}_{\boldsymbol{\alpha}}[p_j]&=& \sum_{1\leq i\neq j\leq 2d}\frac{(\check{\mu}^{(\boldsymbol{\alpha})}_i+\check{\mu}^{(\boldsymbol{\alpha})}_j)(p_i-p_j)}{(q_j-q_i)^2} -\check{\mu}^{(\boldsymbol{\alpha})}_j\check{P}_2'(q_j) \cr&&
-\sum_{i=1}^{2d} (\partial_{q_j}\check{\mu}^{\boldsymbol{(\alpha)}}_i)\Big[p_i^2+\check{P}_2(q_i)+ \sum_{1\leq r\neq i\leq 2d}\frac{p_r-p_i}{q_i-q_r}\Big]\cr
&=& \sum_{1\leq i\neq j\leq 2d}\frac{(\check{\mu}^{(\boldsymbol{\alpha})}_i+\check{\mu}^{(\boldsymbol{\alpha})}_j)(p_i-p_j)}{(q_j-q_i)^2} -\check{\mu}^{(\boldsymbol{\alpha})}_j\check{P}_2'(q_j)\cr&&
-\sum_{i=1}^{2d} (\partial_{q_j}\check{\mu}^{\boldsymbol{(\alpha)}}_i)\left(p_i^2+\check{P}_2(q_i)\right)+\sum_{i=1}^{2d} (\partial_{q_j}\check{\mu}^{\boldsymbol{(\alpha)}}_i)\sum_{1\leq r\neq i\leq 2d}\frac{p_r-p_i}{q_r-q_i}.
\eea
The last sums may be split into a symmetric and anti-symmetric case: $\partial_{q_j}\check{\mu}^{\boldsymbol{(\alpha)}}_i= \frac{1}{2}(\partial_{q_j}\check{\mu}^{\boldsymbol{(\alpha)}}_i-\partial_{q_j}\check{\mu}^{\boldsymbol{(\alpha)}}_i)+ \frac{1}{2}(\partial_{q_j}\check{\mu}^{\boldsymbol{(\alpha)}}_i+\partial_{q_j}\check{\mu}^{\boldsymbol{(\alpha)}}_i)$. The term involving $\partial_{q_j}\check{\mu}^{\boldsymbol{(\alpha)}}_i-\partial_{q_j}\check{\mu}^{\boldsymbol{(\alpha)}}_i$ is trivially zero because the sum is anti-symmetric so that we end up with 
\bea \mathcal{L}_{\boldsymbol{\alpha}}[p_j]&=& \sum_{1\leq i\neq j\leq 2d}\frac{(\check{\mu}^{(\boldsymbol{\alpha})}_i+\check{\mu}^{(\boldsymbol{\alpha})}_j)(p_i-p_j)}{(q_j-q_i)^2} -\check{\mu}^{(\boldsymbol{\alpha})}_j\check{P}_2'(q_j)-\sum_{i=1}^{2d} (\partial_{q_j}\check{\mu}^{\boldsymbol{(\alpha)}}_i)\left(p_i^2+\check{P}_2(q_i)\right)\cr
&&+\frac{1}{2}\sum_{i=1}^{2d}\sum_{1\leq r\neq i\leq 2d} \frac{(p_r-p_i)(\partial_{q_j}\check{\mu}^{\boldsymbol{(\alpha)}}_i+\partial_{q_j}\check{\mu}^{\boldsymbol{(\alpha)}}_r)}{q_r-q_i},
\eea
ending the proof of \autoref{PropLpjbis}.
\end{proof}

\subsection{Proof of \autoref{HamTheorem}}

We may now proceed to the proof of Theorem \ref{HamTheorem}. We recall that the Hamiltonian is given by:

\beq\label{HamComputation}\text{Ham}^{(\boldsymbol{\alpha})}(\mathbf{q},\mathbf{p})=-\frac{1}{2}\displaystyle{\sum_{\substack{(i,j)\in \llbracket 1,2d\rrbracket^2 \\ i\neq j }}} \frac{(\check{\mu}^{(\boldsymbol{\alpha})}_i+\check{\mu}^{(\boldsymbol{\alpha})}_j)(p_i-p_j)}{q_i-q_j} - \nu^{(\boldsymbol{\alpha})}_{\infty,0}\sum_{j=1}^{2d}  p_j +\sum_{j=1}^{2d}\check{\mu}^{(\boldsymbol{\alpha})}_j\left(p_j^2+\check{P}_2(q_j)\right).
\eeq

It is a straightforward computation from \eqref{HamComputation} and from the fact that the $\left(\nu^{(\boldsymbol{\alpha})}_{\infty,k}\right)_{0\leq k\leq 2d-1}$ are independent of $q_j$ to get that
\bea -\frac{\partial \text{Ham}^{(\boldsymbol{\alpha})}(\mathbf{q},\mathbf{p})}{\partial q_j}&=&\displaystyle{\sum_{\substack{i\in \llbracket 1,2d\rrbracket \\ i\neq j }}} \frac{(\check{\mu}^{(\boldsymbol{\alpha})}_i+\check{\mu}^{(\boldsymbol{\alpha})}_j)(p_i-p_j)}{(q_i-q_j)^2}+\frac{1}{2}\displaystyle{\sum_{\substack{(r,s)\in \llbracket 1,2d\rrbracket^2 \\ r\neq s }}} \frac{(\partial_{q_j}\check{\mu}^{(\boldsymbol{\alpha})}_r+\partial_{q_j}\check{\mu}^{(\boldsymbol{\alpha})}_s)(p_r-p_s)}{q_r-q_s}\cr
&&-\sum_{i=1}^{2d}\partial_{q_j}(\check{\mu}^{(\boldsymbol{\alpha})}_i)\left(p_i^2+\check{P}_2(q_i)\right)-\check{\mu}^{(\boldsymbol{\alpha})}_j\check{P}_2'(q_j)\cr
&\overset{Prop. \ref{PropLpjbis}}{=}&\mathcal{L}_{\boldsymbol{\alpha}}[p_j].
\eea
Similarly a straightforward computation using the fact that $\left(\nu^{(\boldsymbol{\alpha})}_{\infty,k}\right)_{0\leq k\leq2d-1}$ are independent of $p_j$ gives:
\beq \frac{\partial \text{Ham}^{(\boldsymbol{\alpha})}(\mathbf{q},\mathbf{p})}{\partial p_j}=-\displaystyle{\sum_{\substack{i\in \llbracket 1,2d\rrbracket \\ i\neq j }}} \frac{\check{\mu}^{(\boldsymbol{\alpha})}_i+\check{\mu}^{(\boldsymbol{\alpha})}_j}{q_j-q_i} - \nu^{(\boldsymbol{\alpha})}_{\infty,0}  +2p_j\check{\mu}^{(\boldsymbol{\alpha})}_j,
\eeq
which is exactly $\mathcal{L}_{\boldsymbol{\alpha}}[q_j]$ given by \eqref{Lqj}.

\medskip

The last step is to verify that from \autoref{PropA12Form} and \autoref{PropDefCi2}:
\small{\bea &&\sum_{j=1}^g \check{\mu}_j^{\boldsymbol{(\alpha)}}\left(p_j^2+\check{P}_2(q_j)+ \underset{1\leq i\neq j\leq 2d}{\sum}\frac{p_i-p_j}{q_j-q_i}\right) =\begin{pmatrix}\check{\mu}_1^{\boldsymbol{(\alpha)}},\dots,\check{\mu}_{2d} ^{\boldsymbol{(\alpha)}}\end{pmatrix} \begin{pmatrix} p_1^2+\check{P}_2(q_1)+ \underset{1\leq i\neq 1\leq 2d}{\sum}\frac{p_i-p_1}{q_1-q_i}\\
p_2^2+\check{P}_2(q_2)+ \underset{1\leq i\neq 2\leq 2d}{\sum}\frac{p_i-p_2}{q_2-q_i}\\
\vdots\\
p_{2d}^2+\check{P}_2(q_{2d})+ \underset{1\leq i\neq 2d\leq 2d}{\sum}\frac{p_i-p_{2d}}{q_{2d}-q_i}\end{pmatrix}\cr
&&=\sum_{k=0}^{2d-2} \nu_{\infty,k+1}^{\boldsymbol{(\alpha)}}\check{H}_{\infty,k}+\nu_{\infty,0}^{\boldsymbol{(\alpha)}} \check{H}_{0,1},
\eea}
\normalsize{so} that \eqref{HamComputation} becomes
\beq\text{Ham}^{(\boldsymbol{\alpha})}(\mathbf{q},\mathbf{p})=\sum_{k=0}^{2d-2} \nu_{\infty,k+1}^{\boldsymbol{(\alpha)}}\check{H}_{\infty,k}+\nu_{\infty,0}^{\boldsymbol{(\alpha)}}\left(\check{H}_{0,1}-\sum_{j=1}^{2d} p_j\right).
\eeq

\section{Proof of \autoref{TheoLaxMatricesQinfty}}\label{AppendixLaxExpression}
The proof of $\check{L}_{1,2}(\lambda)$ and $\check{L}_{1,1}(\lambda)$ is standard. Since $\det \check{L}(\lambda)=\det \td{L}(\lambda)$, the entry $\check{L}_{2,1}(\lambda)$ is constrained by the fact that 
\bea \label{Impot}\det \check{L}(\lambda)=-\check{L}_{1,1}(\lambda)^2-\check{L}_{1,2}(\lambda)\check{L}_{2,1}(\lambda)&\overset{\lambda\to \infty}{=}&-\underset{j= 2 d-1}{\overset{4d}{\sum}}\left( \underset{m=0}{\overset{4d-j}{\sum}} t_{\infty,2 d +1-m}t_{\infty,j+m-2d+1}\right) \lambda^{j}+ O(\lambda^{2d-2}),\cr
&\overset{\lambda\to 0}{=}& -\frac{(t_{0,0})^2}{\lambda^2}+O(\lambda),
\eea
giving the expression for $\check{L}_{2,1}(\lambda)$ up to the term in $\lambda^{-1}$. Let us compute this term that we shall denote $\gamma$ in this proof. By definition of the determinant, we have:
\beq \check{L}_{2,1}(\lambda)=\frac{-\check{L}_{1,1}(\lambda)^2-\det \check{L}(\lambda)}{\check{L}_{1,2}(\lambda)}\,\Rightarrow \, \gamma= \Res_{\lambda\to 0}\frac{-\check{L}_{1,1}(\lambda)^2-\det \check{L}(\lambda)}{\check{L}_{1,2}(\lambda)}=\Res_{\lambda\to 0}\frac{-\check{L}_{1,1}(\lambda)^2+\frac{(t_{0,0})^2}{\lambda^2}+O(\lambda^{-1})}{\check{L}_{1,2}(\lambda)}. \eeq
Unfortunately, the numerator is not precise enough to derive the expression of the residue. On the contrary, we can use the same idea at infinity:
\bea \gamma&=& -\Res_{\lambda\to \infty} \frac{-\check{L}_{1,1}(\lambda)^2-\det \check{L}(\lambda)}{\check{L}_{1,2}(\lambda)}\cr
&\overset{\eqref{Impot}}{=}&-\Res_{\lambda\to \infty} \frac{-\check{L}_{1,1}(\lambda)^2+\underset{j= 2 d-1}{\overset{4d}{\sum}}\left( \underset{m=0}{\overset{4d-j}{\sum}} t_{\infty,2 d +1-m}t_{\infty,j+m-2d+1}\right) \lambda^{j}+ O(\lambda^{2d-2})}{\check{L}_{1,2}(\lambda) }\cr
&=&-\Res_{\lambda\to \infty} \left(\frac{-\check{L}_{1,1}(\lambda)^2+\underset{j= 2 d-1}{\overset{4d}{\sum}}\left( \underset{m=0}{\overset{4d-j}{\sum}} t_{\infty,2 d +1-m}t_{\infty,j+m-2d+1}\right) \lambda^{j}}{\check{L}_{1,2}(\lambda) } + O(\lambda^{-2})\right)\cr
&=&-\Res_{\lambda\to \infty} \frac{-\check{L}_{1,1}(\lambda)^2+\underset{j= 2 d-1}{\overset{4d}{\sum}}\left( \underset{m=0}{\overset{4d-j}{\sum}} t_{\infty,2 d +1-m}t_{\infty,j+m-2d+1}\right) \lambda^{j}}{\check{L}_{1,2}(\lambda) } ,
\eea
because we have $\check{L}_{1,2}(\lambda)=O(\lambda^{2d})$.

Regarding the auxiliary matrix, the gauge transformation is 
\beq \check{\Psi}_{\text{oper}}(\lambda)= \begin{pmatrix}1&0\\ \check{L}_{1,1}(\lambda)&\check{L}_{1,2}(\lambda) \end{pmatrix} \check{\Psi}(\lambda)\coloneqq G(\lambda) \check{\Psi}(\lambda),\eeq
so that
\beq \check{A}_{\boldsymbol{\alpha}}(\lambda)= G(\lambda)^{-1} \check{A}_{\text{oper}, \boldsymbol{\alpha}}(\lambda) G(\lambda)-  G(\lambda)^{-1}\mathcal{L}_{\boldsymbol{\alpha}}\left[ G(\lambda)\right], \eeq
and in particular
\bea \label{Super}\left[\check{A}_{\boldsymbol{\alpha}}(\lambda)\right]_{1,2}&=&\check{L}_{1,2}(\lambda) \left[\check{A}_{\text{oper}, \boldsymbol{\alpha}}(\lambda)\right]_{1,2},\cr
\left[\check{A}_{\boldsymbol{\alpha}}(\lambda)\right]_{1,1}&=&\left[\check{A}_{\text{oper}, \boldsymbol{\alpha}}(\lambda)\right]_{1,1}+\check{L}_{1,1}(\lambda) \left[\check{A}_{\text{oper}, \boldsymbol{\alpha}}(\lambda)\right]_{1,2},
\eea
so that the local expansions given by \autoref{PropAsymptoticExpansionA12} gives the expression of $[\check{A}_{\boldsymbol{\alpha}}(\lambda)]_{1,1}$ and $[\check{A}_{\boldsymbol{\alpha}}(\lambda)]_{1,2}$. For  $[\check{A}_{\boldsymbol{\alpha}}(\lambda)]_{2,1}$, we use that
\bea \label{tdA21} \left[\check{A}_{\boldsymbol{\alpha}}(\lambda)\right]_{2,1}&=&-\mathcal{L}_{\boldsymbol{\alpha}}\left[\frac{\check{L}_{1,1}(\lambda)}{\check{L}_{1,2}(\lambda)}\right]+\frac{\partial_\lambda\left[\check{A}_{\text{oper}, \boldsymbol{\alpha}}(\lambda)\right]_{1,1}}{\check{L}_{1,2}(\lambda)} +\left[\check{A}_{\text{oper}, \boldsymbol{\alpha}}(\lambda)\right]_{1,2}\frac{\left[\check{L}_{\text{oper}, \boldsymbol{\alpha}}(\lambda)\right]_{2,1}}{\check{L}_{1,2}(\lambda)}\cr
&&+\frac{\check{L}_{1,1}(\lambda)}{\check{L}_{1,2}(\lambda)}\left(\frac{\check{L}_{1,1}(\lambda)}{\check{L}_{1,2}(\lambda)}\left[\check{A}_{\boldsymbol{\alpha}}(\lambda)\right]_{1,2}-2\left[\check{A}_{\boldsymbol{\alpha}}(\lambda)\right]_{1,1}\right).
\eea
We have that
\bea -\mathcal{L}_{\boldsymbol{\alpha}}\left[\frac{\check{L}_{1,1}(\lambda)}{\check{L}_{1,2}(\lambda)}\right]&\overset{\lambda\to 0}{=}&\mathcal{L}_{\boldsymbol{\alpha}}\left[\frac{t_{0,0}}{t_{\infty,2d+1} Q_{\infty,0}\lambda} +O(1)\right]=-\frac{t_{0,0}\mathcal{L}_{\boldsymbol{\alpha}}[Q_{\infty,0}]}{t_{\infty,2d+1}(Q_{\infty,0})^2} \lambda^{-1}+O(1),\cr
&\overset{\lambda\to \infty}{=}&O(\lambda^{-2}).
\eea
The quantity $\frac{\partial_\lambda\left[\check{A}_{\text{oper}, \boldsymbol{\alpha}}(\lambda)\right]_{1,1}}{\check{L}_{1,2}(\lambda)} $ is regular at infinity and zero so it does not contribute. Let us now discuss the term $\left[\check{A}_{\text{oper}, \boldsymbol{\alpha}}(\lambda)\right]_{1,2}\frac{\left[\check{L}_{\text{oper}, \boldsymbol{\alpha}}(\lambda)\right]_{2,1}}{\check{L}_{1,2}(\lambda)}$. At $\lambda=0$, it is regular because $\left[\check{A}_{\text{oper}, \boldsymbol{\alpha}}(\lambda)\right]_{1,2}=O(\lambda)$ and $\left[\check{L}_{\text{oper}, \boldsymbol{\alpha}}(\lambda)\right]_{2,1} =O(\lambda^{-1})$. At infinity, we have:
\bea
&&\left[\check{A}_{\text{oper}, \boldsymbol{\alpha}}(\lambda)\right]_{1,2}\frac{\left[\check{L}_{\text{oper}, \boldsymbol{\alpha}}(\lambda)\right]_{2,1}}{\check{L}_{1,2}(\lambda)}\overset{\lambda\to \infty}{=}\cr&&
\left(\frac{-\underset{j= 2 d-1}{\overset{4d}{\sum}}\left(\underset{m=0}{\overset{4d-j}{\sum}} t_{\infty,2 d +1-m}t_{\infty,j+m-2d+1}\right) \lambda^{j} +O(\lambda^{2d-2})}{-t_{\infty,2d+1}\left(\underset{k=0}{\overset{2d-1}{\sum}} Q_{\infty,k}\lambda^k +\lambda^{2d}\right)} \right)\left(\sum_{i=0}^{2d} \frac{\nu^{(\boldsymbol{\alpha})}_{\infty,i}}{\lambda^i} +O\left(\lambda^{-2d-1}\right) \right)\cr
&\overset{\lambda\to \infty}{=}&
\left(\frac{\underset{j= 2 d}{\overset{4d}{\sum}}\left(\underset{m=0}{\overset{4d-j}{\sum}} t_{\infty,2 d +1-m}t_{\infty,j+m-2d+1}\right) \lambda^{j} }{t_{\infty,2d+1}\left(\underset{k=0}{\overset{2d-1}{\sum}} Q_{\infty,k}\lambda^k +\lambda^{2d}\right)} +O(\lambda^{-1})\right)\left(\sum_{i=0}^{2d} \frac{\nu^{(\boldsymbol{\alpha})}_{\infty,i}}{\lambda^i} +O\left(\lambda^{-2d-1}\right) \right)\cr
&&\overset{\lambda\to \infty}{=}\frac{\underset{j= 2 d}{\overset{4d}{\sum}}\underset{m=0}{\overset{4d-j}{\sum}}\underset{i=0}{\overset{2d}{\sum}} t_{\infty,2 d +1-m}t_{\infty,j+m-2d+1}\nu^{(\boldsymbol{\alpha})}_{\infty,i} \lambda^{j-i}}{t_{\infty,2d+1}\left(\underset{k=0}{\overset{2d-1}{\sum}} Q_{\infty,k}\lambda^k +\lambda^{2d}\right)} +O(\lambda^{-1})\cr
&&\overset{k=j-i}{=}\frac{\underset{k=2d}{\overset{4d}{\sum}}\underset{j=k}{\overset{4d}{\sum}}\underset{m=0}{\overset{4d-j}{\sum}} t_{\infty,2 d +1-m}t_{\infty,j+m-2d+1}\nu^{(\boldsymbol{\alpha})}_{\infty,j-k} \lambda^{k}
}{t_{\infty,2d+1}\left(\underset{k=0}{\overset{2d-1}{\sum}} Q_{\infty,k}\lambda^k +\lambda^{2d}\right)} +O(\lambda^{-1}),
\eea
with $\nu_{\infty,2d}^{(\boldsymbol{\alpha})}\coloneqq -\underset{\lambda\to \infty}{\Res}\lambda^{2d+1}\frac{\left[\check{A}_{\boldsymbol{\alpha}}(\lambda)\right]_{1,2}}{\check{L}_{1,2}(\lambda)}$ obtained as the coefficient $\lambda^{-2d}$ of $\left[\check{A}_{\text{oper}, \boldsymbol{\alpha}}(\lambda)\right]_{1,2}=\frac{\left[\check{A}_{\boldsymbol{\alpha}}(\lambda)\right]_{1,2}}{\check{L}_{1,2}(\lambda)}$. Now let us consider $\left[\check{A}_{\text{oper}, \boldsymbol{\alpha}}(\lambda)\right]_{1,2}\frac{\left[\check{L}_{\text{oper}, \boldsymbol{\alpha}}(\lambda)\right]_{2,1}}{\check{L}_{1,2}(\lambda)}$ at $\lambda\to 0$. It has a simple pole since $\left[\check{L}_{\text{oper}, \boldsymbol{\alpha}}(\lambda)\right]_{2,1}=-\frac{t_{0,0}(1-t_{0,0})}{\lambda^2}+O(\lambda^{-1})$ and $\check{L}_{1,2}(\lambda)$ is regular and $\left[\check{A}_{\text{oper}, \boldsymbol{\alpha}}(\lambda)\right]_{1,2}$ has a simple zero from \autoref{PropAsymptoticExpansionA12}. In order to evaluate the simple pole, we use \eqref{Super} to get:
\bea \left[\check{A}_{\text{oper}, \boldsymbol{\alpha}}(\lambda)\right]_{1,2}\frac{\left[\check{L}_{\text{oper}, \boldsymbol{\alpha}}(\lambda)\right]_{2,1}}{\check{L}_{1,2}(\lambda)}&=&\left[\check{A}_{\boldsymbol{\alpha}}(\lambda)\right]_{1,2}\frac{\left[\check{L}_{\text{oper}, \boldsymbol{\alpha}}(\lambda)\right]_{2,1}}{\check{L}_{1,2}(\lambda)^2}\cr
&=&\frac{t_{0,0}(1-t_{0,0})\left(\nu_{\infty,2d-1}^{\boldsymbol{(\alpha)}} +\underset{k=1}{\overset{2d-1}{\sum}} \nu_{\infty,k-1}^{(\boldsymbol{\alpha})} Q_{\infty,k}\right) }{t_{\infty,2d+1}(Q_{\infty,0})^2\lambda} +O(1).\eea

\bibliographystyle{plain}
\bibliography{Biblio}

\end{document}